\newif\iflong
\newif\ifshort
\newtheorem{theorem}{Theorem}
\title{Maximizing Social Welfare in Score-Based Social Distance Games}
\author{%
	Robert Ganian$^1$
\and
	Thekla Hamm$^2$
\and
	Dušan Knop$^3$
\and\\
	Sanjukta Roy$^4$
\and
	Šimon Schierreich$^3$
\and
	Ondřej Suchý$^3$
	\affiliations
	$^1$TU Wien, Austria\\
	$^2$Eindhoven University of Technology, the Netherlands\\
	$^3$Czech Technical University in Prague, Czech Republic\\
	$^4$Penn State University, USA\\
        \emails	
        \{rganian,thekla.hamm\}@gmail.com, dusan.knop@fit.cvut.cz, sanjukta@psu.edu, \{schiesim,ondrej.suchy\}@fit.cvut.cz
}
\newcommand{\N}{\ensuremath{\mathbb{N}}}
\newcommand{\Z}{\ensuremath{\mathbb{Z}}}
\newcommand{\cdiam}{\ensuremath{{\delta}}}
\newcommand{\bigoh}{\ensuremath{\mathcal{O}}}
\newcommand{\Oh}[1]{\ensuremath{{\bigoh(#1)}}}
\newcommand{\cc}[1]{{\mbox{\textnormal{\textsf{#1}}}}\xspace}    
\newcommand{\FPT}{\cc{FPT}}
\newcommand{\XP}{\cc{XP}}
\newcommand{\NP}{{\cc{NP}}}
\newcommand{\NPh}{\NP-hard\xspace}
\newcommand{\YES}{\cc{Yes}}
\newcommand{\NO}{\cc{No}}
\newcommand{\Yes}{\YES}
\newcommand{\No}{\NO}
\newcommand{\YesI}{\YES-instance\xspace}
\crefname{theorem}{Theorem}{Theorems}
\crefname{observation}{Observation}{Observations}
\newtheorem{lemma}[theorem]{Lemma}
\crefname{lemma}{Lemma}{Lemmas}
\newtheorem{corollary}[theorem]{Corollary}
\crefname{corollary}{Corollary}{Corollaries}
\newtheorem{proposition}[theorem]{Proposition}
\crefname{proposition}{Proposition}{Propositions}
\crefname{conjecture}{Conjecture}{Conjectures}
\newtheorem{claim}{Claim}
\crefname{claim}{Claim}{Claims}
\newenvironment{claimproof}[1]{\par\noindent\underline{Proof:}\space#1}{\hfill $\blacksquare$\smallskip}
\newtheorem{notation}{Notation}
\theoremstyle{remark}
\crefname{example}{Example}{Examples}
\newcommand{\DGs}{\textsc{SDG}\xspace}
\newcommand{\DGNS}{\val-\textsc{SDG-Nash}\xspace}
\newcommand{\DGIR}{\val-\textsc{SDG-IR}\xspace}
\newcommand{\DGWF}{\val-\textsc{SDG-Welfare}\xspace}
\newcommand{\dist}{\ensuremath{\operatorname{dist}}} 
\newcommand{\val}{\ensuremath{\operatorname{s}}\xspace} \newcommand{\maxval}{\ensuremath{v}\xspace}\newcommand{\util}{\ensuremath{\operatorname{u}}\xspace} \newcommand{\SW}{\ensuremath{\operatorname{SW}}\xspace}     \newcommand{\topo}{\ensuremath{\mathsf{T}}\xspace} 
\newcommand{\tw}{\ensuremath{\operatorname{tw}}\xspace} \newcommand{\sz}{\ensuremath{\operatorname{sz}}\xspace} \newcommand{\vc}{\ensuremath{\operatorname{vc}}\xspace}  
\begin{document}

\maketitle

\begin{abstract}
Social distance games have been extensively studied as a coalition formation model where the utilities of agents in each coalition were captured using a utility function $u$ that took into account distances in a given social network. In this paper, we consider a non-normalized score-based definition of social distance games where the utility function $u^s$ depends on a generic scoring vector $s$, which may be customized to match the specifics of each individual application scenario. 

As our main technical contribution, we establish the tractability of computing a welfare-maximizing partitioning of the agents into coalitions on tree-like networks, for every score-based function $u^s$. We provide more efficient algorithms when dealing with specific choices of $u^s$ or simpler networks, and also extend all of these results to computing coalitions that are Nash stable or individually rational.
We view these results as a further strong indication of the usefulness of the proposed score-based utility function: even on very simple networks, the problem of computing a welfare-maximizing partitioning into coalitions remains open for the originally considered canonical function $u$.
\end{abstract}

\section{Introduction}

Coalition formation is a central research direction within the fields of algorithmic game theory and computational social choice. While there are many different scenarios where agents aggregate into coalitions, a pervasive property of such coalitions is that the participating agents exhibit \emph{homophily}, meaning that they prefer to be in coalitions with other agents which are close to them. It was this observation that motivated Br{\^{a}}nzei and Larson to introduce the notion of \emph{social distance games} (SDG) as a basic model capturing the homophilic behavior of agents in a social network~\cite{BranzeiL2011}.

Br{\^{a}}nzei and Larson's SDG model consisted of a graph $G=(V,E)$ representing the social network, with $V$ being the agents and $E$ representing direct relationships or connections between the agents. To capture the utility of an agent $v$ in a coalition $C\subseteq V$, the model considered a single function: $u(v,C)=\frac{1}{|C|}\cdot \sum\limits_{w\in C\setminus \{v\}}\frac{1}{d_C(v,w)}$ where $d_C(v,w)$ is the distance between $v$ and $w$ inside $C$.

Social distance games with the aforementioned utility function $u$ have been the focus of extensive study to date, with a number of research papers specifically targeting algorithmic and complexity-theoretic aspects of forming coalitions with maximum social welfare~\cite{BalliuFMO17,KaklamanisKP18,BalliuFMO19,BalliuFMO22}. 
Very recently, Flammini et al.~\cite{FlamminiKOV2020,FlamminiKOV2021} considered a generalization of $u$ via an adaptive real-valued scoring vector which weights the contributions to an agent's utility according to the distances of other agents in the coalition, and studied the price of anarchy and stability for non-negative scoring vectors. 
However, research to date has not revealed any polynomially tractable fragments for the problem of computing coalition structures with maximum social welfare (with or without stability-based restrictions on the behavior of individual agents).

\smallskip
\noindent \textbf{Contribution.}\quad
The undisputable appeal of having an adaptive scoring vector---as opposed to using a single canonical utility function $u$---lies in the fact that it allows us to capture many different scenarios with different dynamics of coalition formation. However, it would also be useful for such a model to be able to assign negative scores to agents at certain (larger) distances in a coalition.
For instance, guests at a gala event may be keen to accept the presence of friends-of-friends (i.e., agents at distance $2$) at a table, while friends-of-friends may be less welcome in private user groups on social networks, and the presence of complete strangers in some scenarios may even be socially unacceptable.

Here, we propose the study of social distance games with a family of highly generic non-normalized score-based utility functions.
Our aim here is twofold. First of all, these should allow us to better capture situations where agents at larger distances are unwelcome or even unacceptable for other agents. 
At the same time, we also want to obtain algorithms capable of computing welfare-maximizing coalition structures in such general settings, at least on well-structured networks.

Our model considers a graph $G$ accompanied with an integer-valued, fixed but adaptive \emph{scoring vector} $\val$ which captures how accepting agents are towards other agents based on their pairwise distance\footnote{Formal definitions are provided in the Preliminaries.}. The utility function $u^{\val}(v,C)$ for an agent $v$ in coalition $C$ is then simply defined as $u^{\val}(v,C)=\sum\limits_{w\in C\setminus \{v\}} \val(d_C(v,w))$; we explicitly remark that, unlike previous models, this is not normalized with respect to the coalition size. As one possible example, a scoring vector of $(1,0,-1)$ could be used in scenarios where agents are welcoming towards friends, indifferent to friends-of-friends, slightly unhappy about friends-of-friends-of-friends (i.e., agents at distance $3$), and unwilling to group up with agents who are at distance greater than $3$ in $G$.
A concrete example which also illustrates the differences to previous SDG models is provided in \Cref{fig:example}.

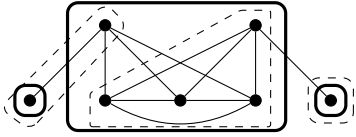
\begin{figure}
	\begin{center}
		\begin{tikzpicture}
			[
			every node/.style={draw, fill=black, shape=circle, inner sep=0pt, text width=1.5mm, align=center, label distance=1mm}
			]
			\node (x1) at  (-2,0) {};
			\node (a1) at  (-1,0) {};
			\node (a2) at  (0,0) {};
			\node (a3) at  (1,0) {};
			\node (y1) at  (2,0) {};
			
			\node (x) at (-1,1) {};
			\node (y) at (1,1) {};
			
			\foreach \i in {a1,a2,a3}
			{
				\draw (x) to (\i) to (y);
			}     
			\draw (x) to (x1);
			\draw (y) to (y1);
			\draw (a1) to (a2) to (a3) to[bend left] (a1);
			
			\draw[very thick,rounded corners] (-1.5,-0.4) rectangle (1.4,1.3); 
			\draw[very thick,rounded corners] (-2.2,-0.2) rectangle (-1.8,0.2); 
			\draw[very thick,rounded corners] (2.2,-0.2) rectangle (1.8,0.2); 
			
			\draw[dashed,rounded corners] (-2,-0.4) -- (-0.7,1) -- (-1.,1.3) -- (-2.4,-0.1) -- cycle;
			\draw[dashed,rounded corners=0.3mm] (-1.2,-0.35) -- (1.2,-0.35) -- (1.2,1.2) -- (0.8,1.2) -- (-1.2,0.1) -- cycle;
			\draw[dashed,rounded corners] (2.3,-0.3) rectangle (1.7,0.3); 
		\end{tikzpicture}
		\caption{A social network illustrating the difference of maximising social welfare in our model compared to previous SDG models. (1) In Br{\^{a}}nzei and Larson's SDG model, the welfare-maximum outcome is the grand coalition. (2) A welfare-maximum outcome in the normalized model of Flammini et al.\ with a scoring vector of $(1,0,0,0)$ is marked with dashed lines, while the same scoring vector in our non-normalized model produces the grand coalition. (3) A scoring vector of \(\val = (1,0,-1)\) in our model produces the welfare-maximizing outcome marked with bold lines, with a welfare of $18$. (4) A `less welcoming' scoring vector of \(\val = (1,-3)\) leads to the welfare maximizing dash-circled partition with a welfare of \(14\) (compared to only \(12\) for the bold-circled one).
			\label{fig:example}}
	\end{center}
\end{figure}

While non-normalized scoring functions have not previously been considered for social distance games, we view them a natural way of modeling agent utilities; in fact, similar ideas have been successfully used in models for a variety of other phenomena including, e.g., committee voting~\cite{ElkindI15}, resource allocation~\cite{BouveretL08,BouveretCM16} 
and Bayesian network structure learning~\cite{OrdyniakS13,GanianK21}. 
Crucially, it is not difficult to observe that many of the properties originally established by Br{\^{a}}nzei and Larson for SDGs also hold for our non-normalized score-based model with every choice of $s$, such as the small-world property~\cite{jackson2008social,BranzeiL2011} and the property that adding an agent with a close (distant) connection to a coalition positively (negatively) impacts the utilities of agents~\cite{BranzeiL2011}.
In addition, the proposed model can also directly capture the notion of \emph{enemy aversion} with symmetric preferences~\cite{OhtaBISY17,BarrotY19} by setting $s=(1)$.

Aside from the above, a notable benefit of the proposed model lies on the complexity-theoretic side of things. Indeed, a natural question that arises in the context of SDG is whether we can compute an outcome---a partitioning of the agents into coalitions---which maximizes the social welfare (defined as the sum of the utilities of all agents in the network). This question has been studied in several contexts, and depending on the setting one may also require the resulting coalitions to be stable under \emph{individual rationality} (meaning that agents will not remain in coalitions if they have negative utility) or \emph{Nash stability} (meaning that agents may leave to join a different coalition if it would improve their utility). But in spite of the significant advances in algorithmic aspects of other coalition formation problems in recent years~\cite{BoehmerE20,BoehmerE20b,ChenGH20,GanianHKSS22}, 
we lack any efficient algorithm capable of producing such a welfare-optimal partitioning when using the utility function $u$ even for the simplest types of networks.

To be more precise, when viewed through the refined lens of \emph{parameterized complexity}~\cite{DowneyF13,CyganFKLMPPS2015} that has recently become a go-to paradigm for such complexity-theoretic analysis, no tractable fragments of the problem are known. More precisely, the problem of computing a welfare-maximizing outcome under any of the previously considered models is not even known to admit an \XP\ algorithm when parameterized by the minimum size of a vertex cover in the social network $G$---implying a significant gap towards potential fixed-parameter tractability. This means that the complexity of welfare-maximization under previous models remains wide open even under the strongest non-trivializing restriction of the network.

As our main technical contribution, we show that non-normalized score-based utility functions do not suffer from this drawback and can in fact be computed efficiently under fairly mild restrictions on $G$. Indeed, as our first algorithmic result we obtain an \XP\ algorithm that computes a welfare-maximizing partitioning of the agents into coalitions parameterized by the treewidth of $G$, and we strengthen this algorithm to also handle additional restrictions on the coalitions in terms of individual rationality or Nash stability.
We remark that considering networks of small treewidth is motivated not only by the fundamental nature of this structural graph measure, but also by the fact that many real-world networks exhibit bounded treewidth~\cite{ManiuSJ19}.

In the next part of our investigation, we show that when dealing with simple scoring functions or bounded-degree networks, these results can be improved to fixed-parameter algorithms for welfare-maximization (including the cases where we require the coalitions to be individually rational or Nash stable). This is achieved by combining structural insights into the behavior of such coalitions with a different dynamic programming approach. Furthermore, we also use an entirely different technique based on quadratic programming to establish the fixed-parameter tractability of all 3 problems under consideration w.r.t.\ the minimum size of a vertex cover in $G$. Finally, we conclude with some interesting generalizations and special cases of our model and provide some preliminary results in these directions. 

\ifshort
\smallskip
\noindent{\itshape Statements where proofs or more details are provided in the
appendix are marked with $\star$.}
\fi

\section{Preliminaries}
We use $\N$ to denote the set of natural numbers, i.e., positive integers, and $\Z$ for the set of integers. For $i\in \N$, we let $[i]= \{1,\ldots,i\}$ and ${[i]_0 = [i] \cup \{0\}}$. 
\iflong
We assume basic familiarity with graph-theoretic terminology~\cite{Diestel17}.
\fi
\ifshort
We assume basic familiarity with graph-theoretic terminology~\cite{Diestel17} and the parameterized complexity paradigm~\cite{DowneyF13,CyganFKLMPPS2015}, notably the complexity classes \FPT\ and \XP.
\fi

\paragraph{Social Distance Games.}
A \emph{social distance game} (SDG) consists of a set $N = \{1,\ldots,n\}$ of \emph{agents}, a simple undirected graph $G=(N,E)$ over the set of agents called a \emph{social network}, and a non-increasing \emph{scoring vector} $s=(s_1,\dots,s_{\cdiam})$ where
\begin{itemize}
\item for each $a\in [\cdiam]$, $s_a\in \Z$ and 
\item for each $a\in [\cdiam-1]$, $s_{a+1}\leq s_a$.
\end{itemize}

In some cases, it will be useful to treat $s$ as a function from $\N$ rather than a vector; to this end, we set $s(a)=s_a$ for each $a\leq \cdiam$ and $s(a)=-\infty$ when $a>\cdiam$. 
The value ``$-\infty$'' here represents an inadmissible outcome, and formally we set $-\infty+z=-\infty$ and $-\infty<z$ for each $z\in \Z$. 
Furthermore, let $\maxval=\maxval(\val)=\val(1)=\max_{r=1}^{\infty} \val(r)$.

A \emph{coalition} is a subset $C\subseteq N$, and an outcome is a partitioning $\Pi=(C_1,\dots,C_\ell)$ of $N$ into coalitions; formally,
$\bigcup_{i=1}^\ell C_i = N$, every $C_i\in \Pi$ is a coalition, and all coalitions in $\Pi$ are pairwise disjoint. We use $\Pi_i$ to denote the coalition the agent $i\in N$ is part of in the outcome $\Pi$.
The \emph{utility} of an agent $i\in N$ for an outcome $\Pi$ is
\[
\util^{\val}(i,\Pi) = \sum_{j\in \Pi_i\setminus\{i\}} \val(\dist_{\Pi_i}(i,j)),
\]
where $\dist_{\Pi_i}(i,j)$ is the length of a shortest path between $i$ and $j$ in the graph $G[\Pi_i]$, i.e., the subgraph of $G$ induced on the agents of $\Pi_i$. We explicitly note that if $\Pi_i$ is a singleton coalition then $\util^{\val}(i,\Pi_i)=0$. Moreover, in line with previous work~\cite{BranzeiL2011} we set $\dist_{\Pi_i}(i,j):=-\infty$ if there is no $i$-$j$ path in $G[\Pi_i]$, meaning that $\util^{\val}(i,\Pi_i)=-\infty$ whenever $G[\Pi_i]$ is not connected.

For brevity, we drop the superscript from $u^{\val}$ whenever the scoring vector \val\ is clear from the context. To measure the satisfaction of the agents with a given outcome, we use the well-known notation of \emph{social welfare}, which is the total utility of all agents for an outcome $\Pi$, that is,
\[
\SW^{\val}(\Pi) = \sum_{i\in N} \util^{\val}(i,\Pi).
\]
Here, too, we drop the superscript specifying the scoring vector whenever it is clear from the context.

We assume that all our agents are selfish, behave strategically, and their aim is to maximize their utility. To do so, they can perform \emph{deviations} from the current outcome $\Pi$. We say that $\Pi$ admits an \emph{IR-deviation} if there is an agent $i\in N$ such that $\util(i,C) < 0$; in other words, agent $i$ prefers to be in a singleton coalition over its current coalition. If no agent admits an IR-deviation, the outcome is called \emph{individually rational} (IR). We say that $\Pi$ admits an \emph{NS-deviation} if there is an agent $i$ and a coalition $C\in \Pi\cup \{\emptyset\}$ such that $\util(i,C\cup\{i\}) > \util(i,\Pi_i)$. $\Pi$~is called \emph{Nash stable} (NS) if no agent admits an NS-deviation. 
We remark that other notions of stability exist in the literature~\cite[Chapter 15]{BrandtCELP16}, but Nash stability and individual rationality are the most basic notions used for stability based on individual choice~\cite{SungD07,Karakaya11}.

Having described all the components in our score-based SDG model, we are now ready to formalize the three classes of problems considered in this paper. We note that even though these are stated as decision problems for complexity-theoretic reasons, each of our algorithms for these problems can also output a suitable outcome as a witness. For an arbitrary fixed scoring vector $\val$, we define:

\begin{center}
	\begin{boxedminipage}{0.98 \columnwidth}
		\DGWF\\[5pt]
		\begin{tabular}{l p{0.78 \columnwidth}}
			Input: & A social network $G=(N,E)$, desired welfare $b \in \N$.\\
			Question: \hspace{-0.4cm} & Does the distance game given by $G$ and $\val$ admit an outcome with social welfare at least~$b$?
		\end{tabular}
	\end{boxedminipage}
\end{center}
\iflong
\begin{center}
	\begin{boxedminipage}{0.98 \columnwidth}
		\DGIR\\[5pt]
		\begin{tabular}{l p{0.78 \columnwidth}}
			Input: & A social network $G=(N,E)$, desired welfare $b \in \N$.\\
			Question: \hspace{-0.4cm} & Does the distance game given by $G$ and $\val$ admit an individually rational outcome with social welfare at least $b$?
		\end{tabular}
	\end{boxedminipage}
\end{center}

\begin{center}
	\begin{boxedminipage}{0.98 \columnwidth}
		\DGNS\\[5pt]
		\begin{tabular}{l p{0.78 \columnwidth}}
			Input: & A social network $G=(N,E)$, desired welfare $b \in \N$.\\
			Question: \hspace{-0.4cm} & Does the distance game given by $G$ and $\val$ admit a Nash stable outcome with social welfare at least $b$?
		\end{tabular}
	\end{boxedminipage}
\end{center}
\fi
\ifshort
\DGIR\ and \DGNS\ are then defined analogously, but with the additional condition that the outcome must be individually rational or Nash stable, respectively.
\fi

We remark that for each of the three problems, one may assume w.l.o.g.\ that $s(1)>0$; otherwise the trivial outcome consisting of $|N|$ singleton coalitions is both welfare-optimal and stable.
Moreover, w.l.o.g.\ we assume $G$ to be connected since an optimal outcome for a disconnected graph $G$ can be obtained as a union of optimal outcomes in each connected component of $G$.

The last remark we provide to the definition of our model is that it trivially also supports the 
well-known \emph{small world} property~\cite{jackson2008social} that has been extensively studied on social networks. 
In their original work on SDGs, Br{\^{a}}nzei and Larson showed that their model exhibits the small world property by establishing a diameter bound of $14$ in each coalition in a so-called \emph{core partition}~\cite{BranzeiL2011}.
Here, we observe that for each choice of $\val$, a welfare-maximizing coalition will always have diameter at most $\cdiam$.

\iflong
\paragraph{Parameterized Complexity.} The \emph{parameterized complexity} framework~\cite{DowneyF13,CyganFKLMPPS2015} provides the ideal tools for the fine-grained analysis of computational problems which are \NPh\ and hence intractable from the perspective of classical complexity theory. Within this framework, we analyze the running times of algorithms not only with respect to the input size $n$, but also with respect to a numerical parameter $k\in\N$ that describes a well-defined structural property of the instance; the central question is then whether the superpolynomial component of the running time can be confined by a function of this parameter alone. 

The most favorable complexity class in this respect is \FPT (short for ``fixed-parameter tractable'') and contains all problems solvable in $f(k)\cdot n^\Oh{1}$ time, where $f$ is a computable function. Algorithms with this running time are called \emph{fixed-parameter algorithms}. A less favorable, but still positive, outcome is an algorithm with running time of the form~$n^f(k)$; problems admitting algorithms with such running times belong to the class \XP.
\fi

\paragraph{Structural Parameters.} Let $G=(V,E)$ be a graph. A set $U\subseteq V$ is a \emph{vertex cover} if for every edge $e\in E$ it holds that $U\cap e \not= \emptyset$. The \emph{vertex cover number} of $G$, denoted $\vc(G)$, is the minimum size of a vertex cover of $G$.

A \emph{nice tree-decomposition} of $G$ is a pair $(\mathcal{T},\beta)$, where $\mathcal{T}$ is a tree rooted at a node $r\in V(\mathcal{T})$, $\beta\colon V(\mathcal{T})\to 2^{V}$ is a function assigning each node $x$ of $\mathcal{T}$ its \emph{bag}, and the following conditions hold:
\begin{itemize}
	\item for every edge $\{u,v\}\in E(G)$ there is a node $x\in V(\mathcal{T})$ such that $u,v\in\beta(x)$,
	\item for every vertex $v\in V$, the set of nodes $x$ with $v\in\beta(x)$ induces a connected subtree of $\mathcal{T}$,
	\item $|\beta(r)|=|\beta(x)| = 0$ for every \emph{leaf} $x\in V(\mathcal{T})$, and
	\item there are only tree kinds of internal nodes in $\mathcal{T}$:
		\begin{itemize}
			\item $x$ is an \emph{introduce node} if it has exactly one child $y$ such that $\beta(x) = \beta(y)\cup\{v\}$ for some ${v\notin\beta(y)}$,
			\item $x$ is a \emph{join node} if it has exactly two children $y$ and $z$ such that $\beta(x) = \beta(y) = \beta(z)$, or
			\item $x$ is a \emph{forget node} if it has exactly one child $y$ such that $\beta(x) = \beta(y)\setminus\{v\}$ for some $v\in\beta(y)$.
		\end{itemize}
\end{itemize}
The \emph{width} of a nice tree-decomposition $(\mathcal{T},\beta)$ is $\max_{x\in V(\mathcal{T})} |\beta(x)|-1$, and the treewidth $\tw(G)$ of a graph $G$ is the minimum width of a nice tree-decomposition of $G$. Given a nice tree-decomposition and a node~$x$, we denote by $G^x$ the subgraph induced by the set $V^x = \bigcup_{y\text{ is a descendant of }x}\beta(y)$, where we suppose that~$x$ is a descendant of itself. 
\ifshort
It is well-known that optimal nice tree-decompositions can be computed efficiently~\cite{Kloks94,Bodlaender96,Korhonen21}. ($\star$)
\fi
\iflong
It is well-known that computing a nice tree-decomposition of minimum width is fixed-parameter tractable when parameterized by the treewidth~\cite{Kloks94,Bodlaender96}, and even more efficient algorithms exist for obtaining near-optimal nice tree-decompositions~\cite{Korhonen21}.
\fi

\paragraph{Integer Quadratic Programming.}
\textsc{Integer Quadratic Programming} (IQP) over $d$ dimensions can be formalized as the task of computing
\begin{equation}\label{eq:generalIQP}\tag{IQP}
	\max \left\{ x^{T} Q x \mid A x \le b,\, x \ge 0 ,\, x \in \mathbb{Z}^d \right\} \,,	
\end{equation}
where $Q \in \mathbb{Z}^{d \times d}$, $A  \in \mathbb{Z}^{m \times d}$, $b \in \mathbb{Z}^{m}$.
That is, IQP asks for an integral vector $x \in \mathbb{Z}^d$ which maximizes the value of a quadratic form subject to satisfying a set of linear constraints.

\begin{proposition}[{\cite{Lokshtanov15,Zemmer2017}}, see also~\cite{GavenciakKK22}]\label{prop:IQPisFPT}
\textsc{Integer Quadratic Programming} is fixed-parameter tractable when parameterized by $d+\|A\|_{\infty}+ \|Q\|_{\infty}$. 
\end{proposition}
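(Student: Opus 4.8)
The plan is to treat Lenstra's algorithm for integer \emph{linear} programming in fixed dimension as a black box and to reduce the quadratic optimization in \eqref{eq:generalIQP} to solving a number of integer linear programs that depends only on the parameter $d+\|A\|_{\infty}+\|Q\|_{\infty}$, each over the same $d$ variables. Since each such program has only $d$ variables and bounded constraint coefficients, it is solvable in time polynomial in the input encoding, so the whole procedure would be fixed-parameter tractable. The first step is to dispose of the degenerate cases: using Lenstra's algorithm I would check that the polyhedron $P=\{x\in\mathbb{Z}^{d}\mid Ax\le b,\ x\ge 0\}$ is nonempty, and then decide whether the objective $f(x)=x^{T}Qx$ is unbounded from above on $P$ by analysing the behaviour of $f$ along the recession cone of $P$, which in fixed dimension $d$ is generated by a number of extreme rays polynomial in $m$. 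Along a ray $r$ issued from a feasible point $p$ the objective is a univariate quadratic in $t$ with leading coefficient $r^{T}Qr$, which (together with its linear term) governs whether $f$ diverges. If it does we report unboundedness; otherwise the maximum is attained and we continue.

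The crux is the genuinely quadratic, and possibly \emph{indefinite}, objective together with the fact that $b$ is not part of the parameter, so the coordinates of an optimal solution may be astronomically large. I would expand $f(x)=\sum_{1\le i,j\le d} Q_{ij}x_{i}x_{j}$ into its at most $d^{2}$ monomials, each carrying a coefficient of absolute value at most $\|Q\|_{\infty}$, and aim to replace every product $x_{i}x_{j}$ by a fresh integer variable governed by linear constraints that are provably tight at some optimum. The main device is a bounded guessing step: one commits to a linear ``profile'' of the optimal solution---morally, the face of $P$ on which the optimum lies together with the local first-order optimality conditions on $(Q+Q^{T})x$ there---which turns each quadratic monomial into a linear expression and yields a single integer linear program. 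Because $d$, $\|A\|_{\infty}$ and $\|Q\|_{\infty}$ are all bounded, the number of admissible profiles one must enumerate is bounded by a function of the parameter alone, and the overall running time remains fixed-parameter tractable.

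I expect the hard part to be exactly this linearization in the presence of unbounded coordinates: standard product-linearization tricks such as McCormick envelopes require a priori bounds on the variables, which we do not have since $b$ is arbitrary. To circumvent this I would invoke a proximity argument---an optimal integer solution lies within distance bounded by a function of $d$ and $\|A\|_{\infty}$ of an optimal solution of a suitable continuous relaxation---so that, after guessing the supporting face of the continuous optimum (of which there are polynomially many in fixed dimension), the integral optimum can be pinned down by a local search over a parameter-bounded neighbourhood, each step again reducing to fixed-dimension ILP. Verifying that the enumeration is exhaustive and that the proximity bound depends only on $d$ and $\|A\|_{\infty}$, and not on $b$ or the magnitude of the coordinates, is the delicate point; it is precisely here that the boundedness of $\|Q\|_{\infty}$ is needed, to keep the number of relevant first-order profiles finite. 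Since for our purposes the statement is used only as an off-the-shelf black box, it ultimately suffices to appeal to the existing proofs of~\cite{Lokshtanov15,Zemmer2017}.
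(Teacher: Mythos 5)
First, a framing point: the paper does not prove Proposition~\ref{prop:IQPisFPT} at all --- it is imported verbatim from \cite{Lokshtanov15,Zemmer2017} and used purely as a black box in the proof of Theorem~\ref{thm:WfIsNsFPTwrtVC}. Your closing sentence, that it suffices to appeal to the existing proofs, is therefore exactly the paper's own treatment, and had the proposal consisted of that sentence alone it would match the paper. But the sketch preceding it cannot stand as a proof, and it contains one outright error: testing unboundedness of $x^{T}Qx$ along the extreme rays of the recession cone is not sound for quadratic objectives. Take $f(x)=x_{1}x_{2}$ over the nonnegative orthant: $f$ vanishes identically along both extreme rays $(1,0)$ and $(0,1)$ issued from the feasible point $(0,0)$, yet $f(t,t)=t^{2}$ is unbounded. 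Cross terms between rays matter, so even the ``degenerate case'' you dispose of first already requires a genuinely quadratic analysis, not a reduction to univariate quadratics along polynomially many rays.

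The deeper gap is the ``profile'' step, which is where the entire difficulty of the theorem lives and which you leave undefined: an integer optimum need not lie on any particular face of $P$ nor satisfy continuous first-order conditions, so it is unclear what object you are enumerating, and why its count depends only on $d+\|A\|_{\infty}+\|Q\|_{\infty}$. The proximity theorem you then invoke --- an integer optimum within distance $g(d,\|A\|_{\infty})$ of a continuous optimum --- is a statement known for linear and, in suitable forms, separable convex objectives; for \emph{indefinite} quadratic maximization no such off-the-shelf bound exists, and the continuous optimum can sit far from every integer optimum. Constructing a workable substitute for exactly this proximity/enumeration step is the technical core of \cite{Lokshtanov15} and \cite{Zemmer2017}; it cannot be assumed and then flagged as ``the delicate point to verify.'' So, as written, the proposal is a plan whose routine steps are fine and whose crucial steps are precisely the open ones; the correct move, and the one the paper itself makes, is simply to cite the result.
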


\section{Structural Properties of Outcomes}
\label{sec:structure}

As our first set of contributions, we establish some basic properties of our model and the associated problems that are studied within this paper. We begin by showcasing that the imposition of individual rationality or Nash stability as additional constraints on our outcomes does in fact have an impact on the maximum welfare that can be achieved (and hence it is indeed necessary to consider three distinct problems). We do not consider this to be obvious at first glance: intuitively, an agent $i$'s own contribution to the social welfare can only improve if they perform an IR- or NS-deviation, and the fact that the distance function $\dist_{\Pi_i}$ is symmetric would seem to suggest that this can only increase the total social welfare.

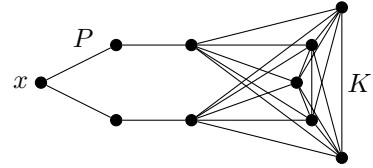
\begin{figure}
 \begin{center}
  \begin{tikzpicture}
      [
      every node/.style={draw, fill=black, shape=circle, inner sep=0pt, text width=1.5mm, align=center, label distance=1mm}
      ]
      \node[label=left:$x$] (x) at  (0,0) {};
      \node (p2)  at (1,0.5) {};
      \node (p4) at (1,-0.5) {};
      \node (p1) at (2,0.5) {};
      \node (p5) at(2,-0.5) {};
      \node (1) at  (4,1) {};
      \node (5) at (4,-1) {};
      \node (2) at (3.6,0.5) {};
      \node (4) at (3.6,-0.5) {};
      \node (3) at (3.4,0) {};
      \draw (p1) to (p2) to node[above=2mm,draw=none,fill=none] {$P$} (x) to (p4) to (p5);
      \draw (1) to (2) to (3) to (4) to (5) to node[right,draw=none,fill=none] {$K$} (1) to (3) to (5) to (2) to (4) to (1);
      \foreach \i in {1,...,5}
      {
        \draw (p1) to (\i) to (p5);
      }      
  \end{tikzpicture}
  \caption{Social Network from \Cref{lem:irrational}.}\label{fig:irrational}
 \end{center}
\end{figure}

\begin{figure}
 \begin{center}
  \begin{tikzpicture}
      [
      every node/.style={draw, fill=black, shape=circle, inner sep=0pt, text width=1.5mm, align=center, label distance=1mm}
      ]
      \node[label=above:$x$] (x) at  (0,0) {};
      \node (p2)  at (1,0.5) {};
      \node (p4) at (1,-0.5) {};
      \node (p1) at (2,0.5) {};
      \node (p5) at(2,-0.5) {};
      \node (1) at  (4,0.7) {};
      \node (5) at (4,-0.7) {};
      \node (2) at (3.6,0.3) {};
      \node (4) at (3.6,-0.3) {};
      \node[label=above:$y$] (y) at (-1,0) {};
      \draw (p1) to (p2) to node[above=2mm,draw=none,fill=none] {$P$} (x) to (p4) to (p5);
      \draw (x) to (y);
      \draw (1) to (2) to (4) to (5) to node[right,draw=none,fill=none] {$K$} (1) to (4);
      \draw (2) to (5);
      \foreach \i in {1,2,4,5}
      {
        \draw (p1) to (\i) to (p5);
      }      
  \end{tikzpicture}
  \caption{Social Network from \Cref{lem:non-nash}.}\label{fig:non-nash}
 \end{center}
\end{figure}
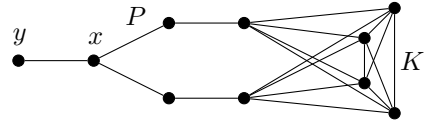

\iflong
\begin{lemma}
\fi
\ifshort
\begin{lemma}[$\star$]
\fi
\label{lem:irrational}
 There is a scoring vector $\val$ and a social network $G$ such that the single outcome achieving the maximum social welfare is not individually rational.
\end{lemma}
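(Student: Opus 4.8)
The plan is to take as witnesses the social network $G$ shown in \Cref{fig:irrational} and the scoring vector $\val=(1,1,-1)$; note that $\cdiam=3$, that $\val$ is non-increasing, and that $\val(1)=1>0$ as we may assume. Write $B=\{p_1,p_5,1,\dots,5\}$ for the dense part of $G$, which induces a clique on seven vertices minus the single edge $p_1p_5$. I claim that the grand coalition $\Pi^\ast$ (all agents in one coalition) is the unique welfare-maximizing outcome, yet $\util(x,\Pi^\ast)<0$, so $x$ admits an IR-deviation and $\Pi^\ast$ is not individually rational. The instance is engineered so that the five clique vertices sit at distance exactly $3$ from $x$: the penalty $5\cdot\val(3)$ they inflict on $x$ outweighs the reward $2\val(1)+2\val(2)$ that $x$ draws from its neighbours $p_2,p_4$ (distance $1$) and from $p_1,p_5$ (distance $2$). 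Crucially, $x$ still cannot be discarded, since in $G-x$ we have $\dist(p_2,p_4)=4>\cdiam$, so $x$ is the only agent able to keep $p_2$ and $p_4$ in a common coalition with $B$.

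First I would pin down the two numerical facts. In $\Pi^\ast$ every pairwise distance is $1$, $2$, or $3$, and a direct count yields $24$ pairs at distance $1$, $14$ at distance $2$, and $7$ at distance $3$ (namely $\{x,i\}$ for $i\in\{1,\dots,5\}$ together with $\{p_2,p_5\}$ and $\{p_4,p_1\}$). Consequently
\[
\SW(\Pi^\ast)=2\bigl(24\,\val(1)+14\,\val(2)+7\,\val(3)\bigr)=2(24+14-7)=62,
\]
whereas $\util(x,\Pi^\ast)=2\,\val(1)+2\,\val(2)+5\,\val(3)=2+2-5=-1<0$, which already establishes the non-rationality once optimality and uniqueness of $\Pi^\ast$ are shown.

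The remaining, and most delicate, task is to prove that $\Pi^\ast$ is the unique maximizer. I would start from the observation that any outcome of finite welfare consists only of connected coalitions of diameter at most $\cdiam=3$, which costs nothing since $\SW(\Pi^\ast)=62>0$. A short exchange argument then forces every optimal outcome to keep all of $B$ inside one coalition $M$: separating two clique vertices, or detaching $p_1$ or $p_5$ from the clique, deletes positive distance-$1$ pairs whose total weight strictly exceeds anything the displaced agent could gain elsewhere. With $B\subseteq M$ fixed, only the placement of $p_2,p_4,x$ is free, and here the design bites: to realize the five positive distance-$2$ links from $p_2$ (resp.\ $p_4$) to the clique, $p_2$ (resp.\ $p_4$) must lie in $M$, but putting \emph{both} into $M$ is infeasible unless $x\in M$ as well, by the distance bound above. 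Enumerating the few resulting configurations and using that distances never decrease when a coalition is split (so only the $38$ pairs at distance $\le 2$ in $\Pi^\ast$ can ever contribute positively), each alternative is seen to attain welfare at most $54<62$. I expect this final case analysis to be the main obstacle: in particular one must rule out the tempting move of peeling the low-value periphery $\{x,p_2,p_4\}$ off $B$ to avoid the seven negative distance-$3$ pairs, which I would dispatch by showing that every such peel forfeits strictly more positive distance-$2$ weight than the negative weight it recovers.
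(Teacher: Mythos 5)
You use the paper's witness network (\Cref{fig:irrational}) and your numerical core matches the paper's exactly---the grand coalition has welfare $62$ and $\util(x,\Pi^\ast)=2+2-5=-1$---but your scoring vector $(1,1,-1)$ differs from the paper's $(1,1,-1,-1,-1,-1)$, and this is where the two arguments genuinely diverge. The paper, having $\cdiam=6$, never encounters $-\infty$ on this connected graph: it disposes of every outcome whose largest coalition has at most $8$ agents by a crude size bound (a size-$k$ coalition contributes at most $k(k-1)$ to the welfare, so such outcomes stay strictly below $62$), evaluates $(\{x\},C\setminus\{x\})$ explicitly (welfare $60$), and eliminates $(\{y\},C\setminus\{y\})$ for $y\neq x$ by observing that re-inserting $y$ only increases welfare. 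Your choice $\cdiam=3$ instead makes any coalition containing both $p_2$ and $p_4$ but not $x$ infeasible ($\dist(p_2,p_4)=4$ there), so the paper's runner-up vanishes outright and your optimality margin widens ($54$ versus $60$); the cost is that you must prove that every optimal outcome keeps the seven dense agents $B$ in a single coalition, after which your enumeration of placements of $x,p_2,p_4$ (yielding $48$, $54$, $54$, and the grand coalition's $62$) is correct and finite. I verified your counts ($24$, $14$, $7$ pairs at distances $1,2,3$) and the value $54$ of the best alternative; all are right.

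The ``keep $B$ together'' step is the one genuine soft spot in what you wrote. The one-line exchange argument---the displaced agent loses more distance-$1$ weight than it can recover---only addresses detaching a single agent, not arbitrary splits of $B$ (cutting the clique $2$--$3$, moving $p_1$ out together with two clique vertices, and so on), and a literal exchange argument would also have to track cascading effects (removing $p_1$ strands $p_2$, whose only neighbours are $p_1$ and $x$). The clean repair is the pair-counting bound you already set up: since induced distances never decrease, any outcome has welfare at most $62-2p+2n$, where $p$ and $n$ count separated pairs at $G$-distance at most $2$ and exactly $3$, respectively; it then remains to check---a routine but unavoidable case analysis---that every feasible outcome splitting $B$ satisfies $p>n$, which holds because all $21$ pairs inside $B$ are positive while every negative pair has one endpoint in $\{x,p_2,p_4\}$. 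So your plan does go through. Alternatively, note that the paper's size bound is available to you verbatim and would make your proof shorter than both your sketch and the paper's own argument: outcomes whose largest coalition has at most $8$ agents have welfare at most $8\cdot 7+2\cdot 1<62$, the size-$9$ coalition $C\setminus\{x\}$ is infeasible under your vector, and $C\setminus\{y\}$ for $y\neq x$ is dominated by the grand coalition.
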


\iflong
\begin{proof}
 Consider a scoring function $\val$ such that $\val=(1,1,-1,-1,-1,-1)$.
  Consider the social network $G$ in \Cref{fig:irrational} formed from a path $P$ on $5$ vertices and a clique $K$ on $5$ vertices by connecting the endpoints of~$P$ to all vertices of $K$.
 Let $x$ be the central agent of $P$.
 Let~$C$ be the grand coalition in $G$.
 The graph can be viewed as a $6$-cycle with $K$ forming one ``bold'' agent.
 All vertices on the cycle contribute positively to the agent's utility, except for the one that is exactly opposite on the cycle.
 Hence, $\util(x,C)=4-5=-1$, while utility of all other agents is $8-1=7$ in $C$.
 This gives total social welfare of $62$ for the grand coalition.

 However, if $x$ leaves the coalition to form its own one, their utility will improve from $-1$ to $0$, whereas the total social welfare drops. 
 Indeed, in $C \setminus \{x\}$ there are 2 agents with utility $6-2=4$, 2 agents with utility $7-1=6$ and 5 agents with utility $8-0$, giving total social welfare of $60$.
 If any $y\neq x$ was to be excluded from $C$ to form outcome $\{y\}, C\setminus \{y\}$, then $y$ joining $C$ improves social welfare, proving that it was not optimal.
 Finally, if the outcome consists of several coalitions with the largest one of size 8, then the welfare is at most $8 \cdot 7+2 \cdot 1= 56$, if the largest size is 7, then we get at most $7 \cdot 6+3\cdot 2=48$, for 6 it is $6\cdot 5+4\cdot 3=42$ and for 5 it is $5 \cdot 4 +5 \cdot 4=40$.

 Hence the grand coalition $C$ is the only outcome with maximal social welfare, but it is not individually rational (and therefore not Nash stable), as $\util(x,C)=-1$.
\end{proof}
\fi

\iflong
\begin{lemma}
\fi
\ifshort
\begin{lemma}[$\star$]
\fi
\label{lem:non-nash}
There is a scoring vector $\val$ and a social network $G$ such that the single individually rational outcome achieving the maximum social welfare among such outcomes is not Nash stable.
\end{lemma}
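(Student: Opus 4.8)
The plan mirrors the construction behind \Cref{lem:irrational}: I would take the network $G$ of \Cref{fig:non-nash} --- the path $P=(p_1,p_2,x,p_4,p_5)$ whose endpoints $p_1,p_5$ are joined to every vertex of the clique $K=\{1,2,4,5\}$, with a pendant $y$ attached only to $x$ --- together with the scoring vector $\val=(1,1,-1,-1)$. The claimed welfare-maximal individually rational outcome is $\Pi^\star=\{\,\{y\},\,C'\,\}$, where $C'=N\setminus\{y\}$. The whole argument rests on one convenient reformulation: since $\val(1)=\val(2)=1$ and $\val(3)=\val(4)=-1$, the social welfare of any outcome equals $2$ times the number of same-coalition pairs at distance at most $2$ minus the number at distance exactly $3$ or $4$, where any coalition that is disconnected or contains a pair at distance at least $5$ contributes $-\infty$ and is thus irrelevant for maximization. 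Inside $C'$ exactly six pairs are at distance $3$ (the four pairs $\{x,k\}$ with $k\in K$, together with $\{p_2,p_5\}$ and $\{p_4,p_1\}$) and the other $30$ pairs are at distance at most $2$, so $\SW(\Pi^\star)=2(30-6)=48$.

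The two easy steps I would dispatch first. For individual rationality: a direct computation gives $\util(x,C')=2\val(1)+2\val(2)+4\val(3)=0$, while every other agent of $C'$ has strictly positive utility and $\util(y,\{y\})=0$; hence no agent is below $0$ and $\Pi^\star$ is individually rational. For the failure of Nash stability: $x$ has an NS-deviation into the existing coalition $\{y\}$, since $\util(x,\{x,y\})=\val(1)=1>0=\util(x,C')$. Thus $\Pi^\star$ is individually rational but not Nash stable, which is exactly the separation the lemma asks for.

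The substantial step --- proving that $\Pi^\star$ is the \emph{unique} individually rational outcome of maximum welfare --- is where I expect all the work to be, and I would carry it out purely via the pair-counting reformulation. The backbone is that an optimum never breaks the dense core $K\cup\{p_1,p_5\}$ (its six vertices are pairwise at distance $\le 2$, and because $p_1,p_5$ are the only bridges between the path and the clique, any split detaching them from $K$ loses far more close pairs than far pairs it could remove); conditioned on this core lying in one coalition, the only remaining freedom is how $p_2,p_4,x,y$ are distributed, which is a finite check. The decisive comparison is that attaching $x$ to the coalition $B=N\setminus\{x,y\}$ \emph{increases} the net pair-count by $2$ --- because $x$ turns the distance-$4$ pair $\{p_2,p_4\}$ into a distance-$2$ pair and contributes four close pairs against only the four far pairs $\{x,k\}$ --- so that $C'$ strictly beats $B$.

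The genuinely delicate point, and the main obstacle, is that $y$ cannot be dismissed by a local argument: in a small coalition it is an asset (for instance $\util(y,\{x,y\})=\val(1)=1>0$), so one must compare entire outcomes and show that whatever $y$ gains by sitting next to $x$ is always outweighed by the close pairs sacrificed when $x$, $p_2$, $p_4$ are pulled off the clique. Making this quantitative over all partitions --- in particular beating strictly the two natural competitors, the grand coalition $N$ (welfare $42$, and in any case not individually rational because $\util(y,N)=\val(1)+2\val(2)+2\val(3)+4\val(4)=-3<0$) and the partition $\{\,\{x,y\},\,B\,\}$ of welfare $46$ --- and checking that every remaining partition falls strictly below $48$, is the crux that certifies both optimality and uniqueness of $\Pi^\star$.
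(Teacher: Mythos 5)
Your construction and the routine verifications coincide with the paper's: you use the same network (\Cref{fig:non-nash}) and an equivalent scoring vector (the paper takes $\val=(1,1,-1,-1,-1,-1)$, but since $G$ has diameter $4$, your $(1,1,-1,-1)$ induces identical utilities on every connected coalition), the same candidate outcome $\Pi^\star=(\{y\},C')$ with welfare $48$, the same individual-rationality check ($\util(x,C')=0$, all other members strictly positive), and the same NS-deviation of $x$ into $\{y\}$. Your side computations are also correct; in fact your value $42$ for the grand coalition is right (the paper's stated $46$ appears to be a slip, immaterial there since the grand coalition is not individually rational and both values lie below $48$).

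The genuine gap is the uniqueness claim, which you yourself label ``the crux'' but never prove. The lemma requires that $\Pi^\star$ be the \emph{single} individually rational outcome of maximum welfare; your proposal reduces this to (i) the assertion that an optimum never splits the core $K\cup\{p_1,p_5\}$ and (ii) a finite check over the placements of $p_2,p_4,x,y$, but (i) is supported only by a heuristic (``loses far more close pairs than far pairs''), and neither (i) nor (ii) is carried out -- so what you have is a plan, not a proof. Moreover, your formulation aims at showing every other partition of all ten agents falls strictly below $48$, which is more than is needed and correspondingly harder to verify. The paper instead closes this gap with a short counting argument that avoids any core-preservation claim: by the size bounds already computed in \Cref{lem:irrational}, a largest coalition of size $7$ caps the welfare at $7\cdot 6+3\cdot 2=48$ with equality only if all pairs are at mutual distance at most $2$ (no $7$ vertices of $G$ satisfy this), and size at most $6$ caps it strictly below $48$; hence any outcome with welfare $48$ contains a coalition of size at least $8$. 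Individual rationality then excludes $y$ from that coalition, because $y$ has only $3$ agents within distance $2$, so its utility in a coalition of size $s$ is at most $7-s<0$ for $s\ge 8$. The large coalition is therefore $C'$ or $C'\setminus\{z\}$ for some $z$, and in the latter case moving $z$ into it strictly increases welfare, leaving $(\{y\},C')$ as the unique IR optimum. To salvage your route you would have to prove (i) by its own case analysis over cuts of the core and then perform the enumeration in (ii); adopting the paper's counting argument is substantially shorter.
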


\iflong
\begin{proof}
 Consider again the scoring function $\val=(1,1,-1,-1,-1,-1)$.
  Similarly to previous lemma, consider the social network $G$ in \Cref{fig:non-nash} formed from a path $P$ on $5$ vertices and a clique $K$ on $4$ vertices by connecting the endpoints of~$P$ to all vertices of $K$ and adding a agent $y$ only connected to the central agent of $P$ which we call $x$.
 Let~$C$ be the coalition containing all vertices of $G$ except for $y$.
 As in the previous lemma, $G[C]$ can be viewed as a $6$-cycle with $K$ forming one ``bold'' agent.
 Hence, $\util_x(C)=4-4=0$, while utility of other agents in $C$ is $7-1=6$.
 Trivially $\util_y(\{y\})=0$, hence the outcome $(\{y\},C)$ is individually rational.
 It has total social welfare of $48$.
 However, it is not Nash stable, as $x$ wants to deviate to $\{x,y\}$ giving them utility $1$.

 However, the outcome $(\{x,y\}, C\setminus\{x\})$, which is Nash stable, has total social welfare only $46$.
 Note that $\util_z(C\setminus\{x\}) \ge 3$ for every agent $z \in C\setminus\{x\}$, so any outcome $(\{x,y,z\}, C\setminus\{x,z\})$ cannot be Nash stable.
 While the total social welfare of the grand coalition is $46$, the utility of $y$ is $3-6=-3$ in this coalition, so this outcome is not even individually rational.
 From the computations in the previous lemma, it follows, that to attain the social welfare of $48$, the largest coalition in the outcome must be of size at least \(7\). 
 Moreover, if it is of size exactly \(7\), then these \(7\) vertices must be at mutual distance at most $2$. 
 However, there are no \(7\) vertices in mutual distance at most \(2\) in $G$.
 Hence, in any outcome with social welfare $48$ the largest coalition must be of size at least $8$.
 Agent $y$ has only \(3\) agents in distance at most \(2\) in $G$.
 Hence, for $y$ to get a positive utility from some coalition, the coalition must be of size at most \(7\), i.e., $y$ cannot be part of the largest coalition in any outcome with social welfare at least \(48\).
 However, for every $z \in C$, $z$ joining the coalition $C\setminus \{z\}$ improves the social welfare of the outcome, proving that it was not optimal.
 
 Hence the outcome $(\{y\},C)$ is the only individually rational outcome with maximal social welfare, but it is not Nash stable.
\end{proof}
\fi

\ifshort
Both Lemma~\ref{lem:irrational} and Lemma~\ref{lem:non-nash} can in fact be shown to hold for the same scoring vector $\val=(1,1,-1,-1,-1,-1)$; for the former we use the network depicted in Figure~\ref{fig:irrational} while for the latter the network in Figure~\ref{fig:non-nash}.
\fi

As our next two structural results, we prove that on certain SDGs it is possible to bound not only the diameter but also the size of each coalition in a welfare-maximum outcome. Notably, we establish such bounds for SDGs on bounded-degree networks and SDGs which have a simple scoring vector on a tree-like network. While arguably interesting in their own right, these properties will be important for establishing the fixed-parameter tractability of comptuing welfare-optimal outcomes in the next section.

\begin{lemma}\label{lem:maxdeg_coal_size}
 If $G$ is a graph of maximum degree $\Delta(G)$ and $C$ is a coalition of size more than $(v+1) \cdot \Delta(G) \cdot (\Delta(G)-1)^{\cdiam-2}$, then for every $i \in C$ we have $\util(i,C) <0$.
\end{lemma}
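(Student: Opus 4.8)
The plan is to fix an arbitrary agent $i\in C$ and to bound $\util(i,C)$ from above by splitting the remaining agents of $C$ according to their distance from $i$ inside $G[C]$. First I would dispose of the trivial case: if some $j\in C$ has $\dist_C(i,j)>\cdiam$, or $j$ is not reachable from $i$ in $G[C]$ at all, then $\val(\dist_C(i,j))=-\infty$ and hence $\util(i,C)=-\infty<0$. So from now on I may assume that every agent of $C$ lies within distance $\cdiam$ of $i$ in $G[C]$.

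The key structural observation for the main case is that deleting vertices can never shorten distances, so $\dist_G(i,j)\le \dist_C(i,j)\le \cdiam$ for every $j\in C$; thus all of $C$ is contained in the ball of radius $\cdiam$ around $i$ in $G$. Since $G$ has maximum degree $\Delta:=\Delta(G)$, a breadth-first-search layer count shows that at most $\Delta(\Delta-1)^{d-1}$ vertices lie at distance exactly $d\ge 1$ from $i$. Consequently the number $p$ of agents of $C$ at distance at most $\cdiam-1$ from $i$ is bounded by $\sum_{d=1}^{\cdiam-1}\Delta(\Delta-1)^{d-1}$, a geometric sum dominated by its last term $\Delta(\Delta-1)^{\cdiam-2}$; all remaining $q=|C|-1-p$ agents then sit at distance exactly $\cdiam$ from $i$.

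Next I would exploit the sign pattern of the (non-increasing) scoring vector. Each of the $p$ close agents contributes at most $v=\val(1)$ to $\util(i,C)$, while each of the $q$ agents at distance exactly $\cdiam$ contributes $\val(\cdiam)\le -1$. This collapses to a single clean inequality,
\[
 \util(i,C)\;\le\; v\cdot p - q \;=\;(v+1)\,p-(|C|-1),
\]
and since $p$ is bounded by (a constant multiple of) $\Delta(\Delta-1)^{\cdiam-2}$ whereas $|C|>(v+1)\,\Delta(\Delta-1)^{\cdiam-2}$ by hypothesis, the right-hand side is strictly negative. As $i$ was arbitrary, every agent of $C$ then has negative utility.

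I expect the main difficulty to be twofold. The delicate quantitative point is matching the precise constant: the honest layer count yields the whole ball $\sum_{d=1}^{\cdiam-1}\Delta(\Delta-1)^{d-1}$ rather than only its outermost layer $\Delta(\Delta-1)^{\cdiam-2}$, so one must either absorb the geometric-series factor into the bound or argue slightly more tightly. The conceptual crux is that the domination step genuinely relies on $\val(\cdiam)<0$: if the scoring vector ended in a non-negative value, the far agents would no longer pull the utility down, and a large coalition could keep every utility non-negative, so that regime needs separate treatment (there the coalition size is instead controlled directly by the radius-$\cdiam$ ball together with the diameter bound already observed for welfare-maximizing coalitions). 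I would therefore carry out the counting argument under the assumption $\val(\cdiam)<0$.
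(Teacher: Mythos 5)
Your proposal follows essentially the same route as the paper's proof: fix $i\in C$, split the remaining agents into those at distance at most $\cdiam-1$ (each contributing at most $\maxval$) and the rest (each contributing at most $-1$, with too-distant or unreachable agents giving $-\infty$), bound the first group by a maximum-degree ball count, and finish with the arithmetic $\util(i,C)\le \maxval p-q$. The noteworthy point of the comparison is that the two ``difficulties'' you flag at the end are genuine, and the paper's proof does not overcome them---it simply asserts both disputed steps. It states outright that at most $\Delta(G)\cdot(\Delta(G)-1)^{\cdiam-2}$ agents lie within distance $\cdiam-1$ of $i$, i.e., it counts only the outermost BFS layer, whereas the honest count is the geometric sum $\sum_{d=1}^{\cdiam-1}\Delta(G)(\Delta(G)-1)^{d-1}$ you wrote (the two coincide only when $\cdiam\le 2$ or $\Delta(G)\le 2$); and its claim that ``every other agent contributes at most $-1$'' silently presumes $\val(\cdiam)\le -1$, which is exactly the caveat you raise.

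Moreover, neither point can be argued away ``slightly more tightly,'' because the lemma with its stated constant genuinely needs both corrections. For the sign: with $\val=(1,1)$ and $C$ a $5$-cycle (so $\Delta(G)=2$ and the stated threshold is $4$), every agent has utility $4>0$ although $|C|=5$ exceeds the threshold. For the constant: with $\val=(1,1,-1)$ and $\Delta(G)=\Delta\ge 3$, take a tree in which $i$ has a full ball of radius $2$ (these $\Delta^2$ agents each contribute $+1$) plus $\Delta^2-2\Delta$ agents at distance exactly $3$; then $|C|=2\Delta^2-2\Delta+1$ exceeds the stated threshold $2\Delta(\Delta-1)$, yet $\util(i,C)=2\Delta>0$. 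So the statement is only salvageable with $\val(\cdiam)<0$ assumed and the outermost-layer factor replaced by the full ball volume, under which your computation $(\maxval+1)p-(|C|-1)\le 0$ goes through (up to the same boundary issue with strictness that the paper also has). In short, your attempt reproduces the paper's argument, is no less rigorous than it, and is considerably more candid about where the constant and the sign assumption enter; it is not missing anything that the paper supplies.
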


\begin{proof}
 Let $i \in C$.
 There are at most $\Delta(G) \cdot (\Delta(G)-1)^{\cdiam-2}$ agents in distance at most $\cdiam-1$ from $i$.
 Each of these agents contributes at most $\maxval$ to $\util(i,C)$.
 Every other agent contributes at most $-1$.
 Hence, if there are more than $(\maxval+1) \cdot \Delta(G) \cdot (\Delta(G)-1)^{\cdiam-2}$ agents in $C$, then
 more than $\maxval \cdot \Delta(G) \cdot (\Delta(G)-1)^{\cdiam-2}$ of them have a negative contribution to $\util(i,C)$ and
 \begin{multline*}
 \util(i,C) < \maxval \cdot \Delta(G) \cdot (\Delta(G)-1)^{\cdiam-2} \\
 -1 \cdot  \maxval \cdot \Delta(G) \cdot (\Delta(G)-1)^{\cdiam-2} =0. \qedhere  
 \end{multline*}

\end{proof}

\begin{lemma}\label{lem:degen_coal_size}
Let $\val$ be such that $\val(2) < 0$.
 If $G$ is a graph of treewidth $\tw$ and $C$ is a coalition of size more than $2(\maxval+1) \cdot \tw + 1$, then $\sum_{i \in C}\util(i,C) <0$.
\end{lemma}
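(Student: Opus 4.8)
The plan is to pass from the agent-indexed sum to a sum over unordered pairs and exploit that, once $\val(2)<0$, every pair at distance at least two is strictly penalizing. Since $\dist_C$ is symmetric, I would first rewrite
\[
\sum_{i\in C}\util(i,C)=\sum_{i\in C}\sum_{j\in C\setminus\{i\}}\val(\dist_C(i,j))=2\sum_{\{i,j\}\subseteq C}\val(\dist_C(i,j)).
\]
If $G[C]$ is disconnected the right-hand side is $-\infty$ and the claim is immediate, so I may assume $G[C]$ is connected and all pairwise distances are finite positive integers. I would then split the pairs according to whether they are at distance exactly $1$ (the edges of $G[C]$) or at distance at least $2$. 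A distance-$1$ pair contributes exactly $\maxval=\val(1)$, whereas a pair at distance $a\ge 2$ contributes $\val(a)\le \val(2)\le -1$, using that $\val$ is non-increasing and integer-valued together with the hypothesis $\val(2)<0$. Recall also that by the global assumption $\val(1)>0$ we have $\maxval\ge 1$, so $\maxval+1>0$.

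The second ingredient is an edge bound coming from treewidth. Since $G[C]$ is an induced subgraph of $G$, it has treewidth at most $\tw$ and is therefore $\tw$-degenerate, so $|E(G[C])|\le \tw\cdot|C|$. Writing $m=|C|$ and $e=|E(G[C])|$, the number of pairs at distance at least $2$ is exactly $\binom{m}{2}-e$.

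Combining the two steps, I would bound the inner (unordered) sum by $\maxval\cdot e-\big(\binom{m}{2}-e\big)=(\maxval+1)e-\binom{m}{2}$, which is increasing in $e$ because $\maxval+1>0$, so substituting $e\le\tw m$ gives
\[
\sum_{i\in C}\util(i,C)\le 2\Big((\maxval+1)\,\tw m-\tfrac{m(m-1)}{2}\Big)=m\big(2(\maxval+1)\tw-(m-1)\big).
\]
This last expression is negative precisely when $m-1>2(\maxval+1)\tw$, i.e.\ when $|C|>2(\maxval+1)\tw+1$, which is exactly the hypothesis. The main point to get right is matching the threshold exactly: one must use the tight penalization $\val(a)\le -1$ for every $a\ge 2$ (not merely $\val(a)<0$), the degeneracy edge bound in the clean form $e\le\tw m$, and keep careful track of the factor $2$ arising from counting ordered versus unordered pairs. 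Any looseness in these three places shifts the constant in the size bound, so the only real obstacle is bookkeeping rather than a conceptual difficulty.
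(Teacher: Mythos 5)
Your proof is correct and takes essentially the same route as the paper: both pass to unordered pairs, bound adjacent pairs by $\maxval$ and all pairs at distance at least $2$ by $-1$, invoke $\tw$-degeneracy to get $|E(G[C])|\le \tw\cdot|C|$, and finish with the identical algebra yielding $|C|\bigl(2(\maxval+1)\tw-(|C|-1)\bigr)<0$. The only difference is that you make explicit a few points the paper leaves implicit (the disconnected case, integrality giving $\val(a)\le -1$ for $a\ge 2$, and the need for $\maxval+1>0$ when substituting the edge bound), which is fine.
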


\begin{proof}
 Each agent adjacent to $i$ contributes $\maxval$ to $\util(i,C)$, whereas all the other agents contribute at most $-1$.
 Since a  graph of treewidth $\tw$ is $\tw$-degenerate, there are $|E(G[C])| \le |C| \cdot \tw$ pairs of adjacent agents and $\binom{|C|}{2} - |E(G[C])|$ pairs of non-adjacent agents.
 We have 
 \begin{align*}
  \sum_{i \in C}\util(i,C) 
      &= \sum_{i,j \in C; i\neq j}\val\left(\dist(i,j)\right)\\
      &\le 2\left(\maxval\cdot \left|E\left(G[C]\right)\right| - \left(\binom{|C|}{2} - \left|E\left(G[C]\right)\right|\right)\right)\\
      &= 2\left((\maxval+1) \cdot  \left|E\left(G[C]\right)\right| - \binom{|C|}{2}\right)\\
      &\le 2(\maxval+1) \cdot |C| \cdot \tw - |C|(|C|-1)\\
      &=|C|\left(2(\maxval+1) \cdot \tw- (|C|-1)\right)\\
      &<|C|\left(2(\maxval+1) \cdot \tw -\left(2(\maxval+1) \cdot \tw+ 1-1\right)\right)\\
      &=0. \qedhere
 \end{align*}
\end{proof}

\section{Computing Optimal Outcomes}
\label{sec:algo}

\subsection{Intractability}
As our first step towards an understanding of the complexity of computing a welfare-optimal outcome in a SDG, we establish the \NP-hardness of \DGWF, \DGIR\ and \DGNS\ even for a very simple choice of $s$.

\ifshort
\begin{theorem}[$\star$]
\label{thm:NPh}
Let $\val=(\maxval)$ for any $\maxval>0$.
Then \DGWF, \DGIR\ and \DGNS are \NPh.
\end{theorem}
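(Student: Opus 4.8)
The plan is to first pin down what the scoring vector $\val=(\maxval)$ forces on the structure of any competitive outcome. Since $\val$ has length $\cdiam=1$, we have $\val(1)=\maxval>0$ and $\val(a)=-\infty$ for every $a\ge 2$. Consequently $\util^{\val}(i,\Pi)$ is finite if and only if every other agent of $\Pi_i$ lies at distance exactly $1$ from $i$ in $G[\Pi_i]$, i.e.\ if and only if $\Pi_i$ induces a clique of $G$. Thus, for the positive target $b$, the only outcomes that can possibly meet the welfare bound are the \emph{clique partitions} of $G$, for which $\util^{\val}(i,\Pi)=(|\Pi_i|-1)\maxval\ge 0$ and $\SW^{\val}(\Pi)=\maxval\sum_{C\in\Pi}|C|(|C|-1)$. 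In particular every clique partition is already individually rational, which is what will let me treat all three problems with a single construction.

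I would reduce from \textsc{Partition into Triangles} restricted to $K_4$-free graphs. Given such a graph $H$ on $3n$ vertices, I output the instance $G:=H$, scoring vector $\val=(\maxval)$, and target $b:=6\maxval n$. Since $H$ is $K_4$-free, every clique --- and hence every finite-welfare coalition --- has at most three vertices. Writing $n_1,n_2,n_3$ for the numbers of coalitions of size $1,2,3$ in a clique partition $\Pi$, we have $n_1+2n_2+3n_3=3n$, so $\SW^{\val}(\Pi)=\maxval(2n_2+6n_3)=\maxval(3n-n_1+3n_3)\le 6\maxval n$, with equality exactly when $n_1=0$ and $n_3=n$, i.e.\ exactly when $\Pi$ partitions $V(G)$ into $n$ triangles. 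Hence some outcome attains welfare at least $b$ if and only if $H$ admits a partition into triangles.

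It remains to see that the same instance $(G,b)$ decides all three problems. In the yes-direction, the triangle partition $\Pi$ has welfare $b$, is individually rational (each utility equals $2\maxval\ge 0$), and is Nash stable: any coalition $C\in\Pi\cup\{\emptyset\}$ for which $C\cup\{i\}$ induces a clique satisfies $|C\cup\{i\}|\le 3$ by $K_4$-freeness, so $\util^{\val}(i,C\cup\{i\})\le 2\maxval=\util^{\val}(i,\Pi_i)$, ruling out an NS-deviation (and joining the empty coalition yields only $0$). In the no-direction, if $H$ has no triangle partition then \emph{every} outcome has welfare strictly below $b$, so in particular no individually rational and no Nash stable outcome reaches $b$. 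Thus $(G,b)$ is a yes-instance of \DGWF, of \DGIR, and of \DGNS simultaneously if and only if $H$ has a triangle partition, which yields \NPhness of all three at once.

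The crux of the argument is the restricted hardness of \textsc{Partition into Triangles} on $K_4$-free graphs; here $K_4$-freeness is genuinely necessary, because a single $K_4$ would contribute welfare $12\maxval$ over four vertices (a per-vertex rate of $3\maxval$, exceeding the triangle rate $2\maxval$) and would destroy the equivalence established above. I would secure a $K_4$-free instance either by invoking a known restriction of this NP-hard problem, or by reducing from \textsc{3-Dimensional Matching} through a tripartite gadget (tripartite graphs contain no $K_4$) designed so that the triangles of $G$ correspond bijectively to the input triples. Verifying that such a gadget creates no spurious triangles --- which could otherwise certify a high-welfare clique partition where no triangle partition of the intended form exists --- is the delicate part of the construction and the step I expect to require the most care.
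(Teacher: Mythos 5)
Your reduction from \textsc{Partition into Triangles} on $K_4$-free graphs to the three SDG problems is correct as far as it goes: the observation that $\val=(\maxval)$ forces every finite-welfare coalition to be a clique, the welfare arithmetic $\SW^{\val}(\Pi)=\maxval(3n-n_1+3n_3)\le 6\maxval n$ with equality exactly for triangle partitions, and the individual-rationality and Nash-stability checks for the triangle partition (which hinge on $K_4$-freeness to kill all deviations) are all sound, and handling \DGWF, \DGIR\ and \DGNS\ with one construction mirrors what the paper does. However, there is a genuine gap: the entire \NPhness\ content of the argument rests on the claim that \textsc{Partition into Triangles} remains \NPh\ when restricted to $K_4$-free graphs, and this claim is never established. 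You yourself flag it as ``the delicate part'' and offer only two unexecuted options --- invoking ``a known restriction'' (no reference is given, and this restricted variant is not a standard catalogued problem) or designing a 3DM gadget whose spurious-triangle analysis you explicitly defer. Since a single $K_4$ in the source graph would break your equivalence (as you note, a $K_4$-coalition yields per-vertex welfare $3\maxval$ versus $2\maxval$ for triangles), the restriction is not cosmetic; as written, the theorem is not proven.

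For comparison, the paper faces the same structural need --- a source problem that caps coalition sizes --- and discharges it in full: it first proves \NPhness\ of an intermediate problem, \textsc{3-Coloring Triangle Covered Graph}, via a modified textbook reduction from \textsc{NotAllEqual-3-SAT}, and then reduces that problem to the SDG problems in the \emph{complement} graph, where the $n$ disjoint triangles of the original graph bound every clique of $\overline{G}$ by $n$ (playing the role your $K_4$-freeness plays). Your route is genuinely different and arguably more direct, and the gap looks fillable: one can check, for instance, that the classical Garey--Johnson reduction from \textsc{Exact Cover by 3-Sets} to \textsc{Partition into Triangles} attaches to each 3-set a gadget of pendant and center triangles containing no $K_4$, with element vertices pairwise non-adjacent, so its output is $K_4$-free. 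But carrying out that verification --- or an equivalent hardness proof for your source problem --- is precisely the step your proposal omits, and it is where the actual hardness argument lives.
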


\begin{proof}[Proof Sketch]
As our first step, we prove the \NP-hardness of the following intermediate problem via an adaptation of a known reduction from \textsc{NotAllEqual-3-SAT}~\cite[Theorem 9.8]{Papadimi94}:
\begin{center}
\begin{boxedminipage}{0.98 \columnwidth}
\textsc{3-Coloring Triangle Covered Graph  (3CTCG)}\\[2pt]
\begin{tabular}{l p{0.78 \columnwidth}}
Input: & An undirected graph $G=(V,E)$ with $|V|=3n$ vertices such that $G$ contains a collection of $n$ mutually vertex disjoint triangles.\\
Question: \hspace{-0.4cm} & Does $G$ have a 3-coloring?
\end{tabular}
\end{boxedminipage}
\end{center}

Next, we reduce \textsc{3CTCG} to our three problems via a single construction. Let $G$ be an instance of \textsc{3CTCG} with $3n$ vertices and $T_1, \ldots, T_n$ the corresponding collection of triangles.
Let $\overline{G}$ be a complement of $G$, let $\maxval=\maxval(\val)$ and let $b=3n\maxval\cdot(n-1)$.
To establish the \NP-hardness of \DGWF, it suffices to show that $G$ is a \YesI{} of \textsc{3CTCG} if and only if $\overline{G}$ admits an outcome with social welfare at least $b$; for the remaining two problems, we additionally show that such an outcome will furthermore be individually rational and Nash stable. 
\end{proof}
\fi

\iflong
\begin{theorem}
\label{thm:NPh}
Let \val be such that $\val=(\maxval)$ for any $\maxval>0$.
Then \DGWF, \DGNS and \DGIR are all \NPh.
\end{theorem}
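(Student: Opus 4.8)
The plan is to exploit the fact that the length-one scoring vector $\val=(\maxval)$ with $\maxval>0$ turns the utility function into a clique detector: since $\val(a)=-\infty$ for every $a\geq 2$, a coalition $C$ gives all of its members finite utility if and only if $G[C]$ is a clique, and in that case every agent of $C$ receives exactly $(|C|-1)\maxval$ while $C$ contributes $|C|(|C|-1)\maxval$ to the social welfare. Hence computing a welfare-maximizing outcome is precisely the task of partitioning $V(G)$ into cliques so as to maximize $\sum_{C}|C|(|C|-1)\maxval$, and I would phrase the entire reduction in this language.

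First I would establish the \NPhness\ of the intermediate problem \textsc{3CTCG} by adapting the textbook reduction from \textsc{NotAllEqual-3-SAT} to graph $3$-colorability so that every vertex of the constructed graph belongs to one of a prescribed collection of $n$ pairwise vertex-disjoint triangles. These triangles are what later force each color class to pick exactly one vertex per triangle; the only care needed here is to equip the standard variable- and clause-gadgets with this triangle structure without breaking the correspondence between not-all-equal satisfying assignments and proper $3$-colorings.

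Next I would give the complement reduction from \textsc{3CTCG} to \DGWF. Given a \textsc{3CTCG} instance $G$ on $3n$ vertices covered by triangles $T_1,\dots,T_n$, I form $\overline{G}$ and set $b=3n\maxval(n-1)$. A clique of $\overline{G}$ is exactly an independent set of $G$, and since each $T_i$ is a triangle of $G$, any clique of $\overline{G}$ contains at most one vertex of each $T_i$; thus every coalition has size at most $n$. A proper $3$-coloring of $G$ yields three independent sets, each meeting every triangle exactly once, i.e.\ three cliques of $\overline{G}$ of size exactly $n$, with total welfare $3\cdot n(n-1)\maxval=b$. For the converse I would use $\sum_{C}|C|(|C|-1)=\sum_{C}|C|^{2}-3n$ together with the bound $\sum_{C}|C|^{2}\leq(\max_{C}|C|)\cdot\sum_{C}|C|\leq n\cdot 3n$: the social welfare never exceeds $b$, and equality is possible only when every (nonempty) coalition has size exactly $n$, hence only for a partition into three cliques of size $n$. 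Reading those three cliques as color classes recovers a proper $3$-coloring of $G$. This tight counting argument, resting on the clique-size cap supplied by the triangles, is the crux of the reduction and the step I expect to require the most attention.

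Finally, to handle \DGIR\ and \DGNS\ with the very same construction, I would verify that the ``three cliques of size $n$'' outcome is simultaneously individually rational and Nash stable, so that imposing the stability constraints cannot change the answer. Individual rationality is immediate, as each agent has utility $(n-1)\maxval\geq 0$. For Nash stability, consider an agent $i$ lying in some triangle $T_j$: any coalition $C'$ other than $i$'s own already contains the vertex of $T_j$ assigned to it, and that vertex is adjacent to $i$ in $G$ and hence non-adjacent in $\overline{G}$, so $C'\cup\{i\}$ fails to be a clique and the deviation yields utility $-\infty$; deviating to a fresh singleton only gives utility $0\leq(n-1)\maxval$. Thus no agent has a profitable deviation, and the single construction witnesses the \NPhness\ of all three problems at once.
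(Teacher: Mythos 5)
Your proposal is correct and follows essentially the same route as the paper's proof: NP-hardness of the intermediate problem \textsc{3CTCG} via the modified \textsc{NotAllEqual-3-SAT} reduction, then the complement construction with budget $b=3n\maxval(n-1)$, and the observation that the three-clique outcome is simultaneously individually rational and Nash stable (any deviation hits a co-triangle vertex and yields utility $-\infty$). The only cosmetic difference is your counting in the converse direction, which bounds $\sum_{C}|C|^{2}$ over coalitions rather than bounding each agent's utility by $\maxval(n-1)$ as the paper does; both arguments force every coalition to have size exactly $n$ and are interchangeable.
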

\begin{proof}
We first prove the \NP-hardness of the following problem.
\begin{center}
\begin{boxedminipage}{0.98 \columnwidth}
\textsc{3-Coloring Triangle Covered Graph  (3CTCG)}\\[2pt]
\begin{tabular}{l p{0.78 \columnwidth}}
Input: & An undirected graph $G=(V,E)$ with $|V|=3n$ vertices such that $G$ contains a collection of $n$ mutually vertex disjoint triangles.\\
Question: \hspace{-0.4cm} & Does $G$ have a 3-coloring?
\end{tabular}
\end{boxedminipage}
\end{center}
To show that this problem is \NPh, we use the textbook reduction from \textsc{NAE-3-Sat} to \textsc{3-Coloring} \cite[Theorem 9.8]{Papadimi94} with a slight modification.
Let $\varphi$ be a formula with variables $x_1, \ldots,x_n$ and clauses $C_1, \ldots, C_m$ forming an instance of \textsc{NAE-3-Sat}.
We construct a graph $G$ as follows.
For each variable $x_i$ we introduce a triangle $a_i,x_i,\overline{x_i}$ an we add the edges between $x_i,\overline{x_i}$ and $a_{i+1}$ for every $i \in \{1, \ldots, n-1\}$.
For each clause $C_j$ we add a triangle $C_{j1},C_{j2},C_{j3}$ and connect the vertex $C_{jr}$ with the vertex representing the $r$-th literal of clause $C_j$.

Clearly, as each vertex was introduced together with its triangle, graph $G$ is a valid instance of \textsc{3CTCG}.

If $h$ is a scoring function of $x_1, \ldots,x_n$ that nae-satisfies $\varphi$, then we color $G$ with colors $b,t,f$ as follows.
We color all $a_i$'s with a color $b$. For each $i \in \{1, \ldots, n\}$ if $h(x_i)=\text{true}$, then we color $x_i$ with color $t$ and $\overline{x_i}$ with color $f$ and otherwise we color $x_i$ with color $f$ and $\overline{x_i}$ with color $t$.
For each $j \in \{1, \ldots, m\}$ there is $r_t$ such that the $r_t$-th literal of $C_j$ evaluates to true under $h$ and $r_f$ such that the $r_f$-th literal of $C_j$ evaluates to false under $h$. We color $C_{jr_t}$ by $f$, $C_{jr_f}$ by $t$, and the remaining vertex of triangle $C_{j1},C_{j2},C_{j3}$ by $b$. This way we obtain a 3-coloring of $G$.

Now assume that there is a $3$-coloring $c$ of $G$ by colors $b,f,t$. Assume without loss of generality that $c(a_1)=b$.
Since the vertices $x_1,\overline{x_1}$ use colors $f$ and $t$, we have also $c(a_2)=b$ and $c(a_i)=b$ for every $i \in \{1, \ldots, n\}$. Only colors $t,f$ are used on $x_i$'s and $\overline{x_j}$'s. Consider a scoring function $h$ of $x_1, \ldots,x_n$ such that $h(x_i)$ is true if and only if $c(x_i)=t$.
We claim that $h$ nae-satisfies $\varphi$. Suppose not and $C_j$ is not satisfied by $h$. This means that all the literals of $C_j$ are colored in the same color by $c$, either $t$ or $f$. In either case $c$ would have to color all tree vertices of the triangle $C_{j1},C_{j2},C_{j3}$ by the two remaining colors, which is impossible. Hence $h$ nae-satisfies $\varphi$.

Now we reduce \textsc{3CTCG} to \DGWF. Let $G$ be an instance of \textsc{3CTCG} with $3n$ vertices and $T_1, \ldots, T_n$ the corresponding collection of triangles.
Let $\overline{G}$ be a complement of $G$, let $\maxval=\maxval(\val)$ and let $b=3n\maxval\cdot(n-1)$.
We claim that $G$ is a \YesI{} of \textsc{3CTCG} if and only if $(\overline{G}, b)$ is a \YesI{} of \DGWF.

If $c$ is a 3-coloring of $G$, then as each $T_i$ contains each color exactly once, exactly $n$ vertices are colored with each color. Let $C_1,C_2,C_3$ be the sets of vertices colored with color $1$, $2$, and $3$, respectively.
Since each $C_i$ forms a clique of size~$n$ in~$\overline{G}$, the utility of each agent is $\maxval \cdot (n-1)$ and the total welfare is $3n\maxval\cdot(n-1)$. Note that, as the utility of each agent is positive, the outcome is individually rational. Moreover, the utility of each agent would be $-\infty$ in any other coalition than she is, so the outcome is also Nash stable.

If $\Pi$ is an outcome with total welfare at least $b$, then each $C \in \Pi$ is a clique, as otherwise some agents would have utility $\val(2)=-\infty$. Therefore, $C$ can contain at most one agent from each $T_i$ and, hence, $|C| \le n$ for each $C \in \Pi$. This implies $\util_i(\Pi_i) \le \maxval\cdot(n-1)$ for each $i$. However, as total welfare is (at least) $b = 3n\maxval\cdot(n-1)$, we have that the utility of each agent is exactly $\maxval\cdot(n-1)$ and, thus, the size of each $C \in \Pi$ is exactly $n$. This in turn implies that there are exactly $3$ coalitions in $\Pi$, let us call them $C_1,C_2,C_3$. It remains to color each agent from coalition $C_1$ with color $1$, each agent from coalition $C_2$ with color $2$, and each agent from coalition $C_3$ with color $3$. This is a $3$-coloring since each of the coalitions is a clique.

It is easy to see that both of the reductions can be performed in polynomial time.
\end{proof}
\fi

\subsection{An Algorithm for Tree-Like Networks}
We complement Theorem~\ref{thm:NPh} by establishing that all three problems under consideration can be solved in polynomial time on networks of bounded treewidth---in other words, we show that they are \XP-tractable w.r.t.\ treewidth. As with numerous treewidth-based algorithms, we achieve this result via leaf-to-root dynamic programming along a tree-decomposition. However, the records we keep during the dynamic program are highly non-trivial and require an advanced branching step to correctly pre-computed the distances in the stored records. 

We first describe the ``baseline'' algorithm for solving \DGWF, and then prove that this may be adapted to also solve the other two problems by expanding on its records and procedures.

\iflong
\begin{theorem}
\fi
\ifshort
\begin{theorem}[$\star$]
\fi
\label{thm:tw}
	For every fixed scoring vector~$\val$, the \DGWF problem is in~\XP when parameterized by the treewidth of the social network~$G$.
\end{theorem}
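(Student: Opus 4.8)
The plan is to run a leaf-to-root dynamic program along a nice tree-decomposition $(\mathcal{T},\beta)$ of $G$ of width $\tw = \tw(G)$. The starting point is the observation from the Preliminaries that every coalition in a welfare-maximising outcome is connected and has diameter at most $\cdiam$; hence we may restrict attention to outcomes in which each coalition is connected of diameter at most $\cdiam$, so that all intra-coalition distances we ever need to reason about lie in $[\cdiam]$. The fundamental difficulty is that the utility of an agent depends on distances measured \emph{inside} its coalition $G[C]$, and a shortest path between two agents of $C$ may leave $V^x$ and return; since $\beta(x)$ separates $V^x \setminus \beta(x)$ from the rest of $G$, the final distance between two bag agents is only determined once the whole tree has been processed. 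Our records therefore \emph{guess} these distances and verify their realisability as the program proceeds.

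Concretely, a record at a node $x$ consists of: (i) a partition of $\beta(x)$ into blocks, where two agents lie in the same block exactly if they are destined to share a coalition; (ii) for each block $B$, a symmetric matrix $\delta_B \colon \binom{B}{2} \to [\cdiam]$ recording the \emph{guessed final} distance in $G[C]$ between every pair of agents of $B$; (iii) for each block $B$ and each distance profile $\phi \colon B \to [\cdiam]$, the number of already-forgotten agents of the corresponding coalition whose final distances to the agents of $B$ are exactly $\phi$; and (iv) the amount of social welfare already \emph{committed}. Since each count in (iii) lies in $\{0,\dots,n\}$ and there are at most $\cdiam^{\tw+1}$ profiles per block, the number of distinct records is bounded by $n^{f(\tw)}$ for a function $f$ depending on $\cdiam$ and $\tw$; this is exactly what yields the claimed \XP\ bound rather than fixed-parameter tractability.

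The transitions process the four node types. At an \emph{introduce} node for $v$ we branch over the block $v$ joins and over $\delta$ extended to $v$, keeping only extensions consistent with $v$'s realised adjacencies (distance $1$ to neighbours) and with the triangle inequality. At a \emph{forget} node for $w$ we move $w$ from its block into the counts (iii), indexing it by its profile read off from $\delta$, and we \emph{commit} the welfare of all pairs between $w$ and the already-forgotten agents of its coalition: a forgotten agent with profile $\phi$ is at distance $\min_{u \in B}(\delta(w,u)+\phi(u))$ from $w$, because any connecting path must cross the separator $B$, so we add $\val$ of that value for each such agent. Accounting every coalition pair at the moment the later of its two endpoints is forgotten guarantees that each pair is charged exactly once. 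At a \emph{join} node we combine two children having identical block-partition and identical $\delta$ by adding their profile counts, and we additionally commit the welfare of all \emph{cross pairs} (one agent forgotten in each branch): such agents with profiles $\phi_L,\phi_R$ are at distance $\min_{u \in B}(\phi_L(u)+\phi_R(u))$, again since their connecting path crosses $B$. At the empty root bag every pair has been charged, and the answer is \YES\ iff some record has committed welfare at least $b$.

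The main obstacle is the correctness of the distance guessing in step (ii): we must ensure that the matrices $\delta_B$ are simultaneously (a) never smaller than a genuinely realisable distance and (b) eventually witnessed as achievable, so that at every forget and join node the separator formula $\dist(w,z)=\min_{u}(\cdots)$ really computes the final coalition distance and is not later shortened by a path through the as-yet-unprocessed part of $G$. This is handled by the advanced branching alluded to above: the introduce step fixes, and the forget and join steps verify, a globally consistent distance matrix, propagating upper bounds through the triangle inequality and rejecting any record whose claimed distances are contradicted by a shorter realised path. Once this consistency is maintained, the committed-welfare bookkeeping is routine, and standard leaf-to-root evaluation over $(\mathcal{T},\beta)$ yields the theorem.
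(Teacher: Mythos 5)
Your overall framework is the same as the paper's: a leaf-to-root DP over a nice tree-decomposition whose records consist of a partition of the bag, distance information for bag agents, counts of forgotten agents indexed by distance profiles, and accumulated welfare (the paper charges welfare at introduce nodes rather than at forget/join nodes, but that difference is immaterial). The genuine gap is in item (ii) and the consistency mechanism you propose for it. Storing only a guessed distance matrix $\delta_B$ and filtering by triangle inequality plus ``reject any record whose claimed distances are contradicted by a shorter realised path'' enforces consistency in one direction only: it eliminates guesses that are \emph{too large}. Nothing in your procedure ever eliminates guesses that are \emph{too small}, i.e., distances that are never witnessed by any actual path in the final coalition. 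Since $\val$ is non-increasing, an under-estimated distance strictly over-estimates the score of a pair, so such records carry inflated committed welfare all the way to the root, whose bag is empty and where no check remains; the algorithm can thus answer \Yes\ on \NoI{}s. Concretely, with $\val=(1,0,-1)$, take bag agents $b,b'$ whose true distance in the final coalition is $3$ (say via a path $b-p_1-p_2-b'$, with no common neighbour of $b$ and $b'$ ever joining the coalition). The guess $\delta(b,b')=2$ satisfies every triangle inequality against the realised distance-$1$ adjacencies ($2\le 1+2$), is never contradicted by a shorter realised path (the realised path is \emph{longer}), and scores the pair $s_2=0$ instead of $s_3=-1$ in both orientations. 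A metric satisfying all triangle inequalities need not be the shortest-path metric of any supergraph of the processed part, so triangle-inequality propagation cannot certify achievability.

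This is precisely the difficulty the paper's ``advanced branching'' is designed to solve, and it solves it by storing \emph{witnesses} rather than distances: for each pair of bag agents in a common coalition, the record guesses an explicit shortest path through $(N\setminus V^x)\cup\beta(x)$, i.e., a concrete set of not-yet-processed vertices. Validity conditions forbid these paths from being shortened, and—crucially—when a vertex lying on a guessed path is later introduced, it is \emph{forced} into the prescribed coalition, so every guessed distance is guaranteed to materialize as an actual path by the time the computation reaches the root. This explicit guessing of $n^{\Oh{\cdiam}}$ paths per pair is also exactly what pushes the record count to $n^{f(\tw)}$ and hence yields \XP\ rather than \FPT. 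To repair your proof you would need to replace the matrices $\delta_B$ by such path witnesses (or by some other obligation mechanism guaranteeing that every pending guessed distance is eventually discharged by a realised path); as written, the distance-only records are unsound.
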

\ifshort
\begin{proof}[Proof Sketch]
		
	\newcommand{\Pttn}{\ensuremath{C}} 	\newcommand{\len}{\ensuremath{\operatorname{len}}}
	\newcommand{\Spaths}{\ensuremath{S}} 	\newcommand{\Dvec}{\ensuremath{T}} 		Our algorithm is based on leaf-to-root dynamic programming along a nice tree-decomposition of the input social network. In each node $x$ of the tree-decomposition, we store a set $\mathcal{R}_x$ of partial solutions called \emph{records}. Each record realizes a single \emph{signature} which is a triple~$(\Pttn,\Spaths,\Dvec)$, where
	\begin{itemize}
		\item $\Pttn$ is a partition of bag agents into parts of coalitions; there are at most $\tw+1$ different coalitions intersecting $\beta(x)$ and, thus, at most ${tw^\Oh{\tw}}$ possible partitions of $\beta(x)$.
		\item $\Spaths$ is a function assigning each pair of agents that are part of the same coalition according to $\Pttn$ the shortest intra-coalitional path; recall that for fixed \val, the diameter of every coalition is bounded by a constant~$\cdiam$ and, therefore, there are~${n^\Oh{\cdiam} = n^\Oh{1}}$ possible paths for each pair of agents which gives us~${n^\Oh{\tw^2}}$ combinations in total.
		\item $\Dvec$ is a table storing for every coalition $P$ and every possible vector of distances to bag agents that are in $P$ the number of agents from $P$ that were already forgotten in some node of the tree-decomposition; the number of possible coalitions is at most $\tw+1$, the number of potential distance vectors is $\cdiam^{\tw+1} = 2^\Oh{\tw}$, and there are at most $n$ values for every combination of coalition and distance vector which leads to at most~${n^\Oh{2^{\tw}}}$ different tables~$\Dvec$.
	\end{itemize}
	The value of every record is a pair $(\pi,w)$, where $\pi$ is a partition of~$V^x$ such that $\SW(\pi) = w$ and $\pi$ witnesses that there is a partition of $V^x$ corresponding to the signature of the record, as described above. We store only one record for every signature -- the one with the highest social welfare. Therefore, in every node $x$, there are at most $n^\Oh{2^{\tw}}$ different records.
	
	Once the computation ends, we check the record in the root node~$r$ and based on the value of $w$, we return the answer; \Yes\ if $w\geq b$ and \No\ otherwise. Moreover, as $G^r=G$, the partition $\pi$ is also an outcome admitting social-welfare~$w$.
	
We conclude the proof sketch by showcasing how the records with the maximum social welfare for each signature can be computed at leaf and introduce nodes; to conclude the proof, it suffices to complete these steps for forget and join nodes, and then argue correctness and the runtime bounds.
	
	\paragraph{Leaf Node.} Leaf nodes are, by definition, empty. Hence, there is only one possible signature $(\emptyset,\emptyset,\emptyset)$, and the value of a record with the highest social welfare is $(\emptyset,0)$.
	
	\paragraph{Introduce Node.} If a node $x$ with a child~$y$ introduces a new agent~$a$, then for every $R_y\in\mathcal{R}_y$ we add multiple records that extend the record $R_y$. The extension of $R_y$ then differs based on its particular signature.
	
	If there is no pair of bag agents $b,b'\in\beta(y)$ such that~${a\in\Spaths(b,b')}$, then $a$ can be placed almost freely to every coalition or form a new singleton coalition.
	In the latter case, the extension of $R_y$ is simple, and the social-welfare remains the same. In the first case, we try to put $a$ in every coalition intersecting bag vertices and create a record for every guess of paths connecting $a$ and other bag vertices of the coalition. We have to drop all guesses where any $(a,b)$-path, $b\in\beta(y)$ and $\Pttn(b) = \Pttn(a)$, introduces a shorter intra-coalitional $(b,b')$-path than the one prescribed by $\Spaths(b,b')$. The social welfare of introduced $R_x$ can be easily recomputed thanks to~$\Dvec$, where we have the number of vertices for all distance vectors to other bag agents in $a$'s coalition, and we guessed paths connecting $a$ with other bag agents of its coalition.
	
	On the other hand, if $a$ is part of some guessed path $\Spaths(b,b')$, $b,b'\in\beta(y)$, then we are forced to put $a$ into~$b$'s coalition. We also reuse the already guessed paths between agents from $a$'s coalition; every time $a$ is part of some $\Spaths(b,b')$, we set $\Spaths(a,b)$ and $\Spaths(a,b')$ to be $(a,b)$-subpath and $(a,b')$-subpath of $\Spaths(b,b')$, respectively. If there is an agent $b$ in $a$'s coalition such that for every $\Spaths(b,b')$ the agent $a$ is not part of this path, we again try all guesses of the $(a,b)$-path. Again, we check that we do not shorten any guessed path. The computation of the social welfare of a newly added record is very similar to the computation of the social welfare in the previous case.	
\end{proof}
\fi
\iflong
\begin{proof}
		\newcommand{\Pttn}{\ensuremath{C}} 	\newcommand{\len}{\ensuremath{\operatorname{len}}}
	\newcommand{\Spaths}{\ensuremath{S}} 	\newcommand{\Dvec}{\ensuremath{T}} 		\begin{notation} In this proof, we use the following notation:
		\begin{itemize}
			\item Let $f$ be a function. By $\mathcal{D}(f)$ we denote the domain of~$f$. \item By $f\downarrow S$, where $S\subseteq \mathcal{D}(f)$, we denote a function $f'$ with $\mathcal{D}(f') = \mathcal{D}(f)\setminus S$ such that $\forall e\in \mathcal{D}(f)\setminus S\colon f'(e) = f(e)$, that is, the elements of $S$ are removed from the domain of $f$.
			\item By $f\uparrow g$, where $g$ is a function such that $\mathcal{D}(g)\cap\mathcal{D}(f) = \emptyset$, we denote a function $f'$ with $\mathcal{D}(f') = \mathcal{D}(f) \cup \mathcal{D}(g)$ and
			\[
				f'(e) = \begin{cases}
					f(e) & \text{if }e\in\mathcal{D}(f),\\
     				g(e) & \text{if }e\in\mathcal{D}(g).
				\end{cases}
			\]
			\item Let $a\in V(G^x)\setminus\beta(x)$ be a forgotten agent, $\beta(x)$ be a bag of a tree-decomposition, and $H$ be a subgraph of $G^x$. We use $\dist_H(a,\beta(x))$ to obtain vector of $a$'s distance to bag vertices in a subgraph $H$.
		\end{itemize}
	\end{notation}
		Since the scoring vector~$\val$ is fixed, there is a constant~$\cdiam$ such that all coalitions are of diameter at most~$\cdiam$.
	
	Our algorithm is based on leaf-to-root dynamic programming along a nice tree-decomposition of the input social network. Therefore, the first step of our algorithm is to compute an optimal tree-decomposition. The algorithm of Bodlaender~\shortcite{Bodlaender96} computes tree-decomposition of optimal width in~${2^\Oh{\tw^3}\cdot n}$ time and this tree-decomposition can be turned into a nice one of the same width in~$\Oh{\tw^2\cdot n}$ time~\cite{CyganFKLMPPS2015}.
	
	With a nice tree-decomposition of optimal width 	in hand, we continue with the description of our dynamic programming algorithm. For each node~$x$ of the nice tree-decomposition from previous step, we define a \emph{signature} as a triplet~$(\Pttn,\Spaths,\Dvec)$, where
		\begin{itemize}
		\item $\Pttn\colon \beta(x)\to[|\beta(x)|]$ is a partition of bag vertices into coalitions,
		\item $\Spaths\colon \beta(x)\times\beta(x)\to \cup_{i=0}^{\cdiam}\binom{(N\setminus V^x)\cup\beta(x)}{i}$ is a set of intra-coalitional shortest paths between each pair of bag vertices that are members of the same coalition, and
		\item $\Dvec\colon [|\beta(x)|]\times\{(d_i)_{i=1}^{|\beta(x)|}\mid d_i\in[\cdiam]\cup\{-\infty\}\}\to[n]_0$ is a table that stores the numbers of agents with specific distances from bag vertices in the same coalition.
	\end{itemize}
		Given a signature $(\Pttn,\Spaths,\Dvec)$, we say that the signature is \emph{valid}, if all of the following conditions hold:
		\begin{itemize}
		\item for every $a\in\beta(x)$ we have $\Spaths(a,a) = \{a\}$,
		\item for each pair of distinct agents $a,b\in\beta(x)$ such that $\Pttn(a) = \Pttn(b)$, it holds that $\Spaths(a,b)$ is a $a,b$-path in $(G\setminus V^x) \cap \{a,b\}$ of length at most $\cdiam$ or $\Spaths(a,b)=\emptyset$ if the shortest path between $a,b$ is (i) trough the past vertices, or (ii) uses any other bag vertices,
		\item for each pair of distinct agents $a,b\in\beta(x)$ such that $\Pttn(a) = \Pttn(b)$ we have $\Spaths(a,b) = \Spaths(b,a)$,
		\item for each quadruplet of distinct agents $a,b,a',b'\in\beta(x)$ such that $\Pttn(a) = \Pttn(b)$, $\Pttn(a') = \Pttn(b')$, and $\Pttn(a) \not= \Pttn(a')$, we have ${\Spaths(a,b)\cap \Spaths(a',b') = \emptyset}$, and
		\item for every $i\in[|\beta(x)|]$ and $\vec{d}\in\{(d_i)_{i=1}^{|\beta(x)|}\mid d_i\in[\cdiam]\cup\{-\infty\}\}$ we have $\Dvec(i,\vec{d}) \geq 0$ only if the positive elements of~$\vec{d}$ corresponds to agents in $\beta(x)$ that are all part of the same coalition according to $\Pttn$; otherwise~$\Dvec(i,\vec{d})$ has to be $0$. 
	\end{itemize}

	Now, we bound the number of possible signatures for every node~$x$ of the tree-decomposition.
	
	\begin{claim}\label{lem:tw_dp_tbl_size}
		For every node~$x$ the number of valid signatures is at most~${n^\Oh{2^{\tw}}}$.
	\end{claim}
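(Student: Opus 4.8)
The plan is to establish the bound by a direct counting argument that treats the three components $\Pttn$, $\Spaths$, and $\Dvec$ of a signature separately, bounds the number of possibilities for each, and multiplies the results. Throughout I would use two facts: that $|\beta(x)|\le\tw+1$, and that—because the scoring vector $\val$ is fixed—the coalition diameter bound $\cdiam$ is a constant. Since every valid signature is in particular a triple of the prescribed form, and imposing the validity conditions can only discard some of these triples, it suffices to bound the number of triples $(\Pttn,\Spaths,\Dvec)$ irrespective of validity.

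First I would count the partitions $\Pttn$. As $\Pttn$ partitions a set of at most $\tw+1$ agents, their number is at most the Bell number $B_{\tw+1}$, which is bounded by $(\tw+1)^{\tw+1}=\tw^{\Oh{\tw}}$. Next I would bound the number of functions $\Spaths$. For each of the at most $\binom{\tw+1}{2}=\Oh{\tw^2}$ pairs of bag agents, $\Spaths$ returns either $\emptyset$ or an intra-coalitional shortest path of length at most $\cdiam$; such a path is determined by its at most $\cdiam-1$ internal agents, each drawn from $N$, so there are at most $n^{\Oh{\cdiam}}=n^{\Oh{1}}$ options per pair. Taking the product over all pairs yields at most $n^{\Oh{\tw^2}}$ choices for $\Spaths$.

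The dominant contribution comes from the tables $\Dvec$. Its index set is ${[|\beta(x)|]\times\{(d_i)_{i=1}^{|\beta(x)|}\mid d_i\in[\cdiam]\cup\{-\infty\}\}}$: there are at most $\tw+1$ coalition indices, and at most $(\cdiam+1)^{\tw+1}=2^{\Oh{\tw}}$ distance vectors, since each of the $\tw+1$ coordinates takes one of the $\cdiam+1$ constantly-many values. Hence the index set has size $(\tw+1)\cdot2^{\Oh{\tw}}=2^{\Oh{\tw}}$. Each entry stores a value in $[n]_0$, giving $n+1$ choices per entry, so the number of tables $\Dvec$ is at most $(n+1)^{2^{\Oh{\tw}}}=n^{\Oh{2^{\tw}}}$. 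Multiplying the three bounds gives $\tw^{\Oh{\tw}}\cdot n^{\Oh{\tw^2}}\cdot n^{\Oh{2^{\tw}}}$, which is dominated by the last factor and thus equals $n^{\Oh{2^{\tw}}}$, as claimed.

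The ``hard part'' here is genuinely mild: the only point requiring care is recognizing that the distance-vector index set of $\Dvec$ has size exponential in $\tw$, since this is precisely what produces the exponent $2^{\tw}$ in the final bound. The partition and path components contribute only a $\tw^{\Oh{\tw}}$ factor and an exponent that is polynomial in $\tw$, respectively, and are therefore absorbed into the $\Dvec$ term. I would also make sure the asymptotic simplification $(n+1)^{2^{\Oh{\tw}}}=n^{\Oh{2^{\tw}}}$ is stated explicitly (valid for $n\ge2$), as this is the step where the constant base $n+1$ is folded into the $n^{\Oh{2^{\tw}}}$ form.
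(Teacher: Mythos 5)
Your proposal is correct and follows essentially the same route as the paper's proof: bounding the number of partitions $\Pttn$ by $\tw^{\Oh{\tw}}$, the number of path functions $\Spaths$ by $n^{\Oh{\tw^2}}$ via the constant diameter bound $\cdiam$, and the number of tables $\Dvec$ by $n^{\Oh{2^{\tw}}}$ from the exponentially-sized index set, then multiplying. The only cosmetic difference is that you fold the factor $\tw+1$ into $2^{\Oh{\tw}}$ immediately while the paper writes the intermediate bound as $n^{\Oh{\tw\cdot 2^{\tw}}}$ before simplifying; both are equivalent.
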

	\begin{claimproof}
		Since $\beta(x)$ is a separator in~$G$, there are at most~${\tw+1}$ different coalitions intersecting~$\beta(x)$, therefore there are at most~${(\tw+1)^{\tw+1}}$ different partitions of the bag vertices. 
		
		Recall that~$\val$ is fixed and, hence, the diameter of every coalition is bounded by a constant~$\cdiam$. Therefore, there are~${n^\Oh{\cdiam} = n^\Oh{1}}$ possible paths for each pair of agents and thus for every partition of bag vertices at most~${n^\Oh{\tw^2}}$ combinations of paths in total.
		
		Finally, the vector table~$\Dvec$ is a two-dimensional matrix with a particular coalition as the row dimension and distance vectors as the column dimension. Hence, we have~${\tw \cdot 2^\Oh{\tw}}$ different cells and every cell can have at most~$n$ different values. That is, there are at most~${n^\Oh{\tw\cdot2^{\tw}}}$ possible values of~$\Dvec$.
		
		Overall, we have~${\tw^\Oh{\tw}\cdot n^\Oh{\tw^2}\cdot n^\Oh{\tw2^{\tw}} = n^\Oh{2^{\tw}}}$ valid signatures for every node~$x$, finishing the proof.
	\end{claimproof}
	
	For every valid signature, we store a single \emph{partial solution} or \emph{record} $R=(\Pttn,\Spaths,\Dvec,\Pi^x,w)$ such that
	\begin{itemize}
		\item $\Pi^x$ is a partition of $V^x$ and for every $a,b\in\beta(x)$ such that $\Pttn(a)=\Pttn(b)$ it holds that $\Pi^x_a = \Pi^x_b$,		\item for every coalition $P\in\Pi^x$ and every $a,b\in \beta(x) \cap P$ we have $\dist_{G[P\cup\Spaths(a,b)]}(a,b) = \len(\Spaths(a,b))$, where $\len(\Spaths(a,b))$ is the number of agents in the path $\Spaths(a,b)$,
		\item for every $a \in \beta(x)$ and every possible distance vector $\vec{d}$ we have $\Dvec(\Pttn(a),\vec{d}) = |\{b\mid \Pi^x_b = \Pi^x_a \land \dist_{\Pi^x_a}(b,\beta(x)) = \vec{d}\}|$,
		\item $\SW^*(\Pi^x) = w$, where $\SW^*(\Pi^x) = \sum_{a\in N} \sum_{b\in \Pi^x_a\setminus\{a\}} \val(\dist_{\Pi^x_a\cup\Spaths(a,b)}(a,b))$, and
		\item there is no other partition $\Pi^{'x}$ with the same properties and higher social welfare.
	\end{itemize}
	We say that the record $R$ \emph{realize} the signature $(\Pttn,\Spaths,\Dvec)$. If such a record cannot exists, we simply mark this signature as invalid. In every node~$x$, we store only records for valid signatures. Recall that the number of valid signatures is at most~$n^\Oh{2^{\tw}}$ and so is the number of records.
	
	Observe that once we compute the table for the root node~$r$, whose bag is empty by definition, there is only one valid signature $(\emptyset,\emptyset,\emptyset)$. Based on the value $w$ of the record~$(\emptyset,\emptyset,\emptyset,\Pi^x,w)$, we can return \YES{} if $w \geq b$, and \No{} otherwise. Moreover, as $G^r = G$, we have that $\Pi^x$ is a partition of $N$ with social-welfare exactly~$w$.
	
	Next, we describe how to compute, in a leaf-to-root fashion, records for all valid signatures for every node of the tree-decomposition. We describe the computation separately for each node type. Whenever we compute a record for some signature, we first check whether there is a record with the same signature stored in a node $x$. If so, we leave only the record with the highest social welfare.
	
	\paragraph{Leaf Node.} A bag of a leaf node $x$ is by definition empty. Therefore, there is exactly one valid signature $(\emptyset,\emptyset,\emptyset)$, and the corresponding record $R_x$ is $(\emptyset,\emptyset,\emptyset,\emptyset,0)$. We slightly abused the notation here to capture that $\Pttn$, $\Spaths$, and $\Dvec$ have empty domains.
	
	\paragraph{Introduce Node.} Let an introduce node be~$x$, its child be~$y$, and the introduced agent be~$a$. We add to~$x$ multiple records derived from every record stored in~$y$. Let $R_y=(\Pttn^y,\Spaths^y,\Dvec^y,\Pi^y,w^y)$ be a fixed record stored in~$\mathcal{R}_y$. Based on the appearance of~$a$ on any guessed shortest path $\Spaths(b,b')$, for any pair of $b,b'\in\beta(y)$, the particular procedure of introducing new records in~$x$ differs.
	
	First, suppose that there is no $b,b'\in\beta(x)$ with $\Pttn^y(b) = \Pttn^y(b')$ such that $a\in \Spaths^y(b,b)$. Then, we try to put $a$ in every possible coalition intersecting the bag and to create a new coalition containing only $a$. For the latter case, we create a single record with $\Pttn = \Pttn^y\uparrow\{(a\mapsto|\beta(x)|)\}$, $\Spaths = \Spaths^y\uparrow\{((a,a)\mapsto\{a\})\}$, for every $i\in[|\beta(x)|]$ and every $\vec{d}=(d_1,\ldots,d_{|\beta(x)|})$, where $d_j\in[\cdiam]\cup\{-\infty\}$, we set
	\[
		\Dvec(i,\vec{d}) = \begin{cases}
			\Dvec^y(i,(d_1,\ldots,d_{|\beta(y)|})) & \text{if } i\not= |\beta(x)|\land d_{|\beta(x)|} = -\infty,\\
			0 & \text{otherwise,}
		\end{cases}
	\]
	and with $\Pi^x = \Pi^y \cup \{\{a\}\}$ and $w = w^y$.
	
	For the first case, let $P\in[|\beta(y)|]$ be the number of a non-empty partition we are trying to add $a$ to and $\Pi_P$ be a corresponding coalition in $\Pi^y$. For this partition, we try to create multiple records, check validity of their signatures, and if a record is valid, we add it to $\mathcal{R}_x$. In particular, we try all possible functions $\Spaths'\colon \{a\}\times\{b\mid b\in\Pttn^{y}{}^{-1}(P)\}\times\bigcup_{i=0}^\cdiam\binom{(N\setminus V^x)\cup\Pttn^{y}{}^{-1}(P)}{i}$ that satisfies validity conditions given in the definition of the signature. Let $S'$ be such a fixed function. We create a record $R_x$ with $\Pttn = \Pttn^y \uparrow \{(a \mapsto P)\}$, $\Spaths = \Spaths_y \uparrow S'$,
	\[
		\Dvec(i,\vec{d}) = \begin{cases}
			0  \text{~~~~~~~~~~~~~~~~~~~~~~~~~~if } i = P \text{ and } d_{|\beta(x)|} \not=\\
			 ~~~~~~~~~~~~~~~~~~~~~~~~~~\min\limits_{{b\in\Pi_P\cap\beta(x)}}\{\dist_{\Pi_P\cup\Spaths(a,b)}(a,b)\}\\
			0  \text{~~~~~~~~~~~~~~~~~~~~~~~~~~if } i \not= P \text{ and } d_{|\beta(x)|} \not= -\infty\\
			\Dvec_y\left(i,(d_1,\ldots,d_{|\beta(y)|})\right)  \text{~~~~otherwise,}
		\end{cases}
	\]
	$\Pi^x = \left\{Q\in\Pi^y \mid Q \not= \Pi_P\right\}\cup\left\{\Pi_p\cup\{a\}\right\}$, and
	\begin{multline*}
	w = w^y + \sum_{b\in\Pi_p\cap\beta(x)} \val\left(\dist_{\Pi_p \cup \Spaths(a,b)}(a,b)\right)\\
			+ \sum_{\vec{d}\in\{(d_i)_{i=1}^{|\beta(x)|}\}} 2\cdot\Dvec(P,\vec{d})\cdot\val\left(d_{|\beta(x)|}\right).
	\end{multline*}

	If there exists a pair of agents $b,b'\in\beta(x)\setminus\{a\}$ such that $a\in\Spaths^y(b,b')$, then the computation is the same as in the case of putting $a$ into nonempty partition. There is only one important difference; we have to place $a$ in the coalition $\Pi^y_b$ prescribed by the $\Spaths(b,b')$. 

	Before we finally add the created $R_x$ into $\mathcal{R}_x$, we check whether the signature of $R_x$ meets all validity conditions.
	
	\paragraph{Join Node.} Let $x$ be a join node with two children $y$ and $z$ such that $\beta(x) = \beta(y) = \beta(z)$. Let $R_y=(\Pttn^y,\Spaths^y,\Dvec^y,\Pi^y,w^y)$ be a record in $y$ and $R_z=(\Pttn^z,\Spaths^z,\Dvec^z,\Pi^z,w_z)$ be a record in $z$. We say that the records $R_y$ and $R_z$ are \emph{compatible} if $\Pttn^y = \Pttn^z$ and $\Spaths^y = \Spaths^z$. For every pair of compatible records, we add a record $R_x = (\Pttn=\Pttn^y,\Spaths=\Spaths^y,\Dvec,\Pi^x,w)$, where
		\begin{multline*}
		\forall i\in|\beta(x)|\colon\forall \vec{d}\in\{(d_i)_{i=1}^{|\beta(x)|}\mid d_i \in [\cdiam]\}\colon\\ \Dvec'(i,\vec{d}) = \Dvec^y(i,\vec{d}) + \Dvec^z(i,\vec{d}),
	\end{multline*}
		the new partial coalition structure is defined as
		\begin{multline*}
		\Pi^x = \left\{ P \cup Q \mid P\in\Pi^y \land Q\in\Pi^z \land P\cap\beta(x) = Q\cap\beta(x) \right\} \\
		\cup \left\{P \mid P\in\Pi^y\cup\Pi^z \land P\cap\beta(x) = \emptyset\right\}
	\end{multline*}
		and the new social welfare of $R_x$ is given by the following formula:
		\begin{multline*}
		w = w^y + w_z \\
		- \sum_{\substack{a,b\in\beta(x)\\ \Pttn(a)=\Pttn(b)}} \val\left(\dist_{\Pi^x_a\cup\{\Spaths(c,c')\mid c,c'\in\Pi^x_a\cap\beta(x)\}}(a,b)\right)\\
		+ 
		\sum_{i\in[|\beta(x)|]}
		\sum_{\vec{d}^y,\vec{d}^z\in\{(d_i)_{i=1}^{|\beta(x)|}\}} \Dvec^y(i,\vec{d}^y)\cdot\Dvec^z(i,\vec{d}^z)\cdot \val(\operatorname{vdist}(\vec{d}^y,\vec{d}^z)),
	\end{multline*}
	where $\operatorname{vdist}(\vec{d}^y,\vec{d}^z)$ computes the shortest distance between agents with these two distance vectors. Note that the shortest path between two sets of forgotten agents can only be through~$\beta(x)$ and, therefore, we can find the length of a shortest path by element-wise comparison of distances to the vertices of the bag and selecting the minimal one, that is, $\operatorname{vdist}(\vec{d}^y,\vec{d}^z) = \min\{d_i^y + d_i^z\mid i\in|\beta(x)|\}$. We then add the record created in this way to $\mathcal{R}_x$.
	
	\paragraph{Forget Node.} Let~$x$ be a forget node and~$y$ be its child node with~${\beta(x) = \beta(y) \setminus \{a\}}$, where~${a\in\beta(y)}$. For every record~$R_y={(\Pttn^y,\Spaths^y,\Dvec^y,\Pi^y,w^y)}$ in $\mathcal{R}_y$, we add to $\mathcal{R}_x$ a new record $R_x=(\Pttn,\Spaths,\Dvec,\Pi^x,w)$. Intuitively, we remove~$a$ from its current coalition and update the distance-vectors table $\Dvec$ to not forget that $a$ was a member of $\Pttn(a)$. The social welfare of this record remains unchanged as it was already counted when $a$ was introduced.
	
	Now, we show how the computation works formally. We distinguish two cases based on the size of $\Pttn^y(a)$. Without loss of generality, we assume that $\Pttn^y(a) = 1$ and that in every distance-vector, the element corresponding to the distance to~$a$ is the last one.
	
	First, suppose that $a$ forms a singleton coalition in $R_y$. It follows that either $a$ is in a singleton coalition, or $a$ is the last active member of this coalition. In any case, it is safe to just drop $a$ and appropriate parts of the distance vectors table. Moreover, $a$'s utility was already computed earlier and it will never change. Hence, we set ${\Pttn = \Pttn^y\downarrow\{a\}}$, $\Spaths = \Spaths^y\downarrow\{(a,a)\}$, $\forall i\in[2,|\beta(y)|]$ and $\forall\vec{d}\in\{(d_j)_{j=1}^{|\beta(x)|}\mid d_j\in[\cdiam]\cup\{-\infty\}\}$ we have $\Dvec(i,\vec{d}) = \Dvec^y(i,(d_1,\ldots,d_{|\beta(x)|},-\infty))$, $\Pi^x=\Pi^y$, $w=w^y$, and add $R_x$ to $\mathcal{R}_x$.
	
	If the size of the $a$'s coalition is greater than $1$, we have to be a little more careful; in particular, we need to keep note that $a$ was a member of its coalition. We again set $\Pttn = \Pttn^y\downarrow \{a\}$. Next, since $R_y$ is valid signature, $a$ is not part of any $\Spaths^y(b,b')$, where $\Pttn^y(b)=\Pttn^y(b')=\Pttn^y(a)$. Therefore, we set $\Spaths = \Spaths^y\downarrow\{(a,a)\}$. For every $\vec{d}=(d_1,\ldots,d_{|\beta(x)|})$, where $d_i\in[\cdiam]\cup\{-\infty\}$, we compute the distance-vectors table $\Dvec$ as follows
		\[
	\Dvec(i,\vec{d}) = \begin{cases}
		\sum_{k=1}^\cdiam\Dvec^y\left(i,(d_1,\ldots,d_{\beta(x)},k)\right) & \text{if } i = \Pttn^y(a) \\
		\Dvec^y\left(i,(d_1,\ldots,d_{|\beta(x)|},-\infty)\right) & \text{if } i\not=\Pttn^y(a)
	\end{cases}
	\]
		We also have to note that $a$ was a member of coalition $\Pttn^y(a)$. Let $\vec{d}^a = (d_1^a,\ldots,d_{|\beta(x)|}^a)$ be a vector of distances to other vertices of $\beta(y)$ such that $d_i^a = \dist_{\Pi^x_a\cup\{\Spaths(b,b')\mid b,b'\in\Pi^x_a\}}(a,i)$ We increase the value of $\Dvec(1,\vec{d}^a)$ by one. Finally, we set $\Pi^x=\Pi^y$, $w=w^y$, and add $R_x$ to $\mathcal{R}_x$.
\end{proof}
\fi

Having established \Cref{thm:tw}, we now turn to the considered stability requirements.

\iflong
\begin{theorem}
	\fi
	\ifshort
	\begin{theorem}[$\star$]
		\fi
\label{thm:twstable}
	For every fixed scoring vector \val, the \DGIR and \DGNS problems are in \XP when parameterized by the treewidth of the social network $G$.
\end{theorem}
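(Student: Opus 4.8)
The plan is to build directly on the dynamic program established in \Cref{thm:tw} for \DGWF, extending the records and the transition rules at each node type so that the additional stability constraints (individual rationality or Nash stability) can be enforced while the leaf-to-root computation proceeds. The crucial observation is that both individual rationality and Nash stability are \emph{local} conditions in the following sense: whether an agent $i$ performs an IR-deviation depends only on $\util(i,\Pi_i)$, and whether $i$ performs an NS-deviation depends only on $\util(i,\Pi_i)$ together with, for each \emph{other} coalition $C$, the value $\util(i,C\cup\{i\})$. Since every welfare-maximizing (and indeed every admissible) coalition has diameter at most $\cdiam$, the utility an agent derives from any coalition is fully determined by how many agents sit at each distance $1,\dots,\cdiam$ from it inside that coalition. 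Thus the signature already records (via the distance-vector table $\Dvec$) enough information to evaluate each agent's utility in its own coalition, which is exactly what is needed to rule out IR-deviations.

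First I would handle \DGIR, which is the simpler of the two. The key point is that an outcome is individually rational if and only if \emph{every} agent has nonnegative utility in its coalition. I would therefore enforce, at the moment an agent is \emph{forgotten}, that its final utility in its coalition is nonnegative: when a forget node removes agent $a$ from $\beta(y)$, its complete distance vector within its coalition is already fixed (the only remaining coalition members can be further forgotten agents at known distances, recorded in $\Dvec$), so $\util(a,\Pi_a)$ is computable, and I simply discard any record in which this value is negative. Because distances inside a coalition can only stay the same or decrease as more vertices are added higher in the decomposition, and all coalition members at distance $\le\cdiam$ that ever contribute are already accounted for in $\Dvec$, the utility computed at the forget node is the true final utility of $a$. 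This yields an \XP\ algorithm for \DGIR\ with the same signature space $n^{\Oh{2^{\tw}}}$.

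The harder case is \DGNS, and I expect the treatment of deviations \emph{into a different coalition} to be the main obstacle. Nash stability requires, for every agent $i$ and every coalition $C$ in the outcome (including the empty one), that $\util(i,C\cup\{i\})\le\util(i,\Pi_i)$. The difficulty is that the coalitions an agent might profitably defect to need not intersect $\beta(x)$ at all, and an agent and a candidate target coalition may be separated across the tree-decomposition so that their interaction is never ``seen'' in a single bag. To overcome this I would enrich the signature with a bounded summary of the coalitions that have been completed in the subtree rooted at $x$: for each possible ``coalition profile'' (a description, up to the bounded diameter $\cdiam$ and bounded relevant size, of the multiset of distances an external agent adjacent to the coalition would experience) I record whether such a coalition has already been finalized below $x$. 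Since admissible coalitions have diameter $\le\cdiam$ and since for Nash stability only the distances realizable through a single bag-crossing edge matter, the number of distinct relevant profiles is bounded by a function of $\tw$ and $\cdiam$, keeping the record count at $n^{\Oh{f(\tw)}}$. An agent's potential utility in a target coalition is governed by its adjacencies into that coalition, and every such adjacency is an edge of $G$, hence is witnessed within some bag; I would therefore check the no-profitable-deviation condition for agent $a$ at its forget node against the catalogue of coalition profiles maintained in the signature, again discarding violating records.

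The remaining steps are routine adaptations of the four node-type transitions from \Cref{thm:tw}: the coalition-profile catalogue is initialized empty at leaves, updated at introduce nodes when an agent joins a coalition, merged at join nodes (taking care, as in the welfare computation, to combine distance vectors across the two branches via the $\operatorname{vdist}$ rule so that a coalition's profile is only finalized once all its members have been incorporated), and consulted-and-updated at forget nodes where both the IR/NS checks on the forgotten agent and the ``sealing'' of its coalition's profile (once its last member is forgotten) take place. Correctness follows from the same subtree-separation argument as before together with the locality of the stability conditions, and the running time remains $n^{\Oh{2^{\tw}}}$ up to the multiplicative overhead introduced by the bounded profile catalogue, establishing membership in \XP\ for both \DGIR\ and \DGNS.
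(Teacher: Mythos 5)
You adopt the paper's high-level strategy---extend the dynamic program of \Cref{thm:tw} with utility information so that IR/NS violations can be pruned---but the point at which you perform the pruning is wrong, and this is a genuine gap rather than a presentational difference. You discard a record at the forget node of an agent $a$ on the grounds that $\util(a,\Pi_a)$ is already final at that moment. It is not: a coalition remains ``open'' as long as it still intersects the current bag, and agents introduced later can join $a$'s coalition via paths through the bag. Such agents lie at distance at least $2$ from $a$ (they cannot be adjacent to $a$, since every edge must appear in some bag and $a$ occurs in no bag above its forget node), and each contributes $\val(d)$ for some $2 \le d \le \cdiam$ to $a$'s utility, which may be positive or negative. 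For instance, with $\val=(1,1,-1)$ a forgotten agent whose current utility is $-1$ becomes individually rational once a later agent joins its coalition at distance $2$, so your test wrongly rejects valid outcomes; conversely, with a scoring vector whose tail is negative, an agent that passes your test can later be pushed below $0$, so the test also wrongly accepts. The same timing error undermines your NS check: both $\util(a,\Pi_a)$ and $a$'s prospective utility $\util(a,C\cup\{a\})$ in another coalition $C$ keep changing while those coalitions are open, so neither side of the comparison is available at $a$'s forget node, and coalitions that are completed only after $a$ is forgotten are never checked against $a$ at all.

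The paper resolves precisely this issue by shifting both the bookkeeping and the checks from agents to coalitions: the distance-vector table additionally stores, for each coalition and each distance vector, the worst (``critical'') agent's utility; these values are recomputed at introduce and join nodes, which is possible because all interaction between forgotten agents and newly added agents factors through the bag and affects all agents sharing a distance vector identically; and the IR/NS tests are executed only at the forget node where a coalition loses its \emph{last} bag vertex---at that point no future agent can join it, so the utilities of its members and the deviation values into it are genuinely final. For Nash stability the paper additionally tracks, per critical agent, its prospective utility in every currently active coalition and in its most preferred already-closed coalition, deferring each deviation check until both the deviator's coalition and the target coalition are closed. A secondary concern with your proposal is the ``catalogue of completed coalition profiles'': a profile as you define it is a multiset of distances whose multiplicities can be as large as $n$, so the family of subsets of profiles is exponential in $n$ and would break membership in \XP; the paper avoids this by storing only scalar best-deviation utilities attached to the critical agents rather than descriptions of the closed coalitions themselves.
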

\ifshort
\begin{proof}[Proof Sketch]
		\newcommand{\Pttn}{\ensuremath{C}} 	\newcommand{\len}{\ensuremath{\operatorname{len}}}
	\newcommand{\Spaths}{\ensuremath{S}} 	\newcommand{\Dvec}{\ensuremath{T}} 		In this proof, we build upon the algorithm from the proof of \Cref{thm:tw}. We start with the case of \DGIR and then we show what we need to change to deal also with \DGNS.
	
	Individual rationality requires that there is no agent with negative utility. To be able to ensure that we do not create outcomes with such deviations, we extend our signatures. Recall that a \emph{signature} is a triplet~$(\Pttn,\Spaths,\Dvec)$, where~$\Pttn$ is partition of bag vertices into coalitions,~$\Spaths$ is a function assigning each pair of vertices that are part of the same coalition according to~$\Pttn$ the shortest path connecting them inside their coalition, and~$\Dvec$ is two dimensional table storing for each coalition~$P$ and each distance vector~$\vec{d}$ the number of forgotten agents with distances~$\vec{d}$ to bag vertices in~$P$. This time, in each table $\Dvec$, we store not only the number of forgotten agents with particular distances, but also the worst utility over all agents with the same distance vector. We call this agent \emph{critical agent}. Observe that each cell can take on at most $n\cdot n\cdot \val(1) = \Oh{n^2}$ values and, thus, there are still at most $n^\Oh{2^{\tw}}$ different tables $\Dvec$. 
	
	The computation is then very similar to the algorithm from \Cref{thm:tw}, in each node, we only need to take care of correct recalculation of $\Dvec$'s values. For leaf nodes, the computation stays same. In every other type of node, before a newly created record is added to $\mathcal{R}_x$, we first recompute utilities of critical agents, and then, if this introduces a negative utility for some critical agent, we skip this record. It is not hard to observe that once the utility of a critical agent is negative, it cannot be improved by introducing new agent. Moreover, every recalculation can be implemented efficiently thanks to the fact that every bag $\beta(x)$ is a separator in $G$.
\end{proof}
\fi
\iflong
\begin{proof}
		\newcommand{\Pttn}{\ensuremath{C}} 	\newcommand{\len}{\ensuremath{\operatorname{len}}}
	\newcommand{\Spaths}{\ensuremath{S}} 	\newcommand{\Dvec}{\ensuremath{T}} 		In this proof, we build upon the algorithm from the proof of \Cref{thm:tw}. We start with the case of \DGIR and then we show what we need to change to deal also with \DGNS.
	
	\textbf{Individual rationality} requires that there is no agent with negative utility. To be able to ensure that we do not create outcomes with IR-deviations, we extend our signatures. Again, for a node $x$ of the tree-decomposition, we define a \emph{signature} as a triplet~$(\Pttn,\Spaths,\Dvec)$, where
	\begin{itemize}
		\item $\Pttn\colon \beta(x)\to [|\beta(x)|]$ is a partition of bag vertices into coalitions,
		\item $\Spaths\colon \beta(x)\times\beta(x)\to \cup_{i=1}^{\cdiam}\binom{N}{i}$ is a function assigning each pair of vertices that are part of the same coalition according to~$\Pttn$ a shortest path connecting them inside their coalition, and
		\item $\Dvec\colon [|\beta(x)|]\times\{(d_i)_{i=1}^{|\beta(x)|}\mid d_i\in[\cdiam]\cup\{-\infty\}\}\to[n]_0\times([\val(1)\cdot n]_0\cup\{\infty\})$ is a two dimensional table storing, for each coalition~$P$ and each distance vector~$\vec{d}$, the number of forgotten agents with distances~$\vec{d}$ to bag vertices in~$P$ and, additionally, also the worst utility over all agents with the distance vector $\vec{d}$. We call this agent a \emph{critical agent}.
	\end{itemize} 
	It follows that the only difference is the vector table~$\Dvec$ that additionally stores also utilities of critical agents. Observe that for each cell of $\Dvec$, there are at most $n\cdot n\cdot \val(1) = \Oh{n^2}$ possible different values and, thus, there are still at most $n^\Oh{2^{\tw}}$ different tables $\Dvec$. 
	
	We also extend our requirements on signature validity. In particular, we additionally require that for every $i\in[\beta(x)]$ and every $\vec{d}$ we have $\Dvec(i,\vec{d})[2] \not= \infty$ if and only if $\Dvec(i,\vec{d})[1] > 0$.
	
	Moreover, every record $R$ also has to additionally satisfy that $\forall\Pi_i^x\in\Pi^x$, such that $\Pi_i^x \cap \beta(x) = \emptyset$, there is no agent admitting IR-deviation in $\Pi_i^x$.
	
	The computation is then very similar to the algorithm from \Cref{thm:tw}, in each node, we only need to take care of the correct recalculation of the utility of critical agents stored in $\Dvec$. In what follows, we describe how this recomputation can be done in each node type. A high-level idea is that before a newly created record is added to $\mathcal{R}_x$, we recompute utilities of critical agents. It is not hard to observe that it is enough to store the utility of a single critical agent for every distance vector, as introduction of a new agent affects all agents in this distance equally. Moreover, every recalculation can be implemented efficiently. Both of these properties hold due to the fact that every bag $\beta(x)$ is a separator in $G$.
	
	In order to keep the description as condensed, as possible, we only describe how we recompute the utility of critical agents, because the computation of all the remaining components of records remains the same. We perform this recomputation as a final step before adding $R_x$ to $\mathcal{R}_x$. We denote as~$\Dvec_1$ the first element stored in each cell of $\Dvec$ (that is, the number of past agents with specific distances to the bag vertices), and~$\Dvec_2$ the second element -- the utility of a critical agent.
	
	\paragraph{Leaf Node.} In leaf nodes, the only valid signature is $(\emptyset,\emptyset,\emptyset)$, where $\emptyset$ represents that $\Pttn$, $\Spaths$, and $\Dvec$ have empty domains. Hence, there are no critical agents and the computation remains completely the same.
	
	\paragraph{Introduce Node.} Let $x$ be an introduce node, $y$ be its child, and $a$ be the introduced agent. For every record $R_x$ that would be added to $\mathcal{R}_x$, we recompute utilities of critical agents. This can be achieved as follows.
	
	If $|\Pi^x_a| = 1$, then there is nothing to do, as there are no critical agent in $a$'s coalition. Hence, we focus on the case when $|\Pi^x_a| > 1$. For every $i\in[|\beta(x)|]$ and every $\vec{d}=(d_1,\ldots,d_{|\beta(x)|})$, where w.l.o.g $d_{|\beta(x)|}$ represents distance to $a$, we set
	\[
	\Dvec_2^x(i,\vec{d}) = \begin{cases}
		\Dvec_2^y\left(i,(d_1,\ldots,d_{|\beta(x)-1|})\right) + \val\left(d_{|\beta(x)|}\right) \\
		 \text{~~~~~~~~~~~~~~~~~~~~~~~~~~~if }i = \Pttn(a) \land d_{|\beta(x)|} \not= -\infty \\
		\Dvec_2^y\left(i,(d_1,\ldots,d_{|\beta(x)-1|})\right) ~~~~~~~~ \text{otherwise}.
	\end{cases}
	\]
	
	\paragraph{Join Node.} 
	For every $i\in[|\beta(x)|]$ and every $\vec{d} \in \{(d_j)_{j=1}^{|\beta(x)|}\mid d_j \in [\cdiam]\cup\{-\infty\}\}$ we set
		\begin{multline*}
		\Dvec_2(i,\vec{d}) =\min_{w\in{y,z}}\Big\{
			\Dvec_2^w(i,\vec{d})\, + \\
			\sum_{\vec{d}'} \Dvec_1^{\{x,y\}\setminus w}(i,\vec{d}')\cdot\val\left(\min\{d_j+d^{'}_j\mid j\in[|\beta(x)|]\}\right)
		\Big\},
	\end{multline*}
	
	\paragraph{Forget Node.} The computation in the forget node need only a cosmetic change. We again distinguish two cases based on the size of $a$'s partition. 
	
	First, let the size of $a$'s partition in $R_y$ be one. Then, we check whether all agents in $\Pi^x_a$, including $a$, have non-negative utility. If it is the case, we add $R_x$ to $\mathcal{R}_x$. Otherwise, we skip $R_x$ as $a$ is the last element in this coalition and the utilities can never improve.
	
	Now, let the size of $a$'s partition in $R_y$ be greater than one and $\vec{d}^a = (d_1^a,\ldots,d_{|\beta(x)|}^a)$ be a vector of $a$'s distances to bag vertices in $G[\Pi_{a}^x]$. For every $i\in[\beta(x)]$ and every $\vec{d}=(d_1,\ldots,d_{|\beta(x)|})$ we set $\Dvec_2(i,\vec{d}) = \min\{\Dvec_2^y(i,(d_1,\ldots,d_{|\beta(x)|},j))\mid j \in[\cdiam]\cup\{-\infty\}\}$. As the final step, we update $\Dvec_2(\Pttn^y(a),\vec{d}^a)$ to be $\min\{\Dvec_2(\Pttn^y(a),\vec{d}^a),\util_a(\Pi_a^x)\}$.
	
	\bigskip

	\textbf{Nash stability} requires that no agent has the incentive to form a singleton coalition, as it was in the case of individual rationality, and, additionally, prefers its current coalition over joining any other coalition. We know how to deal with the first type of deviation from the above algorithm, and now we show how to extend signatures to be able to also deal with the second kind.
	
	Before we describe the new signatures, we observe a few properties of coalitions that give the intuition behind the signature definition. From an agent $a$ perspective, we recognize three types of coalitions. A \emph{past} coalition is a coalition such that all of its members are already forgotten in the nice tree decomposition. An \emph{active} coalition is a coalition such that at least one introduced but not forgotten agent appears in the same bag as $a$, and \emph{future} coalition is a coalition whose agents are not yet introduced and never appear in the same bag as $a$.
	
	Let $a$ be an agent and $x$ be a node of the tree decomposition. It is easy to see that agent's $a$ utility in any past coalition will always stay the same, as past coalitions remain the same for the rest of the algorithm. Hence, we can only store the $a$'s utility in the most preferred past coalition in our distance-vectors table. This is different for the active coalitions, as, most likely, there will be agents added to active coalitions in the future. Therefore, we will store (and recompute) the agent's utility in every currently active coalition at every step of the algorithm. Recall that there are at most $\beta(x) = \Oh{\tw}$ active coalitions in every node. For the future coalitions, it holds that $a$ will never deviate to them; let $b$ be any agent of some future coalition since $a$ and $b$ are never part of the same bag, and $b$ is not yet introduced, $a$ can never deviate to $\Pi_b$ as $b$ is not a neighbor of $a$ in $G$ and, thus, $a$'s utility in $\Pi_b$ would be $-\infty$. That is, we do not need to store any information about future coalitions.
	
	\paragraph{Leaf Node.} The computation in leaf nodes, again, remains the same. The only valid signature is $(\emptyset,\emptyset,\emptyset)$, where $\emptyset$ denotes that the domain of respective functions is empty. It follows that we do not need to compute anything for an empty graph.
	
	\paragraph{Introduce Node.} The introduction of a new agent $a$ can have two different outcomes -- either $a$ creates a new singleton coalition or joins an existing one.
	
	In the first case, for every neighbor of $a$ in the bag $\beta(x)$, we need to note that it can have the incentive to deviate to $a$'s coalition together with its current utility in $\Pi_a$. The same we do for $a$ and all the remaining active coalitions containing at least one neighbor of the agent $a$.
	
	On the other hand, if $a$ joins an existing coalition, we need to recompute the utility of all critical agents and remember $a$'s utility in other active coalitions, as $a$ can also admit NS-deviation in the future.
	
	\paragraph{Join Node.} The only additional thing we need to do in a join node is to correctly recompute the utilities of critical agents and their current utilities.
	
	\paragraph{Forget Node.} In a forget node, we again recognize two distinct cases. Either $a$ is the last member of its coalition and, hence, the $\Pi_a$ is becoming completely past, or there are still some agents of $\Pi_a$ in the bag $\beta(x)$.
	
	In the first case, we, for every critical agent, recompute its best utility in past coalitions as the maximum of the current best utility in past coalitions and utility in $\Pi_a$. Moreover, as we forget the last member of $\Pi_a$, we can also recognize whether there is an agent with NS-deviation in $\Pi_a$. If it is the case, we will drop this record. Otherwise, we add it to $\mathcal{R}_x$.
	
	In the latter case, we note in the distance vectors table that the agent $a$ was a member of coalition $\Pi_a$ and also store its utilities in a) its most preferred past coalition and b) all other active coalitions.
\end{proof}
\fi

\subsection{Fixed-Parameter Tractability}
A natural follow-up question to Theorems~\ref{thm:tw} and~\ref{thm:twstable} is whether one can improve these results to fixed-parameter algorithms. As our final contribution, we show that this is possible at least when dealing with simple scoring vectors, or on networks with stronger structural restrictions. To obtain both of these results, we first show that to obtain fixed-parameter tractability it suffices to have a bound on the size of the largest coalition in a solution (i.e., a welfare-optimal outcome).

\iflong
\begin{theorem}
\fi
\ifshort
\begin{theorem}[$\star$]
\fi
\label{thm:tw+coal-sz}
	For every fixed scoring vector~$\val$,  \DGWF, \DGIR, \DGNS  are \FPT w.r.t.\ the treewidth of the network plus the maximum coalition size.
\end{theorem}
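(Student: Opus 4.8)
The plan is to reuse the leaf-to-root dynamic programming from the proofs of \Cref{thm:tw} and \Cref{thm:twstable} essentially verbatim, and to show that once the program only ever forms coalitions of size at most $\sz$ (which we enforce by discarding any record in which a coalition would exceed $\sz$ members), the number of records stored at each node of the nice tree-decomposition drops from $n^{\Oh{2^{\tw}}}$ to $f(\tw,\sz)$ for some computable $f$. Inspecting that proof, the only two ingredients of a signature $(C,S,T)$ whose number of realizations is not already bounded by a function of $\tw$ are: (i) the family $S$ of stored shortest intra-coalitional paths, which ranges over roughly $n^{\Oh{\cdiam}}$ choices per pair of bag agents (because such a path may run through forgotten agents, whose identities are recorded), contributing the $n^{\Oh{\tw^2}}$ factor; and (ii) the table $T$, whose entries count forgotten agents and hence range over $[n]_0$, contributing the $n^{\Oh{2^{\tw}}}$ factor. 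I would eliminate both blow-ups using the coalition-size bound.

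For (ii) the fix is immediate: since each coalition has at most $\sz$ agents, every coalition has at most $\sz$ forgotten members, so each entry of $T$ lies in $[\sz]_0$. As there are at most $\tw+1$ active coalitions and at most $\cdiam^{\tw+1}$ distance vectors per coalition, the number of tables $T$ is at most $(\sz+1)^{(\tw+1)\cdot\cdiam^{\tw+1}} = f(\tw,\sz)$; for \DGIR and \DGNS the additional per-cell utility of the critical agent lies in the bounded range $\{-\sz\maxval,\dots,\sz\maxval\}$ (or $-\infty$), multiplying the count by only another factor depending on $\tw,\sz,\maxval$.

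For (i) I would replace the stored paths $S$ by the matrix of pairwise intra-coalitional distances among the bag agents of each coalition. Since for a fixed scoring vector every coalition has diameter at most $\cdiam$, each such distance lies in $\{1,\dots,\cdiam\}\cup\{\infty\}$, so there are at most $(\cdiam+1)^{\binom{\tw+1}{2}} = f(\tw)$ distance matrices. The correctness of this replacement rests on the fact that $\beta(x)$ is a separator of $G$: every agent already forgotten has all of its neighbours in $G^x$, so no later-introduced agent can be adjacent to it. Consequently, all distances and welfare contributions that still need to be computed for an active coalition are determined solely by the pairwise distances among its current bag agents together with the aggregated distance vectors of its forgotten members to the bag -- exactly the data retained by the distance matrix and $T$. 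The transitions are then as before, except that at an introduce node we recompute the distance matrix as the metric closure obtained by adding the new agent with its (guessed) adjacencies to the bag, and at a join node we take the metric closure of the entrywise minimum of the two children's matrices, combining the distance vectors across the two sides via $\operatorname{vdist}$ exactly as in \Cref{thm:tw}.

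The main obstacle I anticipate is verifying correctness of storing distances rather than explicit paths, specifically the propagation of ``future shortcuts'': when a newly introduced agent is adjacent to two bag agents of a coalition, it may shorten the distance between already-forgotten members and the bag. Because all members of a distance-vector class share the same vector, such an update acts uniformly on the whole class, so $T$ can be updated as a push-forward of the entrywise distance-vector map induced by the new bag distance matrix; one must check that this never discards information needed by a later introduce or join step, which again follows from the separator property (any shortest path from a forgotten agent to a later-introduced agent leaves the forgotten region through a current bag agent). Once this is established, each node holds $f(\tw,\sz)$ records and is processed in $f(\tw,\sz)\cdot n^{\Oh{1}}$ time (pairing records at join nodes), so since the decomposition has $\Oh{\tw\cdot n}$ nodes the whole algorithm runs in $f(\tw,\sz)\cdot n^{\Oh{1}}$ time, placing all three problems in \FPT.
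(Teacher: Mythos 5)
Your strategy for part (ii) is sound: with coalitions capped at $\sz$ members, every count in the table indeed lies in $[\sz]_0$, and the analogous bound for the critical-agent utilities works for \DGIR. The genuine gap is in part (i), in the sentence claiming that a future shortcut ``acts uniformly on the whole class.'' That uniformity holds for distances from forgotten agents \emph{to bag agents} (the new distance is $\min_c(\vec{d}(c)+\mathrm{newdist}(c,b))$, determined by the class vector), but it fails for distances \emph{between two forgotten agents}, and those are exactly the pairs whose welfare contribution your algorithm has already banked at an earlier introduce or join step using the then-current distance. Concretely, take bag agents $c_1,c_2$ at current intra-coalitional distance $2$, forgotten agents $u,u'$ adjacent only to $c_1$ (both in class $(1,3)$) and $v,v'$ adjacent only to $c_2$ (both in class $(3,1)$), where $u$ and $v$ additionally share a forgotten common neighbour, so $\dist(u,v)=2$ while $\dist(u',v')=4$. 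A later-introduced agent adjacent to both $c_1$ and $c_2$ lowers the bag distance to $1$, hence the via-bag distance of every cross pair to $3$: the banked contribution of $(u',v')$ must be corrected by $2\bigl(\val(3)-\val(4)\bigr)$, while that of $(u,v)$ must not change. Both pairs live in the same pair of distance-vector classes, so no push-forward of the class counts in $T$ can compute the correct aggregate correction; worse, two partial solutions with identical signatures can require different future corrections, so keeping only the maximum-welfare record per signature loses optimality. The same failure recurs at join nodes (one side's agents create shortcuts that re-price pairs of forgotten agents on the other side), and it equally corrupts the critical-agent utilities you rely on for \DGIR and \DGNS.

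The information you are missing is precisely the pairwise metric among the anonymous (forgotten) members of each active coalition, and once you store that you have essentially arrived at the paper's proof, which does not patch the \Cref{thm:tw} records at all. Instead, for every coalition meeting the bag it stores the entire induced subgraph $\topo(\pi)$ on its at most $\sz$ members with forgotten agents anonymized---only $2^{\Oh{\sz^2}}$ options per coalition, so $(2^{\Oh{\sz^2}})^{\Oh{\tw}}$ records per node---and it never corrects welfare but recomputes it: at introduce and join nodes it adds $\SW(\text{new topology})-\SW(\text{old topologies})$, which is automatically exact because all pairwise distances, including those between anonymous agents, can be read off the stored graph; the NS case is then handled by augmenting the topologies with anonymous ``friend'' vertices and min/max deviation utilities. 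Your approach could be repaired by additionally storing, per coalition, the histogram of current pairwise distances between forgotten agents broken down by pairs of classes (still bounded by a function of $\tw,\sz,\cdiam$), but that bookkeeping is more delicate than, and in the end equivalent to, keeping the anonymized topology outright.
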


\ifshort
\begin{proof}[Proof Sketch]
Similar to the previous ones, we design a dynamic programming (DP) on a nice tree decomposition, albeit the procedure and records are completely different. 

Given a subset of agents $X \subseteq N$, let $\Pi = (\pi_1,\pi_2, \dots, \pi_\ell)$ be a partition of a set containing $X$ and some ``anonymous'' agents. We use \emph{$\topo(\Pi)$} to denote a set of graph topologies on $\pi_1, \pi_2, \dots, \pi_\ell$ given $X$. That is, $\topo(\Pi) = \{ \topo(\pi_1), \dots , \topo(\pi_\ell)\}$ where $\topo(\pi_i)$ is some graph on $|\pi_i|$ agents, namely $\pi_i \cap X$ and $|\pi_i \setminus X|$ ``anonymous'' agents, for each $i \in [\ell]$. 
The maximum coalition size of any welfare maximizing partition is denoted by $\sz$.
Table, {\sf M}, contains an entry {\sf M}$[x, C, \topo(\Pi)]$ for every node $x$ of the tree decomposition,  each partition $C$ of $\beta(x)$, and each set of graph topologies $\topo(\Pi)$ given $\beta(x)$ where $\Pi$ is a partition of at most $\sz\cdot\tw$ agents. An entry of {\sf M} stores the maximum welfare in $G^x$ under the condition that the partition into coalitions satisfies the following properties.
	Recall that for a partition $P$ of agents and an agent $a$, we use $P_a$ to denote the coalition agent $a$ is part of in $P$. \begin{enumerate}
\item \emph{$C$ and $\Pi$ are consistent}, i.e., the partition of the bag agents $\beta(x)$ in $G^x$ is denoted by $C$ and $C_a = \Pi_a \cap \beta(x)$ for each agent $a \in \beta(x)$.
\item The coalition of agent $a \!\in \!\beta(x)$ in the graph $G^x$ is $\Pi_a$.
\item \emph{$\topo(\Pi)$ is consistent with $G^x$} i.e., the subgraph of $G^x$ induced on the agents in coalition of $a$ is $\topo(\Pi_a)$, i.e., $G^x[\Pi_a] = \topo(\Pi_a)$.
\end{enumerate}	
	 
Observe that we do not store $\Pi$. We only store the topology of $\Pi$ which is a graph on at most $\sz\cdot \tw$ agents. 
	We say an entry of {\sf M}$[x,C, \topo(\Pi)]$ is \emph{valid} if it holds that \begin{enumerate}
	\item \emph{$C$ and $\Pi$ are consistent}, i.e., $C_a = \Pi_a \cap \beta(x)$ for each agent $a\in \beta(x)$,
	\item Either $C_a = C_b$, or $C_a \cap C_b = \emptyset$ for each pair of agents $a,b \in \beta(x)$,
	\item \emph{$\topo(\Pi)$ is consistent with $G^x$ in $\beta(x)$}, i.e., for each pair of agents $a,b \in \beta(x)$ such that $\Pi_{a} = \Pi_{b}$, there is an edge $(a,b) \in \topo(\Pi_{a})$ if and only if $(a,b)$ is an edge in $G^x$.
	\end{enumerate}
	
	Once the table is computed correctly, the solution is given by the value stored in  {\sf M}$[r,C, \topo(\Pi)]$ where $C$ is empty partition and $\topo(\Pi)$ is empty. Roughly speaking, the basis corresponds to leaves (whose bags are empty), and are initialized to store $0$. For each entry that is not valid we store $-\infty$. To complete the proof, it now suffices to describe the computation of the records at each of the three non-trivial types of nodes in the decomposition and prove correctness.
\end{proof}
\fi

\iflong
\begin{proof}
\newcommand{\res}[1]{\ensuremath{_{#1}}}
\newcommand{\agents}[1]{\ensuremath{\cup{#1}}}

\textbf{Proof for \DGWF.} We begin by defining some notation. Let $\tw$ denote the treewidth of $G$ and $\sz$ denote the size of the maximum coalition in any welfare maximizing partition of $G$. 
Let $\Pi$ be a partition of a set of agents $N' \subseteq N$ and $\Pi'$ be a partition of $N'' \subseteq N'$. Then we say \emph{$\Pi$ complies with $\Pi'$} if  for each agent $a \in N''$, $\Pi'_a = \Pi_a$. We denote it as $\Pi\res{N'} = \Pi'$. Moreover, we use $\agents{\Pi}$ to denote the agents $N'$.
Let $\Pi = (\pi_1,\pi_2, \dots, \pi_\ell)$ be a partition of agents $N'$.
Given a subset of agents $X \subseteq N'$, we use \emph{$\topo(\Pi)$} to denote a set of graph topologies on $\pi_1, \pi_2, \dots, \pi_\ell$ for agents $X$. That is, $\topo(\Pi) = \{ \topo(\pi_1), \dots , \topo(\pi_\ell)\}$ where $\topo(\pi_i)$ is some graph on $|\pi_i|$ agents, namely $\pi_i \cap X$ and $|\pi_i \setminus X|$ ``anonymous'' agents, for each $i \in [\ell]$. 
Additionally, given $\topo(\Pi)$ and a coalition $\pi \in \Pi$, we use $\SW(\topo(\pi))$ to denote the contribution of the coalition $\pi \in \Pi$ to welfare according to the topology $\topo(\Pi)$, i.e.,
$$\SW(\topo(\pi)) = \sum_{a \in \pi} \sum_{b \in \pi \setminus \{a\}} \val(\dist_{\topo(\pi)} (a,b))$$
where $\dist_{\topo(\pi)} (a,b)$ is the distance between agents $a$ and $b$ in the coalition $\pi$ according to the topology $\topo(\pi)$.

	Our algorithm is based on dynamic programming (DP) on a nice tree decomposition.  Our DP table, {\sf M}, contains an entry {\sf M}$[x, C, \topo(\Pi)]$ for every node $x$ of the tree decomposition,  each partition $C$ of $\beta(x)$, and each set of graph topologies $\topo(\Pi)$ where $\Pi$ is a partition of at most $\sz\cdot\tw$ agents including $\beta(x)$ and the topologies are defined for set of agents $\beta(x)$. An entry of {\sf M} stores the maximum welfare in $G^x$ under the condition that the partition into coalitions satisfies the following properties.
	Recall that for a partition $P$ of agents and an agent $a$, we use $P_a$ to denote the coalition agent $a$ is part of in $P$. Then, for {\sf M}$[x, C, \topo(\Pi)]$ we consider partitions of \(G^x\) for which
\begin{enumerate}[label=P\arabic*]
\item \emph{$C$ and $\Pi$ are consistent}, i.e., the partition of the bag agents $\beta(x)$ in $G^x$ is denoted by $C$ and $C_a = \Pi_a \cap \beta(x)$ for each $a \in \beta(x)$;
\item the coalition of an agent $a \in \beta(x)$ in the graph $G^x$ is $\Pi_a$. Then $\beta(x) \subseteq \cup\Pi \subseteq V^x$ and $\Pi$ is a partition of $\cup\Pi$;
\item \emph{$\topo(\Pi)$ is consistent with $G^x$} i.e., the subgraph of $G^x$ induced on the agents in coalition of $a$ is $\topo(\Pi_a)$, i.e., $G^x[\Pi_a] = \topo(\Pi_a)$.
\end{enumerate}	
	 
Observe that we do not store $\Pi$. We only store the topology of $\Pi$ which is a graph on at most $\sz\cdot \tw$ agents. 
	We say an entry of {\sf M}$[x,C, \topo(\Pi)]$ is \emph{valid} if it holds that 
	\begin{enumerate}[label=V\arabic*]
	\item \emph{$C$ and $\Pi$ are consistent}, i.e., $C_a = \Pi_a \cap \beta(x)$ for each agent $a\in \beta(x)$,
	\item Either $\Pi_a = \Pi_b$, or $\Pi_a \cap \Pi_b = \emptyset$ for each pair of agents $a,b \in \beta(x)$,
	\item\label{it:welval3} \emph{$\topo(\Pi)$ is consistent with $G^x$ in $\beta(x)$}, i.e., for each pair of agents $a,b \in \beta(x)$ such that $\Pi_{a} = \Pi_{b}$, there is an edge $(a,b) \in \topo(\Pi_{a})$ if and only if $(a,b)$ is an edge in $G^x$.
	\end{enumerate}
	
	Once the table is computed correctly, the solution is given by the value stored in  {\sf M}$[r,C, \topo(\Pi)]$ where $C$ is empty partition and $\topo(\Pi)$ is empty. Roughly speaking, the basis corresponds to leaves (whose bags are empty), and are initialized to store $0$. For each entry that is not valid we store $-\infty$.
	
	\noindent
	\textbf{Introduce Node.}
	Given a node $x$ whose bag contains one more agent, $a$, than its child $y$, 
	we recursively define {\sf M}$[x,C, \topo(\Pi)]$. Let  $C' = C\res{\beta(y)}$ and $\topo(\Pi') = \{\topo(\pi) \mid \pi \in \Pi, \pi \neq \Pi_a\} \cup \topo(\Pi_a \setminus \{a\})$ where $\topo(\Pi_a \setminus \{a\})$ denotes the graph we get by deleting agent $a$ from $\topo(\Pi_a)$. Then, $\Pi' = \Pi\res{\beta(y)}$. Let $\Delta(\topo(\Pi), \topo(\Pi'))$ denote the difference in welfare due to adding $a$ to the coalition $\Pi_a$. Then, we define $\Delta(\topo(\Pi), \topo(\Pi'))= \SW(\topo(\Pi_a)) -\SW(\topo(\Pi_a \setminus \{a\}))$. Then,
	\begin{align*}\label{eq:welinode}\tag{Inode}
{\sf M}[x,C, \topo(\Pi)] 
&= {\sf M}[y,C', \topo(\Pi')] + \Delta(\topo(\Pi), \topo(\Pi')).
\end{align*}

\noindent
\textbf{Forget Node.} Given a node $x$ whose bag contains one fewer agent $a$ than its child $y$, we examine each coalition this agent can be in, and take the one that maximizes the entry of the child. So, we consider any coalition structure $C'$ of $\beta(y)$ such that $C'\res{\beta(x)} =C$.  If $C'_a = \{a\}$, i.e., agent $a$ is in a different coalition in $\beta(y)$ than any agent in $\beta(x)$, then we take maximum over all possible topologies, $\topo(\Pi'_a)$, containing $a$ of size at most $\sz$. Otherwise, $(\emptyset \neq) C'_a \setminus \{a\} \in C$. Then $\topo(\Pi') = \topo(\Pi)$. Hence, we define,

\begin{align*}\label{eq:welfnode}\tag{Fnode}
{\sf M}[x,C, \topo(\Pi)]
&= \max_{C', \topo(\Pi')}{\sf M}[y,C', \topo(\Pi')] 
\end{align*}

\noindent
\textbf{Join Node.} For a node $x$ with two children $y_1$ and $y_2$, we compute the sum of the entries associated with these children (where $C$ is the partition of coalitions in both $y_1$ and $y_2$), from which we modify the welfare of agents in $\beta(x)$ as they are computed in subgraphs $G^{y_1}$ and $G^{y_2}$ and not in $G^x$.

Let $a$ be an agent in $\beta(x)$. Then, $\topo(\Pi^1_a)$ and $\topo(\Pi^2_a)$ denote the two topologies of coalitions of agent $a$ in  $y_1$ and $y_2$, respectively, and  $C_a$ is a subgraph of both $\topo(\Pi^1_a)$ and $\topo(\Pi^2_a)$.
Let $\topo(\Pi^1_a) \cup \topo(\Pi^2_a)$ define the topology we get by identifying the agents of $C_a$ in the two topologies. Since the agents of $\beta(x)$ are not anonymous, so we can identify them.  Moreover, we get $\topo(\Pi^1) \cup \topo(\Pi^2)$ by doing so for each $b \in \beta(x)$.
Let $\Delta(\topo(\Pi), \topo(\Pi^1), \topo(\Pi^2))$ denote the difference between the total utility of the coalitions that contain agents of $\beta(x)$ in $G^x$ and the total utility of these coalitions in $G^{y_1}$ and $G^{y_2}$. Recall that total utility of the agents in a partition $P$ is denoted by $\SW(P)$. That is $\Delta(\topo(\Pi), \topo(\Pi^1), \topo(\Pi^2)) = \SW(\topo(\Pi^1) \cup \topo(\Pi^2))- \SW(\topo(\Pi^1) -\SW(\topo(\Pi^2)$. Then,

\begin{multline*}\label{eq:weljnode}\tag{Jnode}
{\sf M}[x,C, \topo(\Pi)] 
= \max_{\topo(\Pi^1),\topo(\Pi^2)}\{{\sf M}[y_1,C, \topo(\Pi^1)] \\
+  {\sf M}[y_2,C, \topo(\Pi^2)] + \Delta(\topo(\Pi), \topo(\Pi^1), \topo(\Pi^2))\}.
\end{multline*}

Let $x$ be a leaf node. Since, $\beta(x)=\emptyset$, we only have one value ${\sf M}[x,C, \topo(\Pi)]  = 0$. Properties P1-P3 holds due to the validity checks V1-V3.
The proof of correctness of the formulas is given in the next lemma.

\begin{lemma}
 The recurrences \eqref{eq:welinode},\eqref{eq:welfnode}, and \eqref{eq:weljnode} are correct. 
\end{lemma}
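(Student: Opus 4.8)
The plan is to prove all three recurrences simultaneously by induction on the nice tree decomposition, processed from the leaves towards the root. The invariant I would maintain is that for every node $x$, every partition $C$ of $\beta(x)$, and every topology set $\topo(\Pi)$, the entry ${\sf M}[x,C,\topo(\Pi)]$ equals the maximum value of $\SW$ over all partitions of $V^x$ into coalitions whose restriction to the coalitions meeting $\beta(x)$ realizes $(C,\topo(\Pi))$ in the sense of P1--P3 (with all distances taken inside $G^x$), and equals $-\infty$ if no such partition exists. The base case is a leaf, where $\beta(x)=\emptyset$ forces the empty partition and ${\sf M}=0$. For each of the three internal node types I would then establish the two inequalities: that the value produced by the recurrence is realized by an actual partition of $V^x$ (soundness), and that every partition of $V^x$ realizing $(C,\topo(\Pi))$ arises from child solutions whose welfare the recurrence accounts for (completeness).

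For the introduce node I would use the fact that the introduced agent $a$ has all of its $G^x$-neighbours inside $\beta(x)$, so the coalition subgraphs of $G^x$ and $G^y$ differ only by $a$ and its recorded incident edges. Deleting $a$ from its coalition in any partition realizing $(C,\topo(\Pi))$ yields a partition of $V^y$ realizing $(C',\topo(\Pi'))$, and since P3 guarantees that topology-distances equal the true induced-subgraph distances, the welfare changes by exactly $\SW(\topo(\Pi_a)) - \SW(\topo(\Pi_a\setminus\{a\})) = \Delta$, while all other coalitions are untouched; reinserting $a$ with its recorded edges reverses this, giving \eqref{eq:welinode} in both directions. The point to stress is that $\Delta$ correctly absorbs not only the new pairs incident to $a$ but also the possible shortening of distances between the remaining members of $\Pi_a$ caused by $a$, because $\SW(\topo(\Pi_a))$ recomputes every pairwise distance with $a$ present. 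For the forget node $V^x=V^y$, so the underlying partitions coincide and no distance or membership changes; hence no welfare is added and maximizing over all child entries projecting onto $(C,\topo(\Pi))$ is exactly right. Here I would separate the two cases of \eqref{eq:welfnode}: if $a$'s coalition retains another bag vertex, $a$ merely turns from a named into an anonymous vertex and the topology is unchanged; if $a$ was the last bag vertex of its coalition, that coalition becomes \emph{past}, and since $\beta(x)$ separates $V^x$ from the not-yet-introduced agents it can never be altered again, so it is safe to maximize over all of its admissible internal topologies of size at most $\sz$ now, its welfare having already been banked into the child entry.

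For the join node the two branches share exactly the separator $\beta(x)$, so each coalition meeting the bag is obtained by gluing its two branch-parts along their common (named) bag vertices, while coalitions avoiding the bag lie wholly in one branch and use disjoint forgotten-agent sets. Restricting a partition of $G^x$ to the two branches produces solutions realizing $(C,\topo(\Pi^1))$ and $(C,\topo(\Pi^2))$ with $\topo(\Pi^1)\cup\topo(\Pi^2)=\topo(\Pi)$; the welfare of a bag-meeting coalition in $G^x$ is $\SW(\topo(\Pi^1)\cup\topo(\Pi^2))$, whereas ${\sf M}[y_1,\cdot]+{\sf M}[y_2,\cdot]$ counts the within-branch pairs of such coalitions twice and omits the cross-branch pairs, so the exact correction is $\Delta(\topo(\Pi),\topo(\Pi^1),\topo(\Pi^2))$. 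Taking the maximum over all compatible branch-topologies then yields \eqref{eq:weljnode}.

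The step I expect to be the crux, and the one deserving the most careful argument, is showing that distances computed inside the stored topology always coincide with the true distances inside the final coalition, i.e., that recording only the agents already in $V^x$ never records a distance that a later introduction could shorten. This hinges on the separator property of $\beta(x)$: any shortest intra-coalition path between two agents already present in $V^x$ that leaves $V^x$ would have to re-enter through $\beta(x)$, so it is never shorter than a path already encoded in the current topology, and once a coalition loses all its bag vertices it is permanently closed. For the join node this additionally requires verifying that gluing the two branch-topologies along their shared bag vertices reproduces exactly $G^x[\Pi_b]$: no edge is lost, since every edge of $G^x$ appears in some bag and hence in one branch, and none is spuriously created, so that the recomputed distances in $\topo(\Pi^1)\cup\topo(\Pi^2)$ are genuinely the $G^x$-distances. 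Combining this with the fact that at the root all agents are forgotten and every coalition is closed then shows that the root entry equals the true optimum.
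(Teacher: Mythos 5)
Your proof is correct and follows essentially the same route as the paper's: induction over the nice tree decomposition with a case analysis on the three node types, establishing each recurrence by the two inequalities (restricting an optimal partition of $G^x$ to the child/children, and extending optimal child partitions back), with $\Delta$ accounting exactly for the recomputed welfare of the coalitions touched by the bag. Your explicit ``crux'' paragraph on the separator property---why stored topology distances stay faithful to the induced subgraphs and why a coalition that loses its last bag vertex is permanently closed---spells out a point the paper's proof leaves largely implicit, but it is the same underlying argument rather than a different approach.
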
\label{lem:welfare+coalDPcorrectness}
\begin{proof} 
 We analyze the three nodes Introduce, Forget, and Join separately and prove that ${\sf M}$ stores the maximum welfare. We will prove that each of the recurrence relation holds by showing inequality in both directions.

 \noindent
{\bf Introduce Node}: Let $x$ be a node whose bag contains one more agent, $a$, than its child $y$. Observe that,  $a$ is the new agent introduced in $C_a$ and hence, in $\Pi_a$. We only need to recompute the contribution of $\Pi_a$ in $\beta(x)$ towards welfare. Note that the rest of coalitions are in $C'$ as well and their contribution to welfare remains the same as ${\sf M}[y,C', \topo(\Pi')]$. Since we know the graph $\topo(\Pi_a)$ satisfy condition \eqref{it:welval3} in the initial validity check, clearly, the sum of utility of agents in $\Pi_a$ in $G^x$ is $\SW(\topo(\Pi_a))$. Moreover,  since $\Pi\res{\beta(y)} = \Pi'$ and we get $\topo(\Pi')$ by deleting $a$ from $\topo(\Pi)$, the total utility of agents in $\Pi_a \setminus \{a\}$ in $G^y$ is $\SW(\topo(\Pi_a\setminus\{a\})$. Therefore, the change in total utility is given by $\Delta(\topo(\Pi), \topo(\Pi'))$ since $G^y = G^x -a$.
Therefore, since ${\sf M}[y,C', \topo(\Pi')]$ is maximum welfare in $G^y$ given partition $\Pi'$, we have that  ${\sf M}[y,C', \topo(\Pi')] + \Delta(\topo(\Pi), \topo(\Pi'))$ is at least ${\sf M}[x,C, \topo(\Pi)]$. 

Now since ${\sf M}[x,C, \topo(\Pi)]$ stores the maximum welfare for $\Pi$ in $G^x$ and $G^y$ is a subgraph of $G^x$, we have that any partition in $G^y$ can have total utility at most ${\sf M}[x,C, \topo(\Pi)]$ barring the change in total utility of agents due to removing $a$ from its coalition in $\Pi$. Therefore, ${\sf M}[y,C', \topo(\Pi')] + \Delta(\topo(\Pi), \topo(\Pi')) \leq {\sf M}[x,C, \topo(\Pi)]$. Hence, we prove the formula.

\noindent
{\bf Forget Node}: Let $x$ be a node whose bag contains one fewer agent $a$ than its child $y$.   Observe that, for any partition $C'$ and topology $\topo(\Pi')$ such that $C = C'\res{\beta(x)}$ and $\topo(\Pi)$ is the restriction of $\topo(\Pi')$ to $\beta(x)$, each partition that is considered in the definition of ${\sf M}[x,C', \topo(\Pi')] $ is also considered in the definition of ${\sf M}[x,C, \topo(\Pi)]$. This implies, that for any such $C'$ and $\topo(\Pi')$, we have ${\sf M}[x,C, \topo(\Pi)]  \geq {\sf M}[y,C', \topo(\Pi')]$, that is, ${\sf M}[x,C, \topo(\Pi)] \geq \max_{C', \topo(\Pi')} {\sf M}[y,C', \topo(\Pi')]$.

On the other hand, let $\hat{\Pi}$ be a partition maximizing the welfare in $G^x$ given that the coalitions of agents of $\beta(x)$ is $\Pi$. Let us define $C'$ as follows for an agent $b \in \beta(y)$\[C'_b = 
\hat{\Pi}_b \cap \beta(y) \text{ ~~for each }  b \in \beta(y) \]
and $\topo(\Pi')$ is the same as the graph $G^x[\agents{\Pi}]$. Note that $C'\res{\beta(x)}$ is the same as $C$.
Therefore, $C'$ and $\topo(\Pi')$ are one of the partitions and topologies that are considered in the definition of ${\sf M}[y,C', \topo(\Pi')]$, we have, $\max_{C', \topo(\Pi')} {\sf M}[y,C', \topo(\Pi')] \geq {\sf M}[x,C, \topo(\Pi)]$. Therefore, we prove the equality.

\noindent
{\bf Join Node}: Let $x$ be a join node with children $y_1$ and $y_2$. So, $\beta(x) = \beta(y_1) = \beta(y_2)$. 
 Recall that given $\topo(\Pi)$, the total utility of coalition of $a$ is denoted by $\SW(\topo(\Pi_a))$. Therefore, ${\sf M}[y_1,C, \topo(\Pi^1)] + {\sf M}[y_2,C, \topo(\Pi^2)]$ adds the contributions of partial coalitions of $a$ present in the subgraphs $G^{y_1}$  and $G^{y_2}$. Recall that $G^{y_1} \cap G^{y_2} =\beta(x)$. So we deduct the contributions in the subgraphs, i.e., we deduct $\SW(\topo(\Pi^1) 
 +\SW(\topo(\Pi^2)$ and add the contribution  $\SW(\topo(\Pi^1) \cup \topo(\Pi^2))$ of the (possibly) larger coalitions in $G^x$  to compute the welfare of the entry ${\sf M}[x,C, \topo(\Pi)]$. Hence, we add $\Delta(\topo(\Pi), \topo(\Pi^1), \topo(\Pi^2))$ to compute the welfare for $\topo(\Pi)$ in $G^x$.  Recall ${\sf M}[x,C, \topo(\Pi)]$ stores the maximum welfare of $G^x$. Therefore, ${\sf M}[x,C, \topo(\Pi)] 
\geq \max_{\topo(\Pi^1),\topo(\Pi^2)}\{{\sf M}[y_1,C, \topo(\Pi^1)] 
+  {\sf M}[y_2,C, \topo(\Pi^2)] \\
+ \Delta(\topo(\Pi), \topo(\Pi^1), \topo(\Pi^2))\}$.
 
For the other direction, let $\hat{\Pi}$ be an welfare maximizing partition of $G^x$ such that $\Pi_a = \hat{\Pi}(a)$  and the topology of $G^x[\hat{\Pi}_a]$ is the same as $\topo(\Pi_a)$ for each $a \in \beta(x)$.
Since $G^{y_1}$ and $G^{y_2}$ are subgraphs of $ G^x$, we define $\hat{\Pi}^1$ and $\hat{\Pi}^2$ to be the coalitions obtained by restricting $\hat{\Pi}$ to $G^{y_1}$ and $G^{y_2}$, respectively. Note that the contribution of coalitions containing agents in $\beta(x)$ to welfare in $G^x$ is 
\[ \sum_{a\in \beta(x)} \sum_{b \in \hat{\Pi}_a} \util^{\val}(b,\hat{\Pi})\]
\[=\sum_{a\in \beta(x)} \sum_{b \in \hat{\Pi}_a} \util^{\val}(b,\hat{\Pi}^1 \cup \hat{\Pi}^2)).\]

The last equality follows since $G^x$ is the union of $G^{y_1}$ and $G^{y_2}$.
 However, in $\sum_{b \in \hat{\Pi}_a} \util^{\val}(b,\hat{\Pi}^i)$ we add the contribution of coalition of $a$ only in the subgraph $G^{y_i}$ for each $i \in [2]$. Therefore we deduct $\sum_{b \in \hat{\Pi}_a} \util^{\val}(b,\hat{\Pi}^1) + \sum_{b \in \hat{\Pi}_a} \util^{\val}(b,\hat{\Pi}^2)$ and add the contribution of the coalition of $a$ in $G^x$, yielding $\sum_{b \in \hat{\Pi}_a} \util^{\val}(b,\hat{\Pi}^1 \cup \hat{\Pi}^2)) - \sum_{b \in \hat{\Pi}_a} \util^{\val}(b,\hat{\Pi}^1) - \sum_{b \in \hat{\Pi}_a} \util^{\val}(b,\hat{\Pi}^2)$.
 
 Therefore, ${\sf M}[x,C, \topo(\Pi)]$ is at most ${\sf M}[y_1,C, \topo(\hat{\Pi}^1)] \\
+  {\sf M}[y_2,C, \topo(\hat{\Pi}^2)] + \Delta(\topo(\hat{\Pi}), \topo(\hat{\Pi}^1), \topo(\hat{\Pi}^2))$. Hence,  ${\sf M}[x,C, \topo(\Pi)] 
\leq \max_{\topo(\Pi^1),\topo(\Pi^2)}\{{\sf M}[y_1,C, \topo(\Pi^1)] \\
+  {\sf M}[y_2,C, \topo(\Pi^2)] + \Delta(\topo(\Pi), \topo(\Pi^1), \topo(\Pi^2))\}$.
 Therefore, we get the equality given in the recurrence.

This concludes the description and the proof of correctness of the recursive formulas for computing the values of the table ${\sf M}$.
\end{proof}

There are at most $\tw^\Oh{\tw}$ partitions of $\beta(x)$ for a node $x$ and $\topo(\Pi)$ is a set of at most $\tw$ graph topologies since there are at most $\beta(x)$ coalitions in $\Pi$, each on at most $\sz$ agents since the coalition size are bounded by $\sz$. Then, the number of possible topologies is $(2^{\Oh{\sz^2}})^{\tw}$. Each entry can be computed in time  $(\tw \cdot 2^{\Oh{\sz^2}})^{\Oh{\tw}}$. Each of the validity conditions can be checked in polynomial time. Thus, the algorithm takes time $(\tw \cdot 2^{\Oh{\sz^2}})^{\Oh{\tw}}$. This concludes the proof of the theorem for \DGWF.

Next, we give a proof sketch for \DGIR since it is very similar to \DGWF and we present a complete proof for \DGNS.

\textbf{For \DGIR} the recurrence is the same for Forget node as in \DGWF. For Introduce (and Join node) we do an extra check that if by adding a new agent $a$ (resp. by joining the coalitions in two subgraphs), the utility of some agent is less than zero in $\topo(\Pi)$, then we set ${\sf M}[x, C, \topo(\Pi)] $  to be $-\infty$. The remaining computation remains the same. The correctness follows from previous lemma and the fact that utility of no agents changes at a forget node and if utility of an agent changes at an Introduce or a Join node, then this agent must be in $\Pi$.

\newcommand{\minU}[1]{\ensuremath{\underline{u}_{#1}}}
\newcommand{\maxU}[1]{\ensuremath{\overline{u}_{#1}}}
\newcommand{\extn}[1]{\text{extn-}\ensuremath{#1}}

\newcommand{\argmin}{\operatorname{argmin}}

\textbf{Proof for \DGNS.} We will use a DP table similar to the algorithm for \DGWF with some more book keeping to track possible Nash deviations.
Given a subset of agents $X \subseteq N$, let $\Pi$ be a partition of a set of agents $N'$ containing $X$ and some ``anonymous'' agents. For each agent $a \in N'$, we use $F(a)$ to denote an anonymous agent that is a neighbor of $a$ and refer to it as friend of $a$. We define the set $\extn{\Pi}$ that has the agents of $\Pi$ and the anonymous friend for each agent in $\Pi$, i.e., $\extn{\Pi} = \cup_{a\in N'} F(a) \cup N'$.  Then, define topology of $\extn{\Pi}$, \emph{$\topo(\extn{\Pi})$ given $X$} to be a graph structure on $\extn{\Pi}$ such that $(a, F(a))$ is the only edge incident to $F(a)$ in the topology. 

 Intuitively, given a node $x$ in the tree decomposition, $\Pi$ is the partition of the agents in $G^x$ into coalitions such that each $\pi \in Pi$ contain some agent from $\beta(x)$. Moreover, to decide whether any agent that is forgotten and is a neighbor to some coalition, has a Nash deviation to some coalition in $\Pi$, we keep a indicator agent $F(a)$ for each agent that is part of $\Pi$ and we store the minimum utility this indicator agent has in its coalition.
 
Let $\Pi$ be  a coalition structure on a subset of agents $N'$ in  $G^x$. Recall We use $\cup\Pi$ to denote $N'$, we will use $\minU{\Pi}: \cup\Pi \rightarrow \Z$ to denote a mapping that stores for each agent $a \in \cup\Pi$, the utility of a neighbor $b \in N_{G^x}(a) \setminus \cup\Pi$ that has minimum utility among the neighbors of $a$ in $N_{G^x}(a) \setminus \cup\Pi$. That is, $b = \argmin_{b' \in N_{G^x}(a) \setminus \cup\Pi} u(b', \hat{\Pi})$ where $\hat{\Pi}$ is a partition of agents in $G^x$.
Additionally, to decide if any agent of $\Pi$ has a Nash deviation to some coalition of $G^x$ (possibly, forgotten), we will use $\maxU{\Pi}: \cup\Pi \rightarrow \Z$ to denote a mapping that stores the maximum utility that an agent $a \in N'$ would receive in a coalition $C$ in $G^x$. The coalition $C$ can be a coalition in $N_{G^x}(a) \setminus \cup\Pi$.
 Furthermore, we will use a boolean valued function $f: \Pi \rightarrow \{0,1\}$ that maps each coalition in $\Pi$ to $0$ or $1$, to indicate if more agents will be added to a coalition in $\Pi$. If for a coalition $\pi \in \Pi$, we have $f(\pi)=1$, then we say it is \emph{complete}.
We say $\Pi$ is \emph{partial Nash Stable} in $G^x$ if no agent in a complete coalition has a Nash deviation to another complete coalition in $G^x$.

We define the DP table \textsf{M}$[x, C, \topo(\extn{\Pi}), \minU{\Pi} , \maxU{\Pi},f]$ for every node $x$ of the tree decomposition, each partition $C$ of $\beta(x)$, each topology $\topo(\extn{\Pi})$ given $\beta(x)$ where $\Pi$ is a partition of at most $\sz\cdot \tw$ agents, and each functions $\minU{\Pi}$, $\maxU{\Pi}$ and $f$ where $\minU{\Pi}: \cup_{a\in \beta(x)} \Pi_a \rightarrow \Z$, $\maxU{\Pi}: \cup_{a\in \beta(x)} \Pi_a \rightarrow \Z$, and $f: \Pi \rightarrow \{0,1\}$. 
An entry of {\sf M} stores the maximum welfare in $G^x$ under the condition that the partition into coalitions satisfies the following properties.

\begin{enumerate}[label=P\arabic*]
\item\label{it:nashP1} \emph{$C$ and $\Pi$ are consistent}, i.e., the partition of the bag agents $\beta(x)$ in $G^x$ is denoted by $C$ and $\Pi$ is a partition satisfying, $C_a = \Pi_a \cap \beta(x)$ and $\Pi_a \subseteq \cup\Pi \subseteq V^x$ for each agent $a \in \beta(x)$. 
\item\label{it:nashP2} \emph{$\topo(\Pi)$ is consistent with $G^x$} i.e., the subgraph of $G^x$ induced on the agents in coalition of $a$ is $\topo(\Pi_a)$, i.e., $G^x[\Pi_a] = \topo(\Pi_a)$.

\item\label{it:nashP3} $\extn{\Pi}$ is consistent with $G^x$, i.e., for each agent $b \in \Pi_a$, we have that $F(b) = \emptyset$ if $N_{G^x}(b) \setminus \Pi_a = \emptyset$ for each $a \in \beta(x)$.

\item\label{it:nashP4} For each agent $a \in \beta(x)$ if $f(\Pi_a) = 0$, then more agents will be added to $\Pi_a$, otherwise no more agents will get added to the coalition $\Pi_a$ in future.

\item\label{it:nashPstable} $\Pi$ is partial Nash stable in $G^x$.
\end{enumerate}	
	 
Observe that we do not store $\Pi$. We only store the topology of $\extn{\Pi}$ which is a graph on at most $2\sz\cdot \tw$ agents. Given a partition $\Pi$, we will use $\Pi_{a\rightarrow\pi}$ to denote the partition formed due to deviation of an agent $a$ from its coalition $\Pi_a$ to another coalition $\pi \in \Pi \setminus \{\Pi_a\}$. 

	We say an entry of {\sf M}$[x,C, \topo(\extn{\Pi}),\minU{\Pi} , \maxU{\Pi},f]$ is \emph{valid} if the following conditions hold.
	
	 \begin{enumerate}[label=V\arabic*]
	\item\label{it:nashV1} \emph{$C$ and $\Pi$ are consistent}, i.e., $C_a = \Pi_a \cap \beta(x)$ for each agent $a\in \beta(x)$.
Moreover, either $\Pi_a = \Pi_b$, or $\Pi_a \cap \Pi_b = \emptyset$ for each pair of agents $a,b \in \beta(x)$ in $\topo(\extn{\Pi})$.
	
	\item\label{it:nashV2} \emph{$\topo(\extn{\Pi})$ is consistent with $G^x$ in $\beta(x)$}, i.e., for each pair of agents $a,b \in \beta(x)$ there is an edge $(a,b) \in \topo(\extn{\Pi})$ if and only if $(a,b)$ is an edge in $G^x$. Moreover, for each agent $a$ in $\Pi$, the unique edge incident to $F(a)$ is $(a, F(a))$.

	\item\label{it:nashV3} Let $x$ be a Introduce node whose bag contains one more agent, $a$, than its child. Then, $\minU{\Pi}(a) = L$ where $L$ is constant denoting the maximum welfare achieved any agent in a welfare maximizing partition of $G$ (since currently $a$ has no agent that is not in $\Pi$)  	and $\maxU{\Pi}(a) = u(a, \Pi)$ in the graph $\topo(\extn{\Pi})$.
	
	\item\label{it:deviate} For a pair of agents $a,b \in \extn{\Pi}$ such that $\Pi_a \neq \Pi_b$, if $f(\Pi_a) = f(\Pi_b) = 1$, then $u(a, \Pi) \geq u(a, \Pi_{a\rightarrow\Pi_b})$ and $u(b, \Pi) \geq u(b, \Pi_{b\rightarrow\Pi_a})$. Moreover, for each agent $a \in \extn{\Pi}$ such that $f(\Pi_a) = 1$, it holds that $u(a, \Pi) \geq 0$ in the graph $\topo(\extn{\Pi})$.
	
	\item\label{it:nashVstable} If $f(\Pi_a) = 1$ for some agent $a \in \beta(x)$, then for each agent $b$ in $\Pi$ it holds that $u(b, \Pi) \geq \maxU{\Pi}(b)$ and $\minU{\Pi}(b) \geq u(F(b), \Pi_{F(b) \rightarrow \Pi_a})$. That is, for each agent in $\Pi$ the maximum utility stored for an agent is at least its current utility and the stored utility of a friend must be at least the utility the friend would get by deviating to $\Pi_a$.
	\end{enumerate}
	
	The last two conditions of validity together imply that no agent in $G^x$ would prefer to deviate to a completed coalition that it is not part of.
	
	Once the table is computed correctly, the maximum welfare is given by the value stored in  {\sf M}$[r,C, \topo(\extn{\Pi}),\minU{\Pi} , \maxU{\Pi},f]$ where $C$ is empty partition and $\topo(\extn{\Pi})$ is empty.
	The basis corresponds to leaves (whose bags are empty), and are initialized to store $0$. For each entry that is not valid we store $-\infty$. 
	 To complete the proof, it now suffices to describe the computation of the table entries at each of the three non-trivial types of nodes in the decomposition and prove correctness.
	
	\noindent
	\textbf{Introduce Node.}
	Given a node $x$ whose bag contains one more agent, $a$, than its child $y$, 
	we recursively define {\sf M}$[x,C, \topo(\extn{\Pi}),\minU{\Pi} , \maxU{\Pi},f]$. 
	Let $C'$ be a partition of $\beta(y)$ such that  $C' =  C\res{\beta(y)}$, $\topo(\extn{\Pi'})$ denotes the graph we get by deleting agents $a$ and $F(a)$ from $\topo(\extn{\Pi})$, $\minU{\Pi'}$ is the function we get by removing $a$ from the domain of $\minU{\Pi}$, and $f'(\Pi_a\setminus\{a\}) = 0$ and $f'(\pi) = f(\pi)$ for each $\pi \in \Pi\setminus \{\Pi_a\}$. Let $S$ be the set of functions $\maxU{\Pi'}$ that satisfy the following:
\begin{align*}
\max \{ u(b, \Pi), \maxU{\Pi'}(b)\} = \maxU{\Pi}(b)  &\text{ ~~for each $b \in \Pi_a \setminus \{a\}$}\\
\maxU{\Pi'}(b) = \maxU{\Pi}(b) & \text{ ~~ otherwise}.
\end{align*}
	
 That is, we check if $\maxU{\Pi'}(b)$ is greater than the utility of $b$ in its current coalition $\Pi_a$  for each agent in coalition of $a$ and store the larger value in $\maxU{\Pi}(b)$. For the rest of agents the utility doesn't change, so, it remains the same. 	Let $\Delta(\topo(\extn{\Pi}), \topo(\extn{\Pi'}))$ denote the difference in welfare due to adding $a$ to the coalition $\Pi_a$. Then, we define $\Delta(\topo(\extn{\Pi}), \topo(\extn{\Pi'}))= \SW(\topo(\Pi_a)) -\SW(\topo(\Pi_a \setminus \{a\}))$ where $\topo(\Pi_a)$ is the graph induced on agents of $\Pi_a$ in the topology of $\extn{\Pi}$. Then,
	
	\begin{align*}
&{\sf M}[x,C, \topo(\extn{\Pi}), \minU{\Pi},\maxU{\Pi},f] \\
&=\max_{\maxU{\Pi'}\in S} \{{\sf M}[y,C', \topo(\extn{\Pi'}),\minU{\Pi'} , \maxU{\Pi'} ,f'] \\
&~~~~~~~~+ \Delta(\topo(extn{\Pi}), \topo(\extn{\Pi'})) \} \end{align*}

Since this is a valid entry, condition \eqref{it:nashPstable} holds, i.e., if $a$ is the last agent that is added to $\Pi_a$, then no agent would prefer to deviate from $\Pi_a$ and no agent in $G^x$ that is not in $\Pi$ would prefer to deviate to the coalition $\Pi_a$ after adding $a$ to it. The validity condition~\eqref{it:deviate} ensured that no agent in $\Pi$ whose coalition is complete prefer to deviate to $\Pi_a$.

\noindent
\textbf{Forget Node.}  Let $x$ be a node whose bag contains one fewer agent $a$ than its child $y$. 
We examine each coalition agent $a$ can be in (this the set of coalitions any of \(a\)'s neighbors is in), and take the one that maximizes the entry of the child. So, we consider any coalition structure $C'$ of $\beta(y)$ such that when $C'\res{\beta(x)} = C$. If $C'_a = \{a\}$, i.e., agent $a$ is in a different coalition in $\beta(y)$ than any agent in $\beta(x)$, then we take the maximum over all possible topologies, $\topo(\extn{\Pi'})$ where the coalition containing $a$ has, additionally, at most $\sz-1$ anonymous agents. 
Moreover, $\minU{\Pi'}$ and $\maxU{\Pi'}$ are two functions that satisfy the following: for each $b \in N(a) \cap \beta(x)$, it holds that $\minU{\Pi}(b) = \min\{ \minU{\Pi'}(b), \min_{a' \in \Pi'_a\cap N(b)} u(a', \Pi')\}$  and $\maxU{\Pi}(b) = \max\{\maxU{\Pi'}(b), u(b, \Pi'_{b\rightarrow\Pi'_a})\}$. 
The first condition says that we will consider only those functions $\minU{\Pi'}$ such that for each $b \in N(a) \cap \beta(x)$ the minimum utility of a neighbor of $b$ considering the forgotten agents of $\Pi'_a$ is the same as $\minU{\Pi}(b)$, and the later condition says that $\maxU{\Pi}$ must be a modification of  $\maxU{\Pi'}$ based on the maximum utility $b$ would receive by deviating to $\Pi_a$. 
Additionally, let $f'$ be defined as follows: $f'(\Pi'_a) = 1$ and $f'(\pi) = f(\pi)$ for each $\pi \in \Pi \setminus \{\Pi_a\}$.

Otherwise, $(\emptyset \neq) C'_a \setminus \{a\} \in C$. Then $\topo(\extn{\Pi'}) = \topo(\extn{\Pi})$, $\minU{\Pi'} = \minU{\Pi}$, and $\maxU{\Pi'} = \maxU{\Pi}$ . Additionally, function $f'$ is defined as follows: $f'(\Pi'_a) = 0$ and $f'(\pi) = f(\pi)$ for each $\pi \in \Pi \setminus \{\Pi_a\}$.
 Hence, we define,

\begin{align*}
&{\sf M}[x,C, \topo(\extn{\Pi}), \minU{\Pi},\maxU{\Pi},f]\\
&= \max_{C',\topo(\extn{\Pi'}),\minU{\Pi'} , \maxU{\Pi'} ,f'}{\sf M}[y,C', \topo(\extn{\Pi'}),\minU{\Pi'} , \maxU{\Pi'} ,f'] 
\end{align*}

\noindent
\textbf{Join node.} For a node $x$ with two children $y_1$ and $y_2$, we compute the sum of the entries associated with these children (where $C$ is the partition of coalitions in both $y_1$ and $y_2$), from which we modify the welfare of agents in $\beta(x)$ as they are computed in subgraphs $G^{y_1}$ and $G^{y_2}$ and not in $G^x$.

We define $\topo(\extn{\Pi^i})$, $\minU{\Pi^i}$, $\maxU{\Pi^i}$, And $f_i$ for each $i \in [2]$ to be the ones that satisfy the following conditions.
For each agent $a \in \beta(x)$, it holds  that  $\topo(\extn{\Pi^1})$ and $\topo(\extn{\Pi^2})$ restricted to the topology of $C_a$ are the same.  Let $\topo(\extn{\Pi^1}) \cup \topo(\extn{\Pi^2})$ define the topology we get by joining the two topologies at $\beta(x)$, i.e., by identifying the agents in $\beta(x)$ in the two topologies.

For each $i \in[2]$, and each agent $a \in \extn{\Pi^i} \setminus \beta(x)$,  it holds that  $\minU{\Pi^i}(a) = \minU{\Pi}(a)$ and for each agent $a\in \beta(x)$, it holds that $\minU{\Pi}(a) = \min\{\minU{\Pi^1}(a),\minU{\Pi^2}(a)\}$. For each agent $a \in \extn{\Pi^i} \setminus \beta(x)$, we have $\maxU{\Pi}(a) = \max\{\maxU{\Pi^i}(a), u (a, \Pi)\}$ for $i \in [2]$ and each agent $a \in \beta(x)$, it holds that $\maxU{\Pi}(a) = \max\{\maxU{\Pi^1}(a),\maxU{\Pi^2}(a), u (a, \Pi)\}$ where $u(a, \Pi)$ is computed in the graph $\topo(\extn{\Pi})$. 
Finally, if $\Pi^i_a = C_a$ that is all agents in the coalition of $a$ are in $\beta(x)$ in $\Pi$, $\Pi^1$, and $\Pi^2$, then $f_1 (\Pi^i_a)= f_2 (\Pi^i_a) = f(\Pi_a)$ for each $i \in [2]$. Otherwise, more agents get added to the coalition of $a$ in $\Pi$ than in $\Pi^1$ and $\Pi^2$. So $f_1(\Pi^i_a) = f_2(\Pi^i_a) = 0$ for each $a \in \beta(x)$.
Let $\Delta(\topo(\extn{\Pi}), \topo(\extn{\Pi^1}), \topo(\extn{\Pi^2}))$ denote the difference between the total utility of the coalitions that contain agents of $\beta(x)$ in $G^x$ and the total utility of these coalitions in $G^{y_1}$ and $G^{y_2}$. Recall that total utility of the agents in a partition $P$ is denoted by $\SW(P)$.
 That is $\Delta(\topo(\extn{\Pi}), \topo(\extn{\Pi^1}), \topo(\extn{\Pi^2})) = \SW(\topo(\Pi^1) \cup \topo(\Pi^2))- \SW(\topo(\Pi^1) -\SW(\topo(\Pi^2)$ where $\topo(\Pi^i)$ denotes the  induced subgraph of $\topo(\extn{\Pi^i})$ on the agents in $\Pi^i$ for each $i \in [2]$.

\begin{align*}
&{\sf M}[x,C,\topo(\extn{\Pi}), \minU{\Pi},\maxU{\Pi},f]\\
&= \max_{\substack{\topo(\extn{\Pi^i}),\minU{\Pi^i} , \maxU{\Pi^i} ,f_i:\\ i \in [2]}} \{{\sf M}[y_1,C,\topo(\extn{\Pi^1}), \minU{\Pi^!},\maxU{\Pi^1},f_1] \\
&~~~~~~~~~~~~~~~~~~~~~~~~~~~~~+  {\sf M}[y_2,C, \topo(\extn{\Pi^2}), \minU{\Pi^2},\maxU{\Pi^2},f_2] \\
& ~~~~~~~~~~~~~~~~~~~~~~~~~~~~~+ \Delta(\topo(\Pi),\!\topo(\Pi^1),\!\topo(\Pi^2)\!)\}. \end{align*}

Next we prove the correctness of each of the above formulas by showing inequality in both direction as above.
The argument to show that the entry ${\sf M}[x,C,\topo(\extn{\Pi}), \minU{\Pi},\maxU{\Pi},f]$ stores the maximum welfare is similar to Lemma~\ref{lem:welfare+coalDPcorrectness}. The stability of $\Pi$ follows from the validity checks.
We analyze the Introduce separately.  

Let $x$ be a leaf node. Since, $\beta(x)=\emptyset$, we only have one value ${\sf M}[x,C, \topo(\Pi)]  = 0$.

 \noindent
{\bf Introduce Node}: Let $x$ be a node whose bag contains one more agent, $a$, than its child $y$. Observe that,  $a$ is the new agent introduced in $C_a$ and hence, in $\Pi_a$. 
We only need to recompute the contribution of $\Pi_a$ in $\beta(x)$ towards welfare. Note that the rest of coalitions are in $C'$ as well and their contribution to welfare remains the same as ${\sf M}[y,C',\topo(\extn{\Pi'}), \minU{\Pi'},\maxU{\Pi'},f']$. Since we know the graph $\topo(\extn{\Pi})$ satisfy condition \eqref{it:nashV2} in the initial validity check, clearly, the sum of utility of agents in $\Pi_a$ in $G^x$ is $\SW(\topo(\Pi_a))$. Moreover,  since $\Pi\res{\beta(y)} = \Pi'$ and we get $\topo(\extn{\Pi'})$ by deleting $a$ from $\topo(\extn{\Pi})$, the total utility of agents in $\Pi_a \setminus \{a\}$ in $G^y$ is $\SW(\topo(\Pi_a\setminus\{a\})$. Therefore, the change in total utility is given by $\Delta(\topo(\extn{\Pi}), \topo(\extn{\Pi'}))$ since $G^y = G^x -a$.
Therefore, since ${\sf M}[y,C', \topo(\extn{\Pi'}), \minU{\Pi'},\maxU{\Pi'},f']$ is maximum welfare in $G^y$ given partition $\Pi'$, and $\minU{\Pi'}$ and $f'$ are the same as $\minU{\Pi}$ and $f$ restricted to $\beta(y)$, respectively, we have that ${\sf M}[x,C,\topo(\extn{\Pi}), \minU{\Pi},\maxU{\Pi},f]$ is at most ${\sf M}[y,C', \topo(\extn{\Pi'}), \minU{\Pi'},\maxU{\Pi'},f']+ \Delta(\topo(\extn{\Pi}), \topo(\extn{\Pi'}))$. 
Therefore, ${\sf M}[x,C,\topo(\extn{\Pi}), \minU{\Pi},\maxU{\Pi},f]$ is at most $\max_{\maxU{\Pi'}\in S} \{{\sf M}[y,C', \topo(\extn{\Pi'}), \minU{\Pi'},\maxU{\Pi'},f']+ \Delta(\topo(\extn{\Pi}), \topo(\extn{\Pi'}))\}$.

Now since ${\sf M}[x,C,\topo(\extn{\Pi}), \minU{\Pi},\maxU{\Pi},f]$ stores the maximum welfare for $\Pi$ in $G^x$ and $G^y$ is a subgraph of $G^x$, we have that any partition in $G^y$ can have total utility at most ${\sf M}[x,C,\topo(\extn{\Pi}), \minU{\Pi},\maxU{\Pi},f]$ barring the change in total utility of agents due to removing $a$ from its coalition in $\Pi$. Let $\Pi^y  = \Pi\res{\beta(y)}$ and $f'$ is as defined in the recurrence. Then,
${\sf M}[x,C,\topo(\extn{\Pi}), \minU{\Pi},\maxU{\Pi},f]$ is at least  ${\sf M}[y,C', \topo(\extn{\Pi^y}), \minU{\Pi^y},\maxU{\Pi^y},f']+ \Delta(\topo(\extn{\Pi}), \topo(\extn{\Pi^y})) $.
  Additionally, since $\Pi$ maximizes welfare in $G^x$, the right hand side of the equation is maximum when the function $\maxU{\Pi^y}$ is considered.
 Hence, we proved the formula.

Similarly for Forget and Join node it can be check that  the modification of $\minU{\Pi}, \maxU{\Pi}$, and $f$ are done according to the change in $\Pi$ and $\topo(\extn{\Pi})$. Then the proof for welfare maximization for Forget and Join node is the same as in Lemma~\ref{lem:welfare+coalDPcorrectness}.
This concludes the description and the proof of correctness of the recursive formulas for computing the values of the table ${\sf M}$.

Therefore it remains to prove that each entry of {\sf M} either stores $-\infty$ or satisfy \eqref{it:nashP1}-\eqref{it:nashPstable}.  We check these conditions in the validity check. 
Condition \eqref{it:nashP1}, \eqref{it:nashP2}, and \eqref{it:nashP3} holds true due to the valdity checks \eqref{it:nashV1}, \eqref{it:nashV2}, and \eqref{it:nashV3}, respectively. Condition \eqref{it:nashP4} holds since in our computation when an agent (Introduce node) or agents (Join node) are added to a coalition, we check if $f(C_a) =0$ for the coalition $C_a$ that the new agent $a$ is joining in both Introduce and Join node. Finally, \eqref{it:nashPstable} holds due to conditions \eqref{it:deviate} and \eqref{it:nashVstable}. The function $\maxU{\Pi}$ stores the maximum utility an agent can receive from any complete coalitions in $G^x$. We check that whenever an agent $a$ joins the coalition $\Pi_a$, no agent $b$ in the coalition $\Pi_a$ wants to deviate from $\Pi_a$ by checking their maximum possible utility stored in $\maxU{\Pi}(b)$ is at least the current utility of $b$. Moreover, we check that no agent $b'$ in $G^x$ want to join $\Pi_a$. It is sufficient to check this condition for the agents that are neighbor to some agent in $\Pi_a$ since each coalition must be connected. In $\extn{\Pi}$ we have stored a neighbor for each agent in $b \in\Pi_a$ denoted by $F(b)$ and $\minU{\Pi}(b)$ stores the utility of $F(b)$ that has minimum utility. Hence, if there exists an agent $b'$ in the neighborhood of $b$ such that $b'$ wants to deviate to $\Pi_a$, then we  identify such an agent by checking if the $\minU{\Pi}(b)$ is at least the utility of $F(b)$ when deviates to $\Pi_a$. We can calculate the utility of $F(b)$ after deviation since we know the topology $\extn{\Pi}$. Therefore, no agent wants to deviate to or from $\Pi_a$. Hence, there is no Nash deviation in $G^x$.
This completes the proof of correctness of the DP. 

For running time of the DP, observe that each validity check can be done in polynomial time. The table {\sf M} has  $(\tw \cdot 2^{\Oh{\sz^2}})^{\Oh{\tw}}$ entries and each entry can be computed in time  $(\tw \cdot 2^{\Oh{\sz^2}})^{\Oh{\tw}}$. Hence, we prove the theorem.

\end{proof}
\fi

From \Cref{lem:degen_coal_size} it follows that  if $\val(2) < 0$ and $\tw(G)$ is bounded, then the maximum coalition size  of an welfare maximizing outcome is bounded. Hence, using Theorem~\ref{thm:tw+coal-sz}  we get the following.
\begin{corollary}
	\DGNS, \DGIR, and \DGWF are fixed-parameter tractable parameterized by the treewidth $\tw(G)$ if $\val(2) < 0$.
\end{corollary}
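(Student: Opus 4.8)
The plan is to combine the coalition-size bound of Lemma~\ref{lem:degen_coal_size} with the fixed-parameter algorithm of Theorem~\ref{thm:tw+coal-sz}. First I would dispose of the trivial case: if $\val(1) \le 0$ then the all-singletons outcome is simultaneously welfare-optimal, individually rational, and Nash stable, so each of the three problems is solved in polynomial time. Hence I may assume $\maxval = \val(1) > 0$, in which case the quantity $\sz := 2(\maxval+1)\cdot\tw(G) + 1$ is bounded by a function of $\tw(G)$ alone, since $\val$, and thus $\maxval$, is fixed.

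The crux is to show that for each of the three problems it suffices to consider outcomes in which every coalition has size at most $\sz$. For \DGWF I would argue that one may assume this without loss of generality via a splitting argument: starting from any outcome of welfare at least $b$, repeatedly replace every coalition $C$ with $|C| > \sz$ by the $|C|$ corresponding singleton coalitions. By Lemma~\ref{lem:degen_coal_size} such a $C$ satisfies $\sum_{i\in C}\util(i,C) < 0$, so this operation raises the contribution of the agents of $C$ to $0$ while leaving the utilities of all other agents untouched---intra-coalitional distances $\dist_{\Pi_i}$ depend only on $G[\Pi_i]$---hence the welfare does not decrease. The result is a bounded-coalition outcome of welfare at least $b$, so $(G,b)$ is a yes-instance of \DGWF if and only if it admits such an outcome. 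For \DGIR the bound is immediate: an individually rational outcome has $\util(i,C) \ge 0$ for every agent, so $\sum_{i\in C}\util(i,C) \ge 0$ for every coalition $C$, and Lemma~\ref{lem:degen_coal_size} forces $|C| \le \sz$. For \DGNS I would first note that Nash stability implies individual rationality---taking the empty coalition as the deviation target, any agent $i$ with $\util(i,\Pi_i) < 0$ would strictly prefer the singleton $\{i\}$, which has utility $0$---so the same size bound applies to every Nash stable outcome.

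With this bound established I would invoke Theorem~\ref{thm:tw+coal-sz}, which already gives fixed-parameter tractability of all three problems with respect to $\tw(G)$ plus the maximum coalition size. Feeding the explicit value $\sz$ as the coalition-size parameter and using $\sz = \Oh{\tw(G)}$, the combined parameter is bounded by a function of $\tw(G)$, so the running time collapses to $f(\tw(G))\cdot n^{\Oh{1}}$, as desired.

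I do not anticipate a serious obstacle; the one point requiring care is the \DGWF splitting argument, where one must verify that dismantling an oversized coalition is genuinely a (weak) local improvement---this is exactly where the locality of $\dist_{\Pi_i}$ and the negativity guaranteed by Lemma~\ref{lem:degen_coal_size} (which in turn relies on the hypothesis $\val(2) < 0$) are used.
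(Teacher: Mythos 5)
Your proposal is correct and follows essentially the same route as the paper: the paper's proof is exactly the observation that Lemma~\ref{lem:degen_coal_size} bounds the maximum coalition size of any welfare-maximizing (and hence also any individually rational or Nash stable) outcome by $2(\maxval+1)\cdot\tw(G)+1$, after which Theorem~\ref{thm:tw+coal-sz} applies with the combined parameter bounded in terms of $\tw(G)$ alone. The additional details you supply (the singleton-splitting argument for \DGWF, the observation that Nash stability implies individual rationality, and the degenerate case $\val(1)\le 0$, which the paper already dispenses with in the preliminaries) are all sound elaborations of what the paper leaves implicit.
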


Turning back to general scoring vectors, we recall that \Cref{lem:maxdeg_coal_size} provided a bound on the size of the coalitions in a welfare-optimal outcome in terms of the maximum degree $\Delta(G)$ of the network $G$. Applying Theorem~\ref{thm:tw+coal-sz} again yields:

\begin{corollary}
	\DGNS, \DGIR, and \DGWF are fixed-parameter tractable parameterized by the treewidth $\tw(G)$ and the maximum degree $\Delta(G)$ of the social network.
\end{corollary}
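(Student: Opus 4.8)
The plan is to derive this corollary by combining the structural bound of \Cref{lem:maxdeg_coal_size} with the algorithmic result of \Cref{thm:tw+coal-sz}. Since the scoring vector $\val$ is fixed, the quantities $\maxval$ and $\cdiam$ are constants, so the threshold $B := (\maxval+1)\cdot \Delta(G)\cdot(\Delta(G)-1)^{\cdiam-2}$ appearing in \Cref{lem:maxdeg_coal_size} is bounded by a function of $\Delta(G)$ alone. My first step is to argue that, for each of the three problems, it suffices to search among outcomes in which every coalition has size at most $B$; once this is established, the corollary follows by invoking \Cref{thm:tw+coal-sz} with the maximum coalition size bounded by $B$, which yields a running time of the form $f(\tw(G),\Delta(G))\cdot n^{\Oh{1}}$.

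For \DGIR and \DGNS this restriction comes essentially for free. By \Cref{lem:maxdeg_coal_size}, any coalition $C$ with $|C|>B$ satisfies $\util(i,C)<0$ for every $i\in C$, so such a coalition cannot occur in an individually rational outcome at all. Moreover, a Nash stable outcome is necessarily individually rational: an agent $i$ with $\util(i,\Pi_i)<0$ would admit an NS-deviation to the empty coalition, obtaining utility $0>\util(i,\Pi_i)$. Hence oversized coalitions cannot occur in Nash stable outcomes either, and every feasible solution of \DGIR or \DGNS already has all coalitions of size at most $B$.

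For \DGWF the argument needs one extra exchange step: I would show that any outcome can be transformed, without decreasing its social welfare, into one in which every coalition has size at most $B$. If an outcome contains a coalition $C$ with $|C|>B$, then $\sum_{i\in C}\util(i,C)<0$ by \Cref{lem:maxdeg_coal_size}, whereas splitting $C$ into singletons makes each of these agents contribute $0$ and leaves the distances---and therefore the utilities---within every other coalition unchanged. This strictly increases the social welfare, so a welfare-maximizing outcome never contains an oversized coalition; consequently restricting the search to coalitions of size at most $B$ preserves the optimum and hence the answer to \DGWF.

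The remaining work is purely bookkeeping: substituting the bound $\sz\leq B$ into the running time of \Cref{thm:tw+coal-sz} gives fixed-parameter tractability in $\tw(G)+\Delta(G)$. I do not expect a genuine obstacle here; the only point requiring mild care is the \DGWF exchange argument, where one must confirm that splitting an oversized coalition leaves the contributions of all untouched coalitions intact---which holds because the subgraphs induced on the other coalitions are unaffected by the operation.
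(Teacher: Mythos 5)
Your proposal is correct and follows exactly the route the paper intends: the paper states this corollary as an immediate consequence of \Cref{lem:maxdeg_coal_size} (bounding coalition size in terms of $\Delta(G)$, with $\maxval$ and $\cdiam$ constant since $\val$ is fixed) combined with \Cref{thm:tw+coal-sz}. The details you fill in---that oversized coalitions cannot occur in IR or NS outcomes, and that for pure welfare maximization an oversized coalition can be split into singletons without decreasing welfare---are precisely the justification the paper leaves implicit.
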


As our final contribution, we provide fixed-parameter algorithms for computing welfare-optimal outcomes that can also deal with networks containing high-degree agents. To do so, we exploit a different structural parameter than the treewidth---notably the vertex cover number of $G$ ($\vc(G)$). We note that while the vertex cover number is a significantly more ``restrictive'' graph parameter than treewidth, it has found numerous applications in the design of fixed-parameter algorithms, including for other types of coalition games~\cite{BiloFMM18,BodlaenderHJOOZ20,HanakaL22}.

\iflong
\begin{theorem}
\fi
\ifshort
\begin{theorem}[$\star$]
\fi
\label{thm:WfIsNsFPTwrtVC}
	\DGNS, \DGIR, and \DGWF are fixed-parameter tractable parameterized by the vertex cover number $\vc(G)$ of the social network.
\end{theorem}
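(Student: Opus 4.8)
The plan is to reduce each of the three problems to a bounded number of instances of \textsc{Integer Quadratic Programming} and invoke Proposition~\ref{prop:IQPisFPT}. Fix a minimum vertex cover $U$ of $G$ with $|U|=k=\vc(G)$ (computable in \FPT\ time), and let $I=N\setminus U$ be the remaining independent set. I classify every agent of $I$ according to its neighbourhood in $U$: there are at most $2^k$ \emph{types}, where type $t\subseteq U$ collects the $n_t$ agents whose neighbourhood is exactly $t$. The crucial observation is that two agents of the same type are non-adjacent false twins, hence interchangeable; consequently a coalition is determined, as far as distances and welfare are concerned, by (i) the set $S\subseteq U$ of cover vertices it contains and (ii) the multiset of types it contains, recorded simply by the number of agents of each type. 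Moreover, since $I$ is independent, every coalition of size at least two must contain a cover vertex, so any outcome has at most $k$ non-singleton coalitions.

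First I would branch over the coarse shape of the solution. I guess the partition of $U$ into the cover-parts $S_1,\dots,S_m$ ($m\le k$) of the non-singleton coalitions -- at most $B_k$ (the $k$-th Bell number) choices -- and, for each coalition $j$, the set $P_j$ of types that are \emph{present} in it (i.e.\ occur with multiplicity at least one), discarding any guess in which a present type $t$ has $t\cap S_j=\emptyset$. Once $S_j$ and $P_j$ are fixed, the graph $G[C_j]$ is determined up to the anonymous multiplicities, so every pairwise distance inside the coalition is a fixed constant: distances among cover vertices are shortest paths in the auxiliary graph on $S_j$ obtained by adding a length-$2$ link for each pair of cover vertices covered by a present type, and the $U$--$I$ and $I$--$I$ distances follow. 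I discard any guess producing a distance exceeding $\cdiam$ (where $\val=-\infty$), so that all relevant scores are finite. Since $\val$ is fixed, all these distances and the scores $\val(\cdot)$ they induce are constants bounded independently of the input, and there are at most $B_k\cdot 2^{k2^k}$ guesses, a function of $k$ alone.

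For each guess I set up an IQP with one nonnegative integer variable $x_{j,t}$ for every $j\in[m]$ and $t\in P_j$ (the number of type-$t$ agents placed in coalition $j$) together with a variable $y_t$ counting the type-$t$ agents left as singletons; this gives dimension $d=\Oh{k2^k}$. The assignment constraints $\sum_j x_{j,t}+y_t=n_t$ and the presence constraints $x_{j,t}\ge 1$ are linear with coefficients in $\{0,\pm1\}$, so $\|A\|_\infty=\Oh{1}$. The objective is $\SW$ scaled by $2$: writing the welfare of coalition $j$ as a sum over unordered pairs, the cover--cover pairs contribute a constant, the cover--anonymous pairs contribute terms linear in $x_{j,t}$, the same-type pairs contribute $\val(2)(x_{j,t}^2-x_{j,t})$, and the distinct-type pairs $t\neq t'$ contribute $2\val(d_{t,t'})\,x_{j,t}x_{j,t'}$; all coefficients are integers bounded by $\Oh{\max|\val(\cdot)|}=\Oh{1}$. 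Absorbing the linear and constant terms into the quadratic form via an auxiliary variable fixed to $1$ keeps $\|Q\|_\infty=\Oh{1}$, so Proposition~\ref{prop:IQPisFPT} solves each program in \FPT\ time in $d+\|A\|_\infty+\|Q\|_\infty$, hence in $k$; this settles \DGWF.

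To handle \DGIR\ and \DGNS\ I add further linear constraints, exploiting that all agents of a fixed type in a fixed coalition have identical utility, itself a linear function of the count variables. Individual rationality becomes one inequality $\util(\cdot)\ge 0$ per cover vertex and per present (coalition, type) pair. For Nash stability I additionally forbid every profitable deviation: for an agent currently in coalition $j$, moving it to a coalition $j'$ yields a utility that -- again because the target's structure is fixed by the guess -- is a linear function of the variables $x_{j',\cdot}$, so each non-deviation condition $\util(i,C_{j'}\cup\{i\})\le\util(i,C_j)$ is a single linear inequality, while deviations to a fresh singleton coincide with the IR constraints. Because the presence of leftover type-$t$ singletons ($y_t\ge1$) conditions which no-deviation constraints must be enforced, I further branch over which types retain singletons (another $2^{2^k}$ guesses). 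The main obstacle is precisely this step: arranging the guesses (cover-partition, present types, and singleton types) so that every utility and welfare expression collapses to a fixed constant or a clean integer-coefficient polynomial in the counts, and so that stability -- an inherently conditional, cross-coalition property -- is captured exactly by unconditional linear inequalities whose coefficients remain bounded by a function of $k$.
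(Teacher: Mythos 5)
Your proposal is correct and follows essentially the same route as the paper: classify the agents outside a minimum vertex cover $U$ into at most $2^k$ neighbourhood types, branch over the partition of $U$ and over which types are present in each coalition (which fixes all intra-coalitional distances), and then solve one \textsc{Integer Quadratic Programming} instance per branch whose variables count the agents of each type in each coalition, with the welfare as the quadratic objective and individual rationality/Nash stability as linear constraints, invoking Proposition~\ref{prop:IQPisFPT}. The only notable difference is your explicit treatment of singleton agents via the variables $y_t$ and the extra branching over which types retain singletons---a refinement the paper's own write-up glosses over (its constraint $\sum_{C}\sum_{W\in C^+}x^C_W=n_W$ forces every agent outside $U$ into a cover-containing coalition), so your version is, if anything, more careful on that point.
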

\ifshort
\begin{proof}[Proof Sketch]
Let $k = \vc(G)$ and let $U$ be a vertex cover for~$G$ of of size~$k$.
	Observe that in each solution there are at most $k$ non-singleton coalitions, since~$G$ has a vertex cover of size~$k$ and each coalition must be connected.
	Furthermore, the vertices of $G - U$ can be partitioned into at most $2^k$ groups according to their neighborhood in the set~$U$.
	That is, there are $n_W$ vertices in $G - U$ such that their neighborhood is $W$ for some $W \subseteq U$; denote this set of vertices~$I_W$.
	
We performing exhaustive branching to determine certain information about the structure of the coalitions in a solution---notably:
	\begin{enumerate}
		\item
			which vertices of $U$ belong to each coalition (i.e., we partition the set $U$); note that there are at most $k^k$ such partitions, and
		\item
			if there is at least one agent of $I_W$ in the coalition or not (for each $W$ a subset of the guessed subset of~$U$ for the coalition); note that there are at most $(2^{2^k})^k$ such assignments of these sets to the coalitions.
	\end{enumerate}
	We branch over all possible admissible options of the coalitional structure described above possessed by a hypothetical solution. The total number of branches is upper-bounded by a function of the parameter value~$k$ and thus for the problems to be in \FPT it suffices to show that for each branch we can find a solution (if it exists) by a fixed-parameter subprocedure.
	To conclude the proof, we show that a welfare-maximum outcome (which furthermore satisfies the imposed stability constraints) with a given coalitional structure can be computed by modeling this as an Integer Quadratic Program where $d+\|A\|_{\infty}+ \|Q\|_{\infty}$ are all upper-bounded by a function of $k$, allowing us to invoke Proposition~\ref{prop:IQPisFPT}.
	
	The (integer) variables of the model are $x^C_W$, which express the number of vertices from the set $I_W$ in the coalition with $C \subseteq U$; thus, we have $x^C_W \in \mathbb{Z}$ and $x^C_W \ge 1$.
	Let $\mathcal{C}$ be the considered partitioning of the vertex cover~$U$.
	We use $C \in \mathcal{C}$ for the set $C \subseteq U$ in the coalition and $C^+$ for the set $C$ and the guessed groups having at least one agent in the coalition.
	We require that the vertices of $G-U$ are also partitioned in the solution, i.e.,
	\begin{equation}\label{eq:WfIsNsFPTwrtVC:IQP:partition}
		\sum_{C \in \mathcal{C}} \sum_{W \in C^+} x^C_W = n_W \qquad \forall W \subseteq U.
	\end{equation}
	The quadratic objective expresses the welfare of the coalitions in the solution while the linear constraints ensure the stability of the outcome; for the latter, we rely on the fact that it is sufficient to verify the stability for a single agent from the group~$I_W$ in each coalition.
\end{proof}
\fi

\iflong
\begin{proof}
	At its core, the algorithm combines a branching procedure with an IQP formulation.
		Let $k = \vc(G)$ and let $U$ be a vertex cover for~$G$ of of size~$k$.
	Note that in each solution there are at most $k$ non-singleton coalitions, since~$G$ has a vertex cover of size~$k$ and each coalition must be connected.
	Furthermore, the vertices of $G - U$ can be partitioned into at most $2^k$ groups according to their neighborhood in the set~$U$.
	That is, there are $n_W$ vertices in $G - U$ such that their neighborhood is $W$ for some $W \subseteq U$; denote this set of vertices~$I_W$.

	As for the coalitional structure we guess:
	\begin{enumerate}
		\item
			which vertices of $U$ belong to each coalition (i.e., we partition the set $U$); note that there are at most $k^k$ such partitions, and
		\item
			if there is at least one agent of $I_W$ in the coalition or not (for each $W$ a subset of the guessed subset of~$U$ for the coalition); note that there are at most $(2^{2^k})^k$ such assignments of these sets to the coalitions.
	\end{enumerate}
	We try all possible admissible guesses of the coalitional structure of the solution.
	All in all the number of guesses is a function of the parameter value~$k$ and thus for the problems to be in \FPT it suffices to show that for a fixed guess we can find a solution (if it exists) in \FPT-time.
	For each guess we find a solution that maximizes the welfare for such a structure subject to the desired stability using an IQP model.
	The (integer) variables of the model are $x^C_W$ expressing the number of vertices from the set $I_W$ in the coalition with $C \subseteq U$; thus, we have $x^C_W \in \mathbb{Z}$ and $x^C_W \ge 1$.
	Let $\mathcal{C}$ be the guessed partition of the vertex cover~$U$.
	We use $C \in \mathcal{C}$ for the set $C \subseteq U$ in the coalition and $C^+$ for the set $C$ and the guessed groups having at least one agent in the coalition.
	We require that the vertices of $G-U$ are also partitioned in the solution, i.e.,
	\begin{equation}\label{eq:WfIsNsFPTwrtVC:IQP:partition}
		\sum_{C \in \mathcal{C}} \sum_{W \in C^+} x^C_W = n_W \qquad \forall W \subseteq U.
	\end{equation}
	The quadratic objective expresses the welfare of the coalitions in the solution while the linear constraints ensure the stability.
	Here, for the stability we rely on the fact that it is sufficient to verify the stability for a single agent from the group~$I_W$ in the guessed coalition.

	As for the IQP model we first observe that all the distances in~$G$ are at most $2k$.
	Furthermore, for each pair of vertices in a coalition their distance can be determined already from the guess of the coalitional structure; denote this distance $\dist_{C^+}(u,v)$.
	We extend the notion of distance to sets $I_W$ and $I_{\widetilde{W}}$ with $W, \widetilde{W} \subseteq C^+$, that is, $\dist_{C^+}(W, \widetilde{W}) = \dist_{C^+}(u,v)$ for some $u \in I_W$ and $v \in I_{\widetilde{W}}$.
	Now, the objective is
	\[
		\max \sum_{C \in \mathcal{C}} \sum_{W, \widetilde{W} \in C^+} x^C_W \cdot \val\mathopen{}\left(\dist_{C^+}(W, \widetilde{W})\right) \cdot x^C_{\widetilde{W}}
	\]
	which is clearly a quadratic form in the variables of the IQP.
	This finishes the description of the IQP for \DGWF{} as we can check if at least one guess yields a solution with welfare at least the desired bound.
	The IQP can be solved in \FPT-time with respect to~$k$ as all the numbers in the matrices $A$ and $Q$ are constants and the dimension is at most~$k \cdot 2^k$.

	\noindent\textbf{Stability.}
	The most important notion for including the stability of the solution into our IQP is the utility of an agent.
	Note that we need to determine utilities only for agents $v \in I_W$ with respect to a coalition $C^+$ with $W \in C^+$.
	The utility of such an agent $v$ is then
	\(
		\operatorname{util}_{C}(v)
		=
		\val(2) \cdot (x^C_W - 1) + \sum_{\widetilde{W} \in C^+, \widetilde{W} \neq W} \val\mathopen{}\left(\dist_{C^+}(W, \widetilde{W})\right) x^C_{\widetilde{W}} + \sum_{u \in C} \val\mathopen{}\left(\dist_{C^+}(W, u)\right) \,.
	\)
	Here, the first summand is the utility~$v$ receives from the other vertices in its group (note that these are always in distance exactly~\(2\)).
	The second summand is the utility~$v$ receives from other vertices of $G-U$ assigned to its coalition and the last summand is the utility of~$v$ gained from the vertices from the set~$U$ in its coalition.
	Note that $\operatorname{util}_{C}(v)$ is a linear function in the integer variables~$x$.
	To that end, requiring
	\begin{equation}
		\operatorname{util}_{C}(v) \ge 0 \quad \forall C \in \mathcal{C}, \forall\, W \in C^+, \text{for some } v \in I_W
	\end{equation}
	ensures the individual rationality of the solution.
	
	For the Nash stability we note that the utility of $v$ with respect to a coalition $C$ is either negative (if $C \cap N(v) = \emptyset$) or it can be defined as above (as if $v$ belongs to $C^+$).
	To that end, we (further) require
	\begin{equation}
		\operatorname{util}_{C}(v) \ge \operatorname{util}_{\widetilde{C}}(v) \quad \forall C, \widetilde{C} \in \mathcal{C}, \forall\, W \in C^+\,,
	\end{equation}
	where $v$ is some agent in $I_W$.

	Note that the models for individual rationality and Nash stability only add constraints to the model.
	These new constraints only rely on constant coefficients (the values of the $\val$ vector).
	Thus one can solve the IQPs in \FPT-time with respect to the parameter value~$k$.

	Now, it is not hard to see that if the given instance admits a stable solution, then if we set the values of variables in our IQP according to numbers of vertices from $I_W$ in that coalition, this forms a feasible solution.
	Furthermore, if any of the above IQPs admits a solution, then there is a (stable) solution of the original instance with the social welfare determined by the objective function.
	Therefore, the presented models are valid.
\end{proof}
\fi

\section{Conclusions and Future Research Directions}

In this work, we studied social distance games through the lens of an adaptable, non-normalized scoring vector which can capture the positive as well as negative dynamics of social interactions within coalitions.
The main focus of this work was on welfare maximization, possibly in combination with individual-based stability notions---individual rationality and Nash stability.
It is not surprising that these problems are intractable for general networks; we complement our model with algorithms that work well in tree-like environments.

Our work opens up a number of avenues for future research.
One can consider other notions of individual-based stability such as individual stability~\cite{BrandtCELP16,GanianHKSS22}, or various notions of group-based stability such as core stability~\cite{BranzeiL2011,BrandtCELP16,OhtaBISY17}.
Furthermore, our results do not settle the complexity of finding stable solutions (without simultaneous welfare maximization).
Therefore, it remains open if one can find a Nash stable solution for a specific scoring vector.
Also, a more complex open problem is to characterize those scoring vectors that guarantee the existence of a Nash (or individually) stable solution.

Finally, we remark that the proposed score-based \DGs\ model can be generalized further, e.g., by allowing for a broader definition of the scoring vectors. For instance, one could consider situations where the presence of an agent that is ``far away'' does not immediately set the utility of other agents in the coalition to $-\infty$. One way to model these settings would be to consider ``\emph{open}'' scoring vectors, for which we set $\val(a)=\val(\cdiam)$ for all $a>\cdiam$---meaning that distances over $\cdiam$ are all treated uniformly but not necessarily as unacceptable. 

Notice that if $\val(\cdiam) \geq 0$ for an open scoring vector $\val$, the grand coalition is always a social-welfare maximizing outcome for all three problems---hence here it is natural to focus on choices of $\val$ with at least one negative entry. We note that all of our fixed-parameter algorithms immediately carry over to this setting for arbitrary choices of open scoring vectors $\val$. The situation becomes more interesting when considering the small-world property: while the diameter of every welfare-maximizing outcome can be bounded in the case of Nash stable or individually rational coalitions (as we prove in our final Theorem~\ref{thm:diam-nashir} below), whether the same holds in the case of merely trying to maximize social welfare is open and seems to be a non-trivial question. Because of this, Theorem~\ref{thm:twstable} can also be extended to the setting with open scoring vectors, but it is non-obvious whether Theorem~\ref{thm:tw} can.

\ifshort
\begin{theorem}[$\star$]
\fi
\iflong
\begin{theorem}
\fi
\label{thm:diam-nashir}
	Let $\val=(\maxval,\dots,\cdiam)$ be an arbitrary open scoring vector and $G$ be a social network. Every outcome $\Pi$ containing a coalition $C\in\Pi$ with diameter exceeding \(\ell = 2\maxval \cdiam\) can be neither Nash-stable nor individually rational.
\end{theorem}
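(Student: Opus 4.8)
The plan is to reduce the whole statement to a single, purely structural claim: it suffices to exhibit one agent $i\in C$ with $\util(i,C)<0$. Indeed, such an agent admits an IR-deviation to $\{i\}$, so $\Pi$ fails individual rationality, and since $\util(i,\{i\})=0>\util(i,C)$ this is simultaneously an NS-deviation with the empty coalition, so $\Pi$ fails Nash-stability as well; thus both non-properties follow from one negative-utility agent and I never argue about Nash-stability separately. Equivalently, I will prove the contrapositive form: if every agent of a connected coalition $C$ has non-negative utility, then the diameter of $C$ is at most $2\maxval\cdiam$. (If $G[C]$ is disconnected some utility is $-\infty$ and the claim is immediate.) Throughout I rely on the two standing facts that $\maxval=\val(1)\ge 1$ and, since we restrict to open vectors possessing a negative entry, that $\val(\cdiam)\le -1$.

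The first technical ingredient is a ``large ball'' estimate. Fix $x\in C$ with $\util(x,C)\ge 0$ and split $C\setminus\{x\}$ into the agents at distance at most $\cdiam-1$ from $x$ (say $\alpha$ of them) and those at distance at least $\cdiam$ (say $r=|C|-1-\alpha$ of them). Each agent of the first kind contributes at most $\maxval$ to $\util(x,C)$, while each agent of the second kind contributes exactly $\val(\cdiam)\le -1$, precisely because $\val$ is open. Non-negativity then yields $0\le \maxval\,\alpha - r$, hence $r\le \maxval\,\alpha$ and so $\alpha\ge \tfrac{|C|-1}{\maxval+1}$. Consequently the open ball $B(x)=\{w\in C : \dist_C(x,w)\le \cdiam-1\}$ has at least $\tfrac{|C|+\maxval}{\maxval+1}$ vertices, a fixed fraction of the entire coalition.

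I then convert this into a diameter bound by a packing argument along a geodesic. Let $u,v\in C$ realize the diameter $D=\dist_C(u,v)$ and let $p_0,\dots,p_D$ be a shortest $u$–$v$ path in $G[C]$; as every subpath of a shortest path is shortest, $\dist_C(p_i,p_j)=|i-j|$. Assume for contradiction that $D\ge 2\maxval\cdiam$ and consider the $\maxval+1$ centers $p_0,p_{2\cdiam},p_{4\cdiam},\dots,p_{2\maxval\cdiam}$, which are pairwise at distance at least $2\cdiam$. For $i\neq j$ the balls $B(p_{2i\cdiam})$ and $B(p_{2j\cdiam})$ are disjoint, since a shared agent would force $\dist_C(p_{2i\cdiam},p_{2j\cdiam})\le 2(\cdiam-1)<2\cdiam$. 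These $\maxval+1$ pairwise disjoint balls each have size at least $\tfrac{|C|+\maxval}{\maxval+1}$, so together they contain at least $(\maxval+1)\cdot\tfrac{|C|+\maxval}{\maxval+1}=|C|+\maxval>|C|$ agents, which is impossible inside $C$. Hence $D<2\maxval\cdiam$, giving the required contradiction with any coalition whose diameter exceeds $2\maxval\cdiam$.

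The main obstacle to anticipate is that the network has unbounded degree, so one cannot bound the utility of any single agent directly: even an endpoint of a long geodesic may have arbitrarily many close neighbors contributing positively, which defeats the naive ``the far end is unhappy'' attempt. The packing argument is exactly what sidesteps this, trading the absent degree bound for the global constraint that disjoint balls cannot overcrowd $C$. The only other delicate point is the hypothesis $\val(\cdiam)<0$: without a negative entry distant agents are not penalized, the grand coalition can be both stable and welfare-optimal at large diameter, and the statement simply fails — which is why the surrounding discussion restricts attention to open vectors with at least one negative entry.
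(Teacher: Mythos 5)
Your proof is correct, and it shares the paper's high-level skeleton: place $\maxval+1$ markers spaced $2\cdiam$ apart along a longest geodesic of $C$, use the fact that only agents within distance roughly $\cdiam$ can contribute positively while every farther agent contributes at most $-1$, and reach a counting contradiction. The execution, however, is genuinely different. The paper works with the $\maxval+1$ \emph{edge cuts} $F_{2\cdiam j}$ between BFS levels from one endpoint of the geodesic, declares agents within distance $\cdiam$ of cut-edge endpoints ``close'' to a cut, picks by pigeonhole a cut with at most $|C|/(\maxval+1)$ close agents, and argues that an agent close to that cut has utility at most $\maxval(z-1)-\maxval z<0$. You instead prove a per-agent ``large ball'' lemma (non-negative utility forces at least $(|C|+\maxval)/(\maxval+1)$ agents within distance $\cdiam-1$) and run everything in contrapositive, packing $\maxval+1$ disjoint balls of radius $\cdiam-1$ around the markers. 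This is not just equivalent bookkeeping: your version is airtight precisely where the paper's sketch is loose. With radius $\cdiam-1$ and spacing $2\cdiam$, disjointness is strict because $2(\cdiam-1)<2\cdiam$; by contrast, the paper's close sets of \emph{consecutive} cuts need not be disjoint (already on a path, the vertex at distance $\cdiam$ from the start lies within distance $\cdiam$ of endpoints of both $F_0$ and $F_{2\cdiam}$), so its pigeonhole step requires repair, and its claim that a close agent receives negative utility from every far agent is only guaranteed when that agent is itself a cut-edge endpoint. Both arguments rest on the same implicit hypotheses ($\maxval\ge 1$ and $\val(\cdiam)\le-1$ for an open vector with a negative entry), which you state explicitly.
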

\ifshort
\begin{proof}[Proof Sketch]
Consider a shortest path $P$ in $C$ whose length exceeds $\ell$. We identify a set of edge cuts along $P$ and show that at least one such cut must be near an agent whose utility in $C$ is negative, due to the presence of a large number of agents that must be distant from the chosen edge cut. 
\end{proof}
\fi

\iflong
\begin{proof}
	Assume for contradiction that there is a coalition \(C\) with diameter at least \(\ell\) and consider \(s\) and \(t\) at maximum distance in \(C\), and for \(i \in [\ell] \cup \{0\}\) \(B_i\) as the set of agents at distance at most \(i\) from \(s\) in \(C\).
	We consider all \(i\) that are multiples of \(2\cdiam\), and the edge cuts between \(B_i\) and \(B_\ell \setminus B_i\) at these \(i\).
	This defines \(\frac{\ell}{2\cdiam} = \maxval + 1\)-many cuts which we denote as \(F_{2\cdiam j}\) for \(j \in [\maxval] \cup \{0\}\).
	
	For an edge cut in \(C\), we say agents at distance more than \(\cdiam\) from the endpoints of cut edges are \emph{far} from the cut and agents at distance at most \(\cdiam\) from the endpoints of cut edges are \emph{close} to the cut.
	This means that the sets of agents close to the edge cuts \(F_{2\cdiam j}\) as described above are pairwise disjoint.
	Hence any such cut \(F_{2\cdiam j^*}\) minimizing the number of agents close to it has at most \(\frac{n}{\maxval + 1}\) agents close to it, and correspondingly at least \(\maxval\) as many agents far from it.
	
	Now consider an agent that is close to \(F_{2\cdiam j^*}\).
	This agent values \(C\) negatively because it receives negative utility for each agent far from \(F_{2 \cdiam j^*}\), and at most \(\maxval\) positive utility for each agent close to \(F_{2 \cdiam j^*}\) other than itself, i.e. its utility is at most \(-\maxval \cdot z + \maxval (z - 1) < 0\) where \(z\) is the number of agents close to \(F_{2 \cdiam j^*}\).
	Hence the agent would prefer to be in a singleton coalition alone, which contradicts individual rationality as well as Nash stability.
	\end{proof}
\fi

\paragraph*{Acknowledgements.} 
All authors are grateful for support from the OeAD bilateral Czech-Austrian WTZ-funding Programme (Projects No. CZ 05/2021 and 8J21AT021). Robert Ganian acknowledges support from the Austrian Science Foundation (FWF, project Y1329). Thekla Hamm also acknowledges support from FWF, project J4651-N. Dušan Knop, Šimon Schierreich, and Ondřej Suchý acknowledge the support of the Czech Science Foundation Grant No. 22-19557S. Šimon Schierreich was additionally supported by the Grant Agency of the Czech Technical University in Prague, grant \mbox{No.~SGS23/205/OHK3/3T/18}.

\bibliographystyle{named}
\bibliography{references}

\begin{thebibliography}{}

\bibitem[\protect\citeauthoryear{Balliu \bgroup \em et al.\egroup
  }{2017}]{BalliuFMO17}
Alkida Balliu, Michele Flammini, Giovanna Melideo, and Dennis Olivetti.
\newblock Nash stability in social distance games.
\newblock In Satinder Singh and Shaul Markovitch, editors, {\em Proceedings of
  the Thirty-First {AAAI} Conference on Artificial Intelligence, {AAAI}~'17},
  pages 342--348. {AAAI} Press, 2017.

\bibitem[\protect\citeauthoryear{Balliu \bgroup \em et al.\egroup
  }{2019}]{BalliuFMO19}
Alkida Balliu, Michele Flammini, Giovanna Melideo, and Dennis Olivetti.
\newblock On non-cooperativeness in social distance games.
\newblock {\em Journal of Artificial Intelligence Research}, 66:625--653, 2019.

\bibitem[\protect\citeauthoryear{Balliu \bgroup \em et al.\egroup
  }{2022}]{BalliuFMO22}
Alkida Balliu, Michele Flammini, Giovanna Melideo, and Dennis Olivetti.
\newblock On pareto optimality in social distance games.
\newblock {\em Artificial Intelligence}, 312:103768, 2022.

\bibitem[\protect\citeauthoryear{Barrot and Yokoo}{2019}]{BarrotY19}
Nathana{\"{e}}l Barrot and Makoto Yokoo.
\newblock Stable and envy-free partitions in hedonic games.
\newblock In Sarit Kraus, editor, {\em Proceedings of the Twenty-Eighth
  International Joint Conference on Artificial Intelligence, {IJCAI}~'19},
  pages 67--73. ijcai.org, 2019.

\bibitem[\protect\citeauthoryear{Bil{\`{o}} \bgroup \em et al.\egroup
  }{2018}]{BiloFMM18}
Vittorio Bil{\`{o}}, Angelo Fanelli, Michele Flammini, Gianpiero Monaco, and
  Luca Moscardelli.
\newblock Nash stable outcomes in fractional hedonic games: Existence,
  efficiency and computation.
\newblock {\em Journal of Artificial Intelligence Research}, 62:315--371, 2018.

\bibitem[\protect\citeauthoryear{Bodlaender \bgroup \em et al.\egroup
  }{2020}]{BodlaenderHJOOZ20}
Hans~L. Bodlaender, Tesshu Hanaka, Lars Jaffke, Hirotaka Ono, Yota Otachi, and
  Tom~C. van~der Zanden.
\newblock Hedonic seat arrangement problems.
\newblock In {\em Proceedings of the 19th International Conference on
  Autonomous Agents and MultiAgent Systems, {AAMAS}~'20}, page 1777–1779,
  Richland, SC, 2020. IFAAMAS.

\bibitem[\protect\citeauthoryear{Bodlaender}{1996}]{Bodlaender96}
Hans~L. Bodlaender.
\newblock A linear-time algorithm for finding tree-decompositions of small
  treewidth.
\newblock {\em {SIAM} Journal on Computing}, 25(6):1305--1317, 1996.

\bibitem[\protect\citeauthoryear{Boehmer and Elkind}{2020a}]{BoehmerE20}
Niclas Boehmer and Edith Elkind.
\newblock Individual-based stability in hedonic diversity games.
\newblock In {\em Proceedings of the Thirty-Fourth {AAAI} Conference on
  Artificial Intelligence, {AAAI}~'20}, pages 1822--1829. {AAAI} Press, 2020.

\bibitem[\protect\citeauthoryear{Boehmer and Elkind}{2020b}]{BoehmerE20b}
Niclas Boehmer and Edith Elkind.
\newblock Stable roommate problem with diversity preferences.
\newblock In Amal El~Fallah Seghrouchni, Gita Sukthankar, Bo~An, and Neil
  Yorke{-}Smith, editors, {\em Proceedings of the 19th International Conference
  on Autonomous Agents and Multiagent Systems, {AAMAS}~'20}, pages 1780--1782.
  IFAAMAS, 2020.

\bibitem[\protect\citeauthoryear{Bouveret and Lang}{2008}]{BouveretL08}
Sylvain Bouveret and J{\'{e}}r{\^{o}}me Lang.
\newblock Efficiency and envy-freeness in fair division of indivisible goods:
  Logical representation and complexity.
\newblock {\em J. Artif. Intell. Res.}, 32:525--564, 2008.

\bibitem[\protect\citeauthoryear{Bouveret \bgroup \em et al.\egroup
  }{2016}]{BouveretCM16}
Sylvain Bouveret, Yann Chevaleyre, and Nicolas Maudet.
\newblock Fair allocation of indivisible goods.
\newblock In Felix Brandt, Vincent Conitzer, Ulle Endriss, J{\'{e}}r{\^{o}}me
  Lang, and Ariel~D. Procaccia, editors, {\em Handbook of Computational Social
  Choice}, pages 284--310. Cambridge University Press, 2016.

\bibitem[\protect\citeauthoryear{Brandt \bgroup \em et al.\egroup
  }{2016}]{BrandtCELP16}
Felix Brandt, Vincent Conitzer, Ulle Endriss, J{\'{e}}r{\^{o}}me Lang, and
  Ariel~D. Procaccia, editors.
\newblock {\em Handbook of Computational Social Choice}.
\newblock Cambridge University Press, 2016.

\bibitem[\protect\citeauthoryear{Br{\^{a}}nzei and Larson}{2011}]{BranzeiL2011}
Simina Br{\^{a}}nzei and Kate Larson.
\newblock Social distance games.
\newblock In Toby Walsh, editor, {\em Proceedings of the 22nd International
  Joint Conference on Artificial Intelligence, {IJCAI}~'11}, pages 91--96.
  {IJCAI/AAAI}, 2011.

\bibitem[\protect\citeauthoryear{Chen \bgroup \em et al.\egroup
  }{2020}]{ChenGH20}
Jiehua Chen, Robert Ganian, and Thekla Hamm.
\newblock Stable matchings with diversity constraints: Affirmative action is
  beyond {NP}.
\newblock In Christian Bessiere, editor, {\em Proceedings of the Twenty-Ninth
  International Joint Conference on Artificial Intelligence, {IJCAI}~'20},
  pages 146--152. ijcai.org, 2020.

\bibitem[\protect\citeauthoryear{Cygan \bgroup \em et al.\egroup
  }{2015}]{CyganFKLMPPS2015}
Marek Cygan, Fedor~V. Fomin, {\L{}}ukasz Kowalik, Daniel Lokshtanov,
  D{\'{a}}niel Marx, Marcin Pilipczuk, Micha\l{} Pilipczuk, and Saket Saurabh.
\newblock {\em Parameterized Algorithms}.
\newblock Springer, 2015.

\bibitem[\protect\citeauthoryear{Diestel}{2017}]{Diestel17}
Reinhard Diestel.
\newblock {\em Graph Theory}.
\newblock Graduate Texts in Mathematics. Springer, Berlin, Heidelberg, 5th
  edition, 2017.

\bibitem[\protect\citeauthoryear{Downey and Fellows}{2013}]{DowneyF13}
Rodney~G. Downey and Michael~R. Fellows.
\newblock {\em Fundamentals of Parameterized Complexity}.
\newblock Texts in Computer Science. Springer, 2013.

\bibitem[\protect\citeauthoryear{Elkind and Ismaili}{2015}]{ElkindI15}
Edith Elkind and Anisse Ismaili.
\newblock Owa-based extensions of the chamberlin-courant rule.
\newblock In Toby Walsh, editor, {\em Proceedings of the 4th International
  Conference Algorithmic Decision Theory, {ADT}~'15}, volume 9346 of {\em
  LNCS}, pages 486--502. Springer, 2015.

\bibitem[\protect\citeauthoryear{Flammini \bgroup \em et al.\egroup
  }{2020}]{FlamminiKOV2020}
Michele Flammini, Bojana Kodric, Martin Olsen, and Giovanna Varricchio.
\newblock Distance hedonic games.
\newblock In Amal El~Fallah Seghrouchni, Gita Sukthankar, Bo~An, and Neil
  Yorke{-}Smith, editors, {\em Proceedings of the 19th International Conference
  on Autonomous Agents and Multiagent Systems, {AAMAS}~'20}, pages 1846--1848.
  IFAAMAS, 2020.

\bibitem[\protect\citeauthoryear{Flammini \bgroup \em et al.\egroup
  }{2021}]{FlamminiKOV2021}
Michele Flammini, Bojana Kodric, Martin Olsen, and Giovanna Varricchio.
\newblock Distance hedonic games.
\newblock In Tom{\'{a}}s Bures, Riccardo Dondi, Johann Gamper, Giovanna
  Guerrini, Tomasz Jurdzinski, Claus Pahl, Florian Sikora, and Prudence W.~H.
  Wong, editors, {\em Proceedings of the 47th International Conference on
  Current Trends in Theory and Practice of Computer Science, {SOFSEM}~'21},
  volume 12607 of {\em LNCS}, pages 159--174. Springer, 2021.

\bibitem[\protect\citeauthoryear{Ganian and Korchemna}{2021}]{GanianK21}
Robert Ganian and Viktoriia Korchemna.
\newblock The complexity of bayesian network learning: Revisiting the
  superstructure.
\newblock In Marc'Aurelio Ranzato, Alina Beygelzimer, Yann~N. Dauphin, Percy
  Liang, and Jennifer~Wortman Vaughan, editors, {\em Proceedings of the
  Thirty-fifth Conference on Neural Information Processing Systems,
  {NeurIPS}~'21}, pages 430--442, 2021.

\bibitem[\protect\citeauthoryear{Ganian \bgroup \em et al.\egroup
  }{2022}]{GanianHKSS22}
Robert Ganian, Thekla Hamm, Dušan Knop, {\v{S}}imon Schierreich, and Ondřej
  Suchý.
\newblock Hedonic diversity games: {A} complexity picture with more than two
  colors.
\newblock In {\em Proceedings of the Thirty-Sixth {AAAI} Conference on
  Artificial Intelligence, {AAAI}~'22}, pages 5034--5042. {AAAI} Press, 2022.

\bibitem[\protect\citeauthoryear{Gavenčiak \bgroup \em et al.\egroup
  }{2022}]{GavenciakKK22}
Tomáš Gavenčiak, Martin Kouteck{\'{y}}, and Dušan Knop.
\newblock Integer programming in parameterized complexity: Five miniatures.
\newblock {\em Discrete Optimization}, 44(Part 1):100596, 2022.

\bibitem[\protect\citeauthoryear{Hanaka and Lampis}{2022}]{HanakaL22}
Tesshu Hanaka and Michael Lampis.
\newblock Hedonic games and treewidth revisited.
\newblock In Shiri Chechik, Gonzalo Navarro, Eva Rotenberg, and Grzegorz
  Herman, editors, {\em Proceedings of the 30th Annual European Symposium on
  Algorithms, {ESA}~'22}, volume 244 of {\em LIPIcs}, pages 64:1--64:16.
  Schloss Dagstuhl - Leibniz-Zentrum f{\"{u}}r Informatik, 2022.

\bibitem[\protect\citeauthoryear{Jackson}{2008}]{jackson2008social}
Matthew~O. Jackson.
\newblock {\em Social and economic networks}.
\newblock Princeton University Press, Princeton, NJ, 2008.

\bibitem[\protect\citeauthoryear{Kaklamanis \bgroup \em et al.\egroup
  }{2018}]{KaklamanisKP18}
Christos Kaklamanis, Panagiotis Kanellopoulos, and Dimitris Patouchas.
\newblock On the price of stability of social distance games.
\newblock In Xiaotie Deng, editor, {\em Proceedings of the 11th International
  Symposium Algorithmic Game Theory, {SAGT}~'18}, volume 11059 of {\em LNCS},
  pages 125--136. Springer, 2018.

\bibitem[\protect\citeauthoryear{Karakaya}{2011}]{Karakaya11}
Mehmet Karakaya.
\newblock Hedonic coalition formation games: A new stability notion.
\newblock {\em Mathematical Social Sciences}, 61(3):157--165, 2011.

\bibitem[\protect\citeauthoryear{Kloks}{1994}]{Kloks94}
Ton Kloks.
\newblock {\em Treewidth, Computations and Approximations}, volume 842 of {\em
  Lecture Notes in Computer Science}.
\newblock Springer, 1994.

\bibitem[\protect\citeauthoryear{Korhonen}{2021}]{Korhonen21}
Tuukka Korhonen.
\newblock A single-exponential time 2-approximation algorithm for treewidth.
\newblock In {\em Proceedings of the 62nd {IEEE} Annual Symposium on
  Foundations of Computer Science, {FOCS}~'21}, pages 184--192. {IEEE}, 2021.

\bibitem[\protect\citeauthoryear{Lokshtanov}{2015}]{Lokshtanov15}
Daniel Lokshtanov.
\newblock Parameterized integer quadratic programming: Variables and
  coefficients.
\newblock {\em CoRR}, abs/1511.00310, 2015.

\bibitem[\protect\citeauthoryear{Maniu \bgroup \em et al.\egroup
  }{2019}]{ManiuSJ19}
Silviu Maniu, Pierre Senellart, and Suraj Jog.
\newblock An experimental study of the treewidth of real-world graph data.
\newblock In Pablo Barcel{\'{o}} and Marco Calautti, editors, {\em Proceedings
  of the 22nd International Conference on Database Theory, {ICDT}~'19}, volume
  127 of {\em LIPIcs}, pages 12:1--12:18. Schloss Dagstuhl - Leibniz-Zentrum
  f{\"{u}}r Informatik, 2019.

\bibitem[\protect\citeauthoryear{Ohta \bgroup \em et al.\egroup
  }{2017}]{OhtaBISY17}
Kazunori Ohta, Nathana{\"{e}}l Barrot, Anisse Ismaili, Yuko Sakurai, and Makoto
  Yokoo.
\newblock Core stability in hedonic games among friends and enemies: Impact of
  neutrals.
\newblock In Carles Sierra, editor, {\em Proceedings of the Twenty-Sixth
  International Joint Conference on Artificial Intelligence, {IJCAI}~'17},
  pages 359--365. ijcai.org, 2017.

\bibitem[\protect\citeauthoryear{Ordyniak and Szeider}{2013}]{OrdyniakS13}
Sebastian Ordyniak and Stefan Szeider.
\newblock Parameterized complexity results for exact bayesian network structure
  learning.
\newblock {\em Journal of Artificial Intelligence Research}, 46:263--302, 2013.

\bibitem[\protect\citeauthoryear{Papadimitriou}{1994}]{Papadimi94}
Christos~H. Papadimitriou.
\newblock {\em Computational complexity}.
\newblock Addison-Wesley, 1994.

\bibitem[\protect\citeauthoryear{Sung and Dimitrov}{2007}]{SungD07}
Shao~Chin Sung and Dinko Dimitrov.
\newblock On myopic stability concepts for hedonic games.
\newblock {\em Theory and Decision}, 62(1):31--45, February 2007.

\bibitem[\protect\citeauthoryear{Zemmer}{2017}]{Zemmer2017}
Kevin Zemmer.
\newblock {\em Integer Polynomial Optimization in Fixed Dimension}.
\newblock Doctoral thesis, ETH Zurich, Zurich, 2017.

\end{thebibliography}

\end{document}